\newcommand{\IGNORE}[1]{}
\tikzstyle{block}=[draw opacity=0.7,line width=1.4cm]
\tikzstyle{graphnode}=[circle, draw, fill=black!20, inner sep=0pt, minimum width=6pt]
\tikzstyle{point}=[circle, draw, fill=black!30, inner sep=0pt, minimum width=1pt]
\tikzstyle{input}=[rectangle, draw, fill=black!75,inner sep=3pt, inner ysep=3pt, minimum width=4pt]
\tikzstyle{unmatched}=[graphnode,fill=black!0]
\tikzstyle{shaded}=[graphnode,fill=black!20]
\tikzstyle{matched}=[graphnode,fill=black!100]  	
\tikzstyle{matching} = [ultra thick]
\tikzset{
    >=stealth',
    pil/.style={
           ->,
           thick,
           shorten <=2pt,
           shorten >=2pt,}
}
\tikzset{->-/.style={decoration={
  markings,
  mark=at position .5 with {\arrow{>}}},postaction={decorate}}}
\newtheorem{observation}[theorem]{Observation}
\newtheorem{fact}[theorem]{Fact}
\newtheorem{myclaim}[theorem]{Claim}
\newcommand{\NP}{$\mathbf{NP}$}
\def\C   {\mathcal{C}}
\newif\ifFULL
\newcommand{\keywords}[1]{\par\addvspace\baselineskip
\noindent\keywordname\enspace\ignorespaces#1}
\newcommand\thickbar[1]{\bar{#1}}
\newcommand{\thicktilde}[1]{\mathbf{\tilde{\text{$#1$}}}}
\newcommand{\G}{\mathcal{G}}
\newcommand{\A}{\ensuremath{\mathcal{A}}\xspace}
\newcommand{\M}{\mathcal{M}}
\newcommand{\thickz}{$\mathcal{Z}$\xspace}
\newcommand{\E}{\ensuremath{\mathbb{E}}}
\newcommand{\tE}{\ensuremath{\thicktilde{E}}}
\newcommand{\trE}{\ensuremath{\thicktilde{E}_r}}
\newcommand{\tI}{\ensuremath{\thicktilde{I}}}
\newcommand{\rank}{\mathsf{rank}}
\newcommand{\Mi}{\M_i}
\newcommand{\Mni}{\M_{\thickbar{\imath}}}
\newcommand{\Greedy}{\text{\sc{Greedy}}\xspace}
\newcommand{\MkGreedy}{\text{\sc{Marking-Greedy}}\xspace}
\newcommand{\SampAlg}{\text{\sc{Samp-Alg}}\xspace}
\newcommand{\I}{I}
\newcommand{\Ni}{N_i}
\newcommand{\Nni}{N_{\thickbar{\imath}}}
\newcommand{\OPT}{\textsc{OPT}}
\newcommand{\K}{\ensuremath{\Psi}}
\newcommand{\cl}{{\mathsf{span}}}
\newtheorem{invariant}[theorem]{Invariant}
\newtheorem{updates}[theorem]{Updates}
\newcounter{note2}[section]
\begin{document}

\mainmatter  

\title{ Online Matroid Intersection:\\ Beating Half for Random
    Arrival\footnote{Supported in part by NSF awards CCF-1319811,
        CCF-1536002, and CCF-1617790}}

\titlerunning{Online Matroid Intersection}

%
%

\author{ 
    Guru Guruganesh
    \and Sahil Singla
}
	
%
\authorrunning{Guru Guruganesh, Sahil Singla}

\institute{ Computer Science Department,\\ 
    Carnegie Mellon University,\\ 
\mailsa
}

%
%

\toctitle{Online Matroid Intersection: Beating Half for Random Arrival}
\tocauthor{Guru Guruganesh, Sahil Singla}
\maketitle


\begin{abstract}
   For two matroids $\M_1$ and $\M_2$ defined on the same ground set $E$, the
   online matroid intersection problem is to design an algorithm that constructs a large common independent
   set in an online fashion.  The algorithm is presented with the ground set
   elements one-by-one in a uniformly random  order. At each step, the
   algorithm must irrevocably  decide whether to pick the element, while always maintaining a common
   independent set.  While the natural greedy algorithm---pick an element whenever
   possible---is half competitive, nothing better was previously known;
 even for the special case of online bipartite matching in the edge arrival model.
  We present the first randomized online algorithm that has a $\frac12 + \delta$ competitive ratio in   expectation, where $\delta >0$ is a constant.  
The expectation is over the random order and the  coin tosses of the algorithm. 
As a corollary, we also obtain
    the first linear time algorithm that beats half competitiveness for offline matroid intersection.
    
\IGNORE{
   We study the online matroid intersection problem, which is related to the
   well-studied online bipartite matching problem in the vertex arrival model.
   For two matroids $\M_1$ and $\M_2$ defined on the same ground set $E$, the
   problem is to design an algorithm that constructs a large common independent
   set in an online fashion.  The algorithm is presented with the ground set
   elements one-by-one in a uniformly random    order. At each step, the
   algorithm must irrevocably
   decide whether to pick the element, while always maintaining a common
   independent set.  Since the greedy algorithm --- pick the element whenever
   possible --- has a competitive ratio of half, the natural question is
   whether we can beat half. This problem generalizes online bipartite matching
   in the edge arrival model where a random edge is presented at each step; nothing
   better than half-competitiveness was previously known.

    In this paper, we present a simple randomized algorithm for online matroid
    intersection that has a $\frac12 + \delta$ competitive ratio in
    expectation, where $\delta >0$ is a constant.  The expectation is over the randomness of the input order and the
    coin tosses of the algorithm. As a corollary, we obtain
    the first algorithm that beats half-competitiveness in the bipartite matching
    setting.   We also extend our  result to intersection
    of $k$ matroids and to general graphs, cases not captured by intersection
    of two matroids.  
}
    \keywords{online algorithms; matroid intersection; randomized algorithms; competitive analysis; linear-time algorithms}
\end{abstract}


\section{Introduction}

The \emph{online matroid intersection problem} in the random arrival model (OMI)
consists of two matroids $\M_1 = (E,\mathcal{I}_1)$  and $\M_2 =
(E,\mathcal{I}_2)$, where  the elements in $E$ are presented one-by-one to an
online algorithm whose goal is to construct a
large common independent set.  As an element
arrives, the algorithm must immediately and irrevocably decide whether to \emph{pick} it, 
while ensuring that the picked elements always form a common independent set.
We assume that the algorithm knows the size of $E$ and has access to
independence oracles for the already arrived elements. The greedy algorithm,
which picks an element whenever possible, is half-competitive. The following is
the main result of this paper.

\begin{theorem} \label{thm:matroidmain}
    The online matroid intersection problem in the random arrival model has a
    $(\frac{1}{2} + \delta)$-competitive randomized algorithm, where $\delta >0$ is a constant.  
\end{theorem}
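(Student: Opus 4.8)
Write $G$ for the common independent set produced by the natural greedy algorithm on the random arrival order, and let $\OPT$ be the size of a largest common independent set of $\M_1$ and $\M_2$. Recall the standard exchange fact that every \emph{maximal} common independent set has size at least $\tfrac12\OPT$, so $|G|\ge\tfrac12\OPT$ surely and greedy is $\tfrac12$-competitive. The plan is to design a randomized refinement of greedy, call it \MkGreedy, that always maintains a common independent set, agrees with greedy except for occasional \emph{local exchanges}, and satisfies $\E[\,|\MkGreedy|\,]\ge(\tfrac12+\delta)\OPT$ for a universal constant $\delta>0$, the expectation being over both the arrival order and the algorithm's coins.

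\textbf{Where the slack is, and the algorithm.} Greedy is exactly $\tfrac12$-competitive only when $G$ is ``barely maximal''. A short argument on the matroid exchange graph of $G$ shows that if $|G|\le(\tfrac12+\eta)\OPT$ then all but $O(\eta)\cdot\OPT$ elements $x\in G$ admit a \emph{length-$3$ augmenting configuration}: a pair $y_1,y_2\notin G$ with $G-x+y_1+y_2$ independent in both matroids. In a uniformly random order the three elements $x,y_1,y_2$ appear in uniformly random relative order; greedy inserts $x$, and when the first of $y_1,y_2$ arrives it is blocked \emph{solely} because of $x$. \MkGreedy is built to catch these moments. On the arrival of $e$, with current set $S$: if $S\cup\{e\}$ is independent in both matroids, add it (a plain greedy step); if $S\cup\{e\}$ spans a circuit in exactly one matroid $\M_j$ --- so $e$ conflicts with a single ``slot'' of $S$ --- then with a fixed probability $p$, provided that circuit contains an \emph{unmarked} element $f$, perform the exchange $S\leftarrow S-f+e$ and \emph{mark} $e$; otherwise do nothing. (If $e$ would create circuits in both matroids simultaneously, an exchange can only trade two elements for one, so \MkGreedy leaves it.) Marking the element that enters by an exchange guarantees it is never displaced afterwards, which both caps the number of exchanges --- each creates one new, permanent marked element, and $|S|\le\OPT$, so there are at most $\OPT$ of them --- and rules out destructive cascades; the probability $p$ decorrelates exchanges from one another and from the random order. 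A short random sampling prefix may additionally be used to bias \emph{which} exchanges are attempted, but the marked-exchange rule is the essential mechanism.

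\textbf{The analysis.} I would establish two facts about the random quantity $|\MkGreedy|-|G|$, coupling \MkGreedy and greedy to the same arrival order. \emph{Gain:} fix a length-$3$ configuration $(x,y_1,y_2)$; conditioning on the random relative order of just these three elements and on the relevant coin, with constant probability greedy inserts $x$, then (say) $y_1$ arrives and \MkGreedy swaps $x$ out for $y_1$, and then $y_2$ arrives and is added --- a $+1$ charged to this configuration. When $|G|\le(\tfrac12+2\delta)\OPT$ there are $\Omega(\OPT)$ such configurations, giving an expected gain of $\Omega(\OPT)$. \emph{Loss:} a global charging argument --- using that exchanged-in elements are frozen, that at most $\OPT$ exchanges occur, and the random order --- shows that the expected total decrease caused by exchanged-out elements whose slot is never refilled never exceeds the expected gain, and falls short of it by $\Omega(\OPT)$ in the near-tight regime. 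Combining: $\E[\,|\MkGreedy|\,]\ge\E[\,|G|\,]$ always, so if $|G|\ge(\tfrac12+2\delta)\OPT$ we are done; and $\E[\,|\MkGreedy|\,]\ge\E[\,|G|\,]+\Omega(\OPT)\ge(\tfrac12+\Omega(1))\OPT$ when $|G|<(\tfrac12+2\delta)\OPT$. Choosing $\delta$ small enough for both cases proves the theorem. Finally, since \MkGreedy makes a single pass with $O(1)$ independence queries per element, running it on a random permutation of a fixed ground set yields the stated linear-time offline $(\tfrac12+\delta)$-approximation for matroid intersection.

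\textbf{Main obstacle.} The hard part will be the loss bound. A single element of $G$ can lie in many length-$3$ configurations and be offered for many exchanges, so ``this exchange scores $+1$ later'' and ``that exchange costs me a slot'' are strongly correlated, and an exchange made now can destroy a configuration that would have paid off later. Building a charging scheme in which every harmful exchange is paid for --- leaning on the marking to forbid cascades, on $p$ to decorrelate, and on the random order to lower-bound how often the pattern ``$x$ before $y_1$ before $y_2$'' actually occurs --- is the crux, and is also why the constant $\delta$ that comes out is tiny and unoptimized. A secondary issue, absent in the bipartite-matching special case, is to verify that \emph{length-$3$} augmenting configurations, rather than arbitrarily long augmenting paths in the exchange graph, are already plentiful when $G$ is near-tight; this is what keeps the algorithm local.
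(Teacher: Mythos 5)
Your algorithm performs $S\leftarrow S-f+e$, removing an element $f$ that was previously committed to the output. The problem explicitly requires irrevocable decisions: once an element is picked it must remain in the output (``the algorithm must irrevocably decide whether to pick the element, while always maintaining a common independent set''). So \MkGreedy as you describe it is a semi-streaming algorithm with a constant number of permitted deletions, not an online algorithm, and the theorem you would prove is about a different (weaker) model. The whole difficulty of the problem --- and the reason nothing nontrivial was known even for bipartite matching with edge arrivals --- is precisely that you may not undo a pick. The paper circumvents this by never removing anything: in a short initial phase it runs greedy but \emph{preemptively declines} a random $p$-fraction of greedy's choices (these elements are ``marked'' but never put in the output), then in the remaining $1-f$ fraction of the stream it greedily fills in new elements that lie in $\cl_i(T_f)$ for exactly one matroid. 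The marked elements are proxies for the slots you want to ``exchange out'', but the drop happens before commitment, not after. That preemptive dropping is only sound because of the Hastiness Lemma: if greedy ends up near $\tfrac12\OPT$, then after only $f$ fraction of the stream greedy has already essentially locked in its solution, so declining a random $p$-fraction of those early picks loses little and leaves $1-f$ fraction of the stream to recover more than you dropped.

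\textbf{Secondary gaps.} Even setting the model issue aside, your gain analysis conditions on the relative order of a fixed triple $(x,y_1,y_2)$ and claims $y_1$ is blocked ``solely because of $x$'', but independence of $G-x+y_1+y_2$ is a statement about the final set, not about the algorithm's state when $y_1$ arrives; $y_1$ can be blocked by other elements of the current prefix, and disjointness of the many overlapping length-3 configurations has to be argued rather than assumed. You flag the loss bound as the crux and do not carry it out. In the paper this is exactly where the Sampling Lemma does the work: it shows that after dropping a random $p$-fraction of $T$, greedy on the contracted instance $(\M_i/S,\M_{\bar\imath}/T)$ restricted to $\cl_i(T)\setminus\cl_{\bar\imath}(T)$ already recovers a $\frac{1}{1+p}$-fraction of the relevant part of $\OPT$ \emph{for any adversarial order}, avoiding the delicate charging you anticipate. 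Also, the proof that length-3 augmentations suffice is itself nontrivial for general matroids --- the paper proves the matroid analogue via the Hastiness Lemma and Claim~\ref{claim:propSetI}, using a substitution argument (Fact~\ref{swapbases}) that has no counterpart in your sketch; you should not treat it as ``a short argument on the matroid exchange graph'' without a proof, since the lack of a vertex/augmenting-path structure is precisely what makes matroid intersection harder than bipartite matching here.

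\textbf{Where you are aligned.} Your high-level decomposition --- handle the ``greedy is already good'' case trivially and focus on the near-tight regime, use randomness and a marking device to decorrelate, and aim for an $\Omega(\OPT)$ expected gain from short augmentations --- matches the paper's strategy. What is missing is the mechanism that makes it an online algorithm (pre-dropping via Phase~(a) rather than post-hoc exchanges) and the two lemmas (Hastiness, Sampling) that make the gain and loss accounting go through without the charging scheme you flag as the obstacle.
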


Our OMI  algorithm  makes only a linear number of calls to the independence oracles of both the
matroids.  Given recent interest in finding fast approximation algorithms for
fundamental polynomial-time  problems, this result is of independent
interest even in the offline setting.
Previously known algorithms that perform better than the greedy algorithm
construct an ``auxiliary graph'', which already takes quadratic time~\cite{ChekuriQuanrud-SODA16,HKK-SODA16}.

\begin{corollary}\label{thm:offmatrInter}
The matroid intersection problem has a linear time $(\frac12+\delta)$ approximation algorithm, where $\delta >0$ is a constant.  .
\end{corollary}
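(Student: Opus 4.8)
The plan is to derive Corollary~\ref{thm:offmatrInter} directly from Theorem~\ref{thm:matroidmain} by \emph{self-simulating} the random-arrival model. Given an offline instance consisting of matroids $\M_1=(E,\mathcal{I}_1)$ and $\M_2=(E,\mathcal{I}_2)$ accessed through independence oracles, I would first draw a uniformly random permutation $\pi$ of $E$ (for instance by a Fisher--Yates shuffle), and then run the OMI algorithm of Theorem~\ref{thm:matroidmain}, feeding it the elements of $E$ one-by-one in the order dictated by $\pi$. Because the ``adversary'' in the random-arrival model is exactly this uniformly random order, the guarantee of Theorem~\ref{thm:matroidmain} transfers verbatim: the common independent set $X$ returned satisfies $\E[\,|X|\,]\ge(\tfrac12+\delta)\,\OPT$, where $\OPT$ is the size of a maximum common independent set and the expectation is over our own coin tosses (the permutation together with the algorithm's internal randomness). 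Note that knowing $|E|$ and having oracle access to $\M_1,\M_2$ -- the only things the online algorithm assumes -- is automatic in the offline setting.

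For the running time I would invoke the remark following Theorem~\ref{thm:matroidmain}: the OMI algorithm issues only $O(|E|)$ calls to the independence oracles of $\M_1$ and $\M_2$ and performs $O(|E|)$ additional arithmetic and bookkeeping work. Sampling $\pi$ costs a further $O(|E|)$ time. Hence the whole reduction runs in linear time in the standard model where an independence-oracle query is charged unit cost -- in contrast to all previously known better-than-greedy algorithms, which first build an auxiliary exchange graph and thereby already spend $\Omega(|E|^2)$ time~\cite{ChekuriQuanrud-SODA16,HKK-SODA16}.

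The only step that genuinely needs an argument is upgrading the in-expectation bound to a worst-case approximation factor. Here I would use that the returned set is always a common independent set, so $|X|\le\OPT$ deterministically; writing $Y=\OPT-|X|\ge0$ we have $\E[Y]\le(\tfrac12-\delta)\OPT$, and Markov's inequality gives
\[
  \Pr\!\left[\,|X| < \Big(\tfrac12+\tfrac{\delta}{2}\Big)\OPT\,\right]
  \;=\;\Pr\!\left[\,Y > \Big(\tfrac12-\tfrac{\delta}{2}\Big)\OPT\,\right]
  \;\le\;\frac{\tfrac12-\delta}{\tfrac12-\tfrac{\delta}{2}}
  \;=\;1-\frac{\delta}{1-\delta}\,,
\]
so a single run outputs a common independent set of size at least $(\tfrac12+\tfrac{\delta}{2})\OPT$ with probability at least $\delta/(1-\delta)$. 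Running the reduction $O\!\big(\tfrac1\delta\log\tfrac1\eps\big)$ times with independent randomness and returning the largest common independent set found then boosts the success probability to $1-\eps$, while keeping the total running time linear for constant $\delta,\eps$. The mild ``obstacle'' is purely that one cannot compute $\OPT$ in linear time, so the guarantee must be phrased as a (one-sided, Monte Carlo) bound rather than a certified one; if a deterministic factor were required, one would instead have to derandomize the choice of arrival order, which I would not attempt here.
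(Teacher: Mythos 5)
Your self-simulation reduction---sample a uniformly random permutation $\pi$ of $E$ yourself, run the OMI algorithm of Theorem~\ref{thm:matroidmain} feeding elements in order $\pi$, and observe that it makes only $O(|E|)$ independence-oracle calls and $O(|E|)$ additional work---is exactly the (implicit) argument the paper has in mind for this corollary, so the core of your proposal is correct and on the same route. The concluding Markov-and-boosting paragraph is superfluous, though: both Theorem~\ref{thm:matroidmain} and Corollary~\ref{thm:offmatrInter} are \emph{expected-value} guarantees, which is the standard meaning of the approximation factor of a randomized algorithm, so nothing needs to be upgraded to a worst-case bound and the worry about not being able to compute $\OPT$ does not arise.
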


A special case of OMI where both the matroids are partition matroids already
captures the \emph{online bipartite matching problem} in the random
edge arrival (OBME) model. Here, edges of a fixed (but adversarially chosen) bipartite graph $G$ arrive 
in a uniformly random order and the algorithm must irrevocably decide whether
to pick them into a matching.
 Despite tremendous progress made in the online vertex arrival
model~\cite{KVV-STOC90,MSVV-Jounal07,GM-SODA08,WangWong-ICALP15,KMZ-STOC15},
nothing non-trivial was known in the edge arrival model where the edges arrive
one-by-one.  
We present the first algorithm that
performs better than greedy in the random arrival model.  Besides being a
natural theoretical question, 
it captures various online content systems such as online libraries where the
participants are known to the matching agencies but  the requests arrive in an
online fashion. 


\begin{corollary}\label{thm:bipmatching}
    The  online bipartite matching problem in the random edge arrival model has
     a $(\frac{1}{2} + \delta)$-competitive randomized algorithm, where $\delta >0$ is a constant.
\end{corollary}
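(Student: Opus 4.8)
The plan is to derive Corollary~\ref{thm:bipmatching} as an immediate consequence of Theorem~\ref{thm:matroidmain}, by presenting online bipartite matching in the random edge arrival model as a special instance of online matroid intersection. Fix the (adversarially chosen) bipartite graph $G=(L\cup R,E)$ and take the ground set of both matroids to be $E$. Let $\M_1=(E,\mathcal{I}_1)$ be the partition matroid whose parts are indexed by the left vertices --- the part of $u\in L$ being the set of edges incident to $u$ --- each with capacity $1$; symmetrically let $\M_2=(E,\mathcal{I}_2)$ be the partition matroid whose parts are indexed by the right vertices, again with unit capacities. A set $F\subseteq E$ lies in $\mathcal{I}_1$ iff no two of its edges share a left endpoint, and in $\mathcal{I}_2$ iff no two share a right endpoint; hence $F\in\mathcal{I}_1\cap\mathcal{I}_2$ precisely when $F$ is a matching of $G$. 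In particular a maximum common independent set of $\M_1$ and $\M_2$ is exactly a maximum matching of $G$, so the two optima coincide and a competitive ratio for OMI transfers verbatim to OBME.

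Next I would check that the online models line up. In OBME the edges of $E$ are revealed one at a time in uniformly random order and each must be accepted or rejected irrevocably while keeping the accepted set a matching; this is exactly the OMI process on $\M_1,\M_2$, where the elements of the common ground set arrive in uniformly random order and each must be picked or discarded while keeping the picked set a common independent set. The only remaining point is the oracle model: the algorithm of Theorem~\ref{thm:matroidmain} needs only independence oracles restricted to the already-arrived elements, and for partition matroids these are trivial --- testing whether adding an edge keeps the set independent just checks that the edge's left (resp.\ right) endpoint is currently unmatched. Thus the reduction is fully constructive and introduces no overhead. Applying Theorem~\ref{thm:matroidmain} to this instance yields a randomized algorithm that, in expectation over the random edge order and its own coin tosses, outputs a matching of size at least $(\tfrac12+\delta)\,|\OPT|$, which is the claimed bound.

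There is essentially no hard step; the only ``obstacle'' is bookkeeping. One must make sure that (i) the encoding of $G$ as a pair of partition matroids is faithful on \emph{every} subset of edges, not merely on the final output --- this is what guarantees that feasibility (being a matching) is maintained at every prefix of the arrival sequence --- and (ii) the benchmark in Theorem~\ref{thm:matroidmain}, the offline maximum common independent set, really is the right benchmark for OBME, i.e.\ the offline maximum matching, which is the equivalence recorded above. Once these are noted the corollary follows with no further computation, and the same reduction shows that the linear-time guarantee of Corollary~\ref{thm:offmatrInter} is inherited as well, since partition-matroid independence queries take constant time.
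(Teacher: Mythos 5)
Your proposal is correct and matches the paper's intended derivation: the paper states the corollary immediately after observing that "a special case of OMI where both the matroids are partition matroids already captures the online bipartite matching problem," which is exactly the partition-matroid encoding you spell out. The paper's Section~2 then re-derives this special case directly as a warmup (with a simpler Sampling Lemma), but the corollary itself is obtained precisely by the reduction you give.
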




Finally, the simplicity of our OMI algorithm  allows us to extend our
results to the much more general problems of online matching in general graphs and to online
$k$-matroid intersection; the latter problem being  \NP-Hard (proofs 
in \ifFULL Section~\ref{sec:generalgraphs} and Section~\ref{sec:kmatInter}, respectively \else  full version\fi).

\begin{theorem} \label{thm:generalgraphs}
The online matching problem for general graphs in the random edge arrival model has
     a $(\frac{1}{2} + \delta')$-competitive randomized algorithm, where $\delta' >0$ is a constant.
\end{theorem}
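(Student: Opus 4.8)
The plan is to run, essentially verbatim on the general graph $G=(V,E)$, the two‑phase algorithm behind Theorem~\ref{thm:matroidmain}, specialized exactly as in the bipartite case of Corollary~\ref{thm:bipmatching}. Fix a small constant $\beta>0$: for the first $\lfloor(1-\beta)|E|\rfloor$ arrivals behave like \Greedy, maintaining a maximal matching $M$ of the edges seen so far; for each edge $e=\{u,v\}$ arriving among the last $\beta|E|$, add it to $M$ if possible, and otherwise attempt a length‑three augmentation through $e$ — find a currently matched edge $f$ incident to $e$ and an already‑arrived non‑matching edge $e'$ such that deleting $f$ from $M$ and inserting $e$ and $e'$ yields a strictly larger matching, and perform such a swap if one exists (say, the first one in a fixed order). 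The output is the final matching. Since $G$ may now contain odd cycles, the content of the theorem is that this weakening of the hypothesis does not hurt the analysis.

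The first thing to check is that none of the combinatorial facts driving the bipartite/OMI argument ever invoked bipartiteness or the partition‑matroid structure. For any two matchings, $M\oplus\OPT$ is a vertex‑disjoint union of alternating paths and \emph{even} alternating cycles — the cycles are even simply because they alternate between $M$‑edges and $\OPT$‑edges — so every $\OPT$‑edge the algorithm fails to secure lies on such a component and is charged by exactly the same decomposition as before. Moreover, a length‑three alternating path is a legitimate augmentation in an arbitrary graph under the same purely local condition as in a bipartite graph. Hence the notion of a \emph{good} $\OPT$‑edge (informally: one that, over the random order, arrives in the second phase with a valid length‑three augmentation available when it does) carries over unchanged, and so does the key lemma — the heart of the proof — that each $\OPT$‑edge is good with probability at least an absolute constant, over the random arrival order and the algorithm's coins; its proof is the same estimate on the random order as in the bipartite case. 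Summing over $\OPT$ and adding the standard bound on the phase‑one greedy matching then yields $\E[|\mathrm{ALG}|]\ge(\tfrac12+\delta')|\OPT|$ for a constant $\delta'>0$.

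The single genuinely new point — and the only place I expect to do real work — is that in a non‑bipartite graph the alternating walk $v_0\,v_1\,v_2\,v_3$ produced by a candidate augmentation (with $\{v_1,v_2\}=f\in M$, $\{v_2,v_3\}=e$, $\{v_0,v_1\}=e'$) need not be a \emph{simple} path: its two exposed endpoints $v_0$ and $v_3$ may coincide, precisely when $f,e,e'$ span a triangle, in which case the swap is not a legal matching update. In the bipartite case $v_0$ and $v_3$ automatically sit on opposite sides and are distinct, which is exactly what one loses here. I would therefore add the clause ``$v_0\neq v_3$'' to the test defining a valid augmentation and to the definition of a good edge, and then argue that this costs at most a constant factor: for a fixed arriving edge $e$ and matched edge $f$ the shared vertex $v_2$ and hence $v_3$ are determined, so there is at most one choice of $e'$ creating a triangle, and whenever $v_1$ has a second available candidate for $e'$ a simple augmentation still fires; the residual bad event is rare enough to be absorbed into a smaller but still positive constant $\delta'$. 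Everything else is inherited from the OMI/OBME analysis, and the remaining bookkeeping is carried out in Section~\ref{sec:generalgraphs}.
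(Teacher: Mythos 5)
Your proposed algorithm is not an online algorithm, which is the central difficulty the paper's construction is designed to overcome. You describe Phase~2 as: for a late-arriving edge $e$, ``find a currently matched edge $f$ incident to $e$ and an already-arrived non-matching edge $e'$ such that deleting $f$ from $M$ and inserting $e$ and $e'$ yields a strictly larger matching, and perform such a swap.'' This does two forbidden things in the random-edge-arrival model: it removes a previously committed edge $f$, and it adds an edge $e'$ that arrived earlier and was passed over. What you have reconstructed is essentially the streaming three-augmentation routine of Konrad--Magniez--Mathieu, which the paper explicitly contrasts with: their algorithm ``is not online because it keeps all the matchings till the end.'' The paper's entire online mechanism is the \emph{marking} trick in Phase~(a) --- each edge selected by $\Greedy$ is irrevocably \emph{dropped} with probability $p$ (and merely remembered as ``marked''), so that a three-augmentation later amounts to \emph{adding} two fresh edges incident to the two endpoints of a dropped edge, with nothing ever removed. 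Your algorithm has no marking step and no dropping, so the later ``augmentation'' cannot be executed online.

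Related to this, your phase lengths are reversed. You run $\Greedy$ for a $(1-\beta)$ fraction and augment only on the last $\beta$ fraction, whereas the paper runs $\Greedy$ on a short prefix of length $f\ll 1$ and uses the long suffix for Phase~(b). The Hastiness Lemma is exactly what makes a short $\Greedy$ prefix suffice --- $\Greedy$ has essentially committed to its poor matching after seeing only an $f$-fraction --- and, crucially, it leaves almost all of $\OPT$ (a $1-f-O(\epsilon/f)$ fraction in expectation) still to arrive in Phase~(b). With only a $\beta$-suffix for augmentation, and with each augmentation further requiring a past edge $e'$ that is already unavailable, there is no analogue of the $\tI_i$ counting that underlies Lemma~\ref{lemma:main}.

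Finally, the ``$v_0\neq v_3$'' patch plus the claim that the triangle event ``is rare enough to be absorbed into a smaller constant'' is not a proof; nothing you have written rules out instances where every candidate $e'$ for a given marked edge closes a triangle. The paper avoids the issue structurally rather than probabilistically: after Phase~(a) it lets $U$ be the $T_f$-matched vertices and $V$ the rest, observes (via Fact~\ref{lemma:KMMsimple} and the Hastiness Lemma) that a $(1-O(\epsilon/f))$-fraction of $\OPT$ crosses $(U,V)$, and then runs the \emph{bipartite} algorithm of Theorem~\ref{thm:bipmatching} on the $(U,V)$-bipartite subgraph only. Because only $(U,V)$ edges are considered, every three-augmentation found is automatically a simple path, and the bipartite Sampling Lemma applies verbatim --- no per-edge triangle bookkeeping is needed. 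So the correct fix is a reduction to the bipartite theorem, not a local exclusion clause.
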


\begin{theorem} \label{thm:matroid-k-main}
    The online $k$-matroid intersection problem in the random arrival model has a
    $\left(\frac{1}{k} + \frac{\delta''}{k^4} \right)$-competitive randomized algorithm, where $\delta''>0$ is a constant.  
\end{theorem}

\subsection{Comparison to Previous Work}\label{sec:prevworks}
Our main OMI result is interesting in two different aspects:  It
gives the first linear time algorithm that beats greedy for the classical
offline matroid intersection problem; also, it is the first non-trivial
algorithm for the general problem of online matroid intersection, where
previously  nothing better than half was known even for online bipartite
matching. Since offline matroid intersection problem is a fundamental problem
in the field of combinatorial optimization~\cite[Chapter 41]{Schrijver-Book02}
and online matching occupies a central position in the field of online
algorithms~\cite{M-TCS12}, there is a long list of work in both these areas. We
state the most relevant works here and refer readers to further related
work in \ifFULL Section~\ref{sec:morerelated}.\else the full version.\fi

Offline matroid intersection was brought to prominence in the groundbreaking work of
Edmonds~\cite{Edmonds-70}. To illustrate the difficulty in moving from
bipartite matching to matroid intersection, we note that while the first linear
time algorithms that beat half for bipartite matching were designed more than 20 years ago~\cite{HK-SICOMP73,ADFS-Journal95}, the fastest known matroid intersection algorithms till today that
beat half make $\Omega(r m)$ calls to the independence
oracles, where $r$ is the rank of the optimal
solution~\cite{ChekuriQuanrud-SODA16,HKK-SODA16}. The quadratic term appears
because  matroid intersection algorithms rely on constructing auxiliary graphs that  needs $\Omega(r m)$ calls~\cite[Chapter 13]{KV08}.  
Until our work,  achieving a competitive ratio
better than half with linear number of independence oracle calls was not known.  The key
ingredient that allows us to circumvent these difficulties is the
\emph{Sampling Lemma} for matroid intersection.
We do not construct an auxiliary graph and instead show that any maximal common
independent is either already a $(\frac12+\delta)$ approximation, or we can
improve it to a $(\frac12+\delta)$ approximation in a single pass over all the
elements.

Online bipartite matching has been studied extensively in the vertex arrival
model (see a nice survey by Mehta~\cite{M-TCS12}). 
Since adversarial arrival order often becomes too pessimistic, the random arrival model
(similar to the \emph{secretary problem}) for online matching was first studied by Goel and Mehta~\cite{GM-SODA08}. 
Since then, this modeling assumption has become standard~\cite{KorulaPal-ICALP09,MY-STOC11,KMT-STOC11,KMZ-STOC15}.
The only progress 
 when edges arrive one-by-one has been in showing lower bounds:
no algorithm can achieve a competitive ratio better than $0.57$ (see
\cite{ELSW-STACS12}), even when the algorithm is allowed to drop edges.

While nothing was previously known for online matching in the random edge
arrival model, similar problems have been studied in the streaming model, most
notably by Konrad et al.~\cite{KMM-APPROX12}. They gave the first algorithm
that beats half for bipartite matching in the
random arrival streaming model. In this work we generalize their 
 \emph{Hastiness Lemma} to matroids. However, prior works on
online matching \ifFULL(see Section~\ref{sec:morerelated}) \else \fi are not  useful as they are tailored to 
graphs---for instance their reliance on notion of ``vertices" cannot
be easily extended to the framework of matroids.  

The simplicity of our OMI algorithm and  flexibility of our analysis allows us
to tackle problems of much greater generality, such as general graphs and
$k$-matroid intersection,  when previously even special cases like bipartite matching had
been considered difficult in the online regime~\cite{MV-PersCom}.  While our
results are a qualitative advance, the quantitative  improvement is small
($\delta > 10^{-4}$).  It remains an interesting challenge to improve the
approximation factor $\delta$. Perhaps a more interesting challenge is to relax
the random order requirement.


\subsection{Our Techniques } \label{sec:idea}

In this section, we present an overview of our techniques to prove 
Theorem~\ref{thm:matroidmain}. Our analysis relies on two observations about
the greedy algorithm that are encompassed in the {Sampling Lemma} and the
{Hastiness Lemma}; 
the latter being useful to extend our linear time offline matroid intersection result to the online setting.
Informally, the Sampling Lemma states that the greedy algorithm cannot  perform
poorly on a randomly generated OMI instance, and the Hastiness Lemma states that 
if the greedy algorithm performs poorly, then it picks most of its elements 
quickly. 

Let  $\OPT$ denote a fixed maximum independent set in the intersection of
matroids $\M_1$ and $\M_2$. WLOG, we assume that the greedy algorithm is
\emph{bad}---returns a common independent set $T$ of size 
$\approx \frac12 |\OPT|$. For offline matroid intersection, by  running the greedy algorithm once,
  one can assume that  $T$ is known. For online matroid intersection, we use the
{Hastiness Lemma} to construct $T$. It states that
 even if we run the greedy algorithm for a  small fraction $f$
(say $<1\%$) of elements, it already picks a set $T$ of elements of size  $ \approx \frac12 |\OPT|$.
This lemma was first observed by Konrad et al.~\cite{KMM-APPROX12} for
bipartite matching and is generalized  to matroid intersection in this work. 
By running the greedy algorithm for this small fraction $f$, the lemma lets
us assume that we start with an approximately maximal common independent set
$T$ with most of the elements  ($1-f > 99\%$) still to arrive.
 
The above discussion reduces the problem  to improving a common independent
set $T$  of size $\approx \frac12 |\OPT|$ to a common  independent set
of size $\geq (\frac12+\delta) |\OPT|$ in a single pass over all the elements.
(This is true for both linear-time offline and OMI problems.) Since $T$ is
approximately maximal, we know that picking most elements in $T$ eliminates the
possibility of picking two $\OPT$ elements (one for each matroid).
Hence, to beat half-competitiveness, we drop a uniformly random $p$
fraction of these ``bad'' elements in $T$ to obtain a set $S$, and try to pick  $(1+\gamma) \OPT$ elements 
(for constant $\gamma>0$) per dropped element. Our main challenge  is to construct an
online algorithm that can  get on average $\gamma$ gain per dropped element
of $T$ in a single pass. The Sampling Lemma for matroid intersection, which is our main 
technical contribution, comes to  rescue. 

\noindent \textbf{Sampling Lemma (informal):} \textit{Suppose $T$ is a common
    independent set in matroids $\M_1$ and $\M_2$, and define $\tE = \cl_1(T)$.
    Let $S$ denote a random set containing each element of $T$ independently
    with probability $(1-p)$.  Then, 
    \[ \E_S[|Greedy(\M_1 /S, \M_2/T,\tE)|] \geq \left(\frac{1}{1+p}\right)
        \cdot \E_S[|\OPT(\M_1 /S, \M_2/T, \tE) |].  \]
}\noindent Intuitively, it says that if we restrict our attention to elements in $\cl_1(T)$ then dropping random elements from $T$ allows us to pick more than $1/(1+p) \geq 1/2$ fraction of the optimal intersection. The advantage over half  yields the $\gamma$ gain per dropped
element. Applying the lemma requires  care as we  apply it twice, once for $(\M_1/S, \M_2/T)$ and once for $(\M_1/T, \M_2/S)$, while ensuring that the resulting solutions have few ``conflicts'' with each other. We overcome this by only considering elements that are in the span of $T$ for exactly one of the matroids.


The proof of the Sampling Lemma involves giving an alternate view of the greedy algorithm
for the random OMI instance. 
Using a carefully constructed invariant and  the method of deferred decisions,
we  show that the expected greedy solution is not too small. 

\IGNORE{\noindent Intuitively, it says that for the random OMI instance on the
contracted matroids $\M_1/S$ and $\M_2/T$, the expected size of greedy is more
than $1/(1+p) \geq 1/2$ fraction of the optimal intersection. For $p<1$, the 
advantage over half  yields the $\gamma$ gain per dropped
element.  The proof of the lemma involves giving an alternate view of the greedy algorithm
for the random OMI instance. 
Using a carefully constructed invariant and  the method of deferred decisions
we  show that the expected greedy solution is not too small. 

Applying the Sampling Lemma requires  care as we need to apply it twice, once for $(\M_1/S, \M_2/T)$ and once for $(\M_1/T, \M_2/S)$, while ensuring that the resulting solutions have few ``conflicts'' with each other. To overcome this, we only consider elements that are in the span of $T$ for exactly one of the matroids.  }

\ifFULL
\subsection{Further Related Work}\label{sec:morerelated}
\noindent \textbf{Online Matching in Vertex Arrival Model}\\
Karp, Vazirani, and Vazirani~\cite{KVV-STOC90}  presented the ranking algorithm
for online bipartite matching in the  vertex arrival model. The problem is to
find a matching
in a bipartite graph where one side of the bipartition is
fixed, while the other side vertices arrive in an online fashion.
Upon arrival
of a vertex, its edges to the fixed vertices are revealed, and the algorithm
must immediately and irrevocably decide where to match it.  \cite{KVV-STOC90}
gives an optimal $\left(1-\frac1e\right)$-competitive ranking algorithm 
for adversarial vertex arrival. Since their original proof was incorrect, new ways of analyzing the ranking algorithm have since been
developed~\cite{BM08-ASN,DJK-SODA13}. 
Due to its many applications in the online ad-market,
 the vertex arrival model, its weighted generalizations, and vertex arrival on both sides have been studied thoroughly (see survey~\cite{M-TCS12,WangWong-ICALP15}). 

Goel and Mehta~\cite{GM-SODA08} introduced the random vertex-arrival model.
In this model, the adversary may choose the worst instance of a graph, but the
online vertices  arrive in a random order. The greedy algorithm is already
$\left(1-\frac1e\right)$-competitive for this problem, as the analysis reduces
to~\cite{KVV-STOC90}.  Later works~\cite{MY-STOC11,KMT-STOC11} 
showed that the ranking algorithm has a competitive ratio of at least $0.69$,
beating the bounds for adversarial vertex arrival model. There is still a gap
between known upper and lower bounds, and closing this gap remains an open
problem. 

\noindent \textbf{Online Matching in Edge Arrival Model}\\
In the edge arrival model, a fixed
bipartite  graph is  chosen by an adversary and its edges are revealed one by
one to an online algorithm that is trying to find a maximum matching.
If the edge arrival is adversarial, this problem
captures the adversarial vertex arrival model as a special case: 
constraint the edges incident to a vertex to appear together.  The 
greedy algorithm has a competitive ratio of half and a natural open question
is whether we can beat half. The current best hardness result for
adversarial edge arrival is $\sim 0.57$, even when the
algorithm is allowed to drop edges (see~\cite{ELMS-SJDM11}). 

Matching in the edge arrival model has also been studied in the streaming
community. In the streaming model, the matching algorithm can revoke decisions
made earlier, but has only a bounded memory; in particular, it has
$\thicktilde{O}(1)$ memory in the streaming model and $\thicktilde{O}(n)$ memory in the
semi-streaming model (see~\cite{FKMSZ-TCS05}).  The algorithm may
make multiple passes over the input; usually trading off the number of passes
with the quality of the solution.  
For bipartite matching in adversarial edge arrival, 
Kapralov~\cite{Kapralov-SODA13-Streaming} showed that no semi-streaming matching algorithm can do better than $1 - \frac1e$. Beating half remains a major open problem. 

On the other hand, for uniformly random edge arrival 
Konrad, Magniez, and Mathieu~\cite{KMM-APPROX12} gave the first single pass algorithm that
obtains a $0.501$-competitive ratio for bipartite matching in the semi-streaming setting.  Their algorithm  crucially used the ability to revoke earlier
decisions. One of the contributions in this paper is to show that a variant of the 
greedy algorithm, which appears simple in hindsight, achieves a competitive
ratio better than half in the more restrictive online model.

A weighted generalization of OBME is online bipartite matching for random edge
arrival in an edge weighted bipartite graph. This problem has exactly the same
setting as OBME; however, the goal is to maximize the weight of the matching
obtained.  Since it is a generalization of the secretary problem, the greedy
algorithm is no longer constant competitive.  Korula and
Pal~\cite{KorulaPal-ICALP09} achieved a breakthrough and gave a constant
competitive ratio algorithm for this problem\footnote{They also obtain similar results for hypergraphs and call it the
    ``Hypergraph Edge-at-a-time Matching'' problem.}. Kesselheim et al.~\cite{KRTV-ESA13} later improved their results.

 \noindent \textbf{Randomized Greedy Matching Algorithms}\\
Our result for matching in general graphs follows a line of work analyzing variants of the greedy algorithm for matching in general graphs. Dyer and
Frieze~\cite{DF-RANDOM91} showed that greedy on a uniformly random permutation of the edges 
cannot achieve a competitive
ratio better than half for general graphs; however, it performs well for some
classes of sparse graphs. Aaronson et al.~\cite{ADFS-Journal95} proposed the
Modified Randomized Greedy (MRG) algorithm and showed that it has a competitive
 ratio better than half for general graphs. 
Poloczek and Szegedy~\cite{PS-FOCS12} provided an argument to improve the bounds on the competitive ratio of this algorithm; however, a gap has emerged in their contrast lemma. 
A ranking based randomized greedy algorithm has been also shown to have a competitive ratio
better than half for general graphs (see~\cite{CCWZ-SODA14}).
Neither MRG nor the ranking algorithm can be implemented in the
original setting of \cite{DF-RANDOM91} where the edges arrive in random
order and the algorithm is only allowed a single pass.  To prove Theorem~\ref{thm:generalgraphs}, we give an algorithm that beats greedy for  general graphs with a much simpler analysis and also 
works in the original setting of \cite{DF-RANDOM91}.

\noindent \textbf{Online Matroid Problems}\\
The OMI problem studied in this paper is much more general than online matching and has many other applications, such as the following online network design problem. Consider a central depot that
stores different types of commodities and is connected to different cities by
rail-links. At various points cities order one of the commodities from the
depot and the central manager must immediately and irrevocably decide  whether
to fulfill the order.  If the central manager chooses to fulfill the order, it
needs to find a path of rail-links from the depot to that city. Moreover, any
rail-link can be used to fulfill at most one order as it can only run a single
train.  The question is to maximize the number of accepted requests given that
there is only a finite amount of each commodity at the depot.  This is a
matroid intersection problem between a gammoid and a partition matroid.  Our
result implies an algorithm that beats half for this
problem if the orders arrive uniformly at random. The
intersection of two graphic matroids, with applications to electrical
networks~\cite{Recski-DM05}, is another special case of matroid
intersection that has received attention in the past~\cite{GabowXu-Journal96}.

The uniformly random order assumption in OMI is motivated from the work on the secretary problem. 
In 2007, Babaioff, Immorlica, and Kleinberg~\cite{BIK-SODA07} introduced the matroid
secretary problem, which generalized the classical secretary problem. For a
matroid with weighted elements arriving in a uniformly random order, the online
algorithm needs to select an  independent set of large weight. 
Despite recent breakthroughs (see~\cite{Lachish-FOCS14,FSZ-SODA15}), their question
on the existence of a constant-competitive algorithm remains unanswered. This problem becomes 
trivial in the unweighted setting as the greedy algorithm finds the optimum solution. However,  beating greedy
remained challenging for intersection of matroids. Our Theorem~\ref{thm:matroidmain} resolves
this problem.  For weighted online matroid intersection, constant factor competitive
algorithms are known in the streaming model where the algorithm always maintains an
independent set in the intersection but is allowed to drop elements
(see~\cite{Varadaraja-ICALP11}).


\noindent \textbf{Offline Matroid Intersection}\\
Until recently, the fastest unweighted offline matroids intersection algorithm was
a variant of Hopcraft-Karp bipartite matching algorithm due to
Cunningham~\cite{Cunningham86} taking  $O(m r^{3/2} Q)$ time --- $m,r$, and $Q$ refer to the
number of ground elements, the rank of matroid intersection, and to the
independence oracle query time, respectively. In 2015, Lee, Sidford, and
Wong~\cite{LSS-FOCS15} improved this to $\thicktilde{O}(m^2 Q + m^3)$, both for
weighted and unweighted matroid intersection.  Not much success has been achieved in  proving  lower bounds on the oracle complexity of matroid intersection algorithms~\cite{Harvey-SODA08}.
When looking for a $(1-\epsilon)$~approximate
weighted matroid intersection, recent works have improved the running time
to $\thicktilde{O}(mrQ/{\epsilon^2} )$~\cite{ChekuriQuanrud-SODA16,HKK-SODA16}. Our main Theorem~\ref{thm:matroidmain}  gives the first algorithm that achieves an approximation factor greater than half with only a linear number of calls to the independence oracles,  i.e., in $O(m\,Q)$ time.
 
 \else
 \fi


\section{Warmup: Online Bipartite Matching} \label{section:matching}
In this section, we consider a special case of online matroid intersection,
namely online bipartite matching in the random edge arrival model.  Although,
this is a special case of the general Theorem~\ref{thm:matroidmain}, we present
it because  nothing non-trivial was known before \ifFULL(see
Section~\ref{sec:morerelated}) \else \fi and several of our ideas greatly simplify in
this case (in particular the Sampling Lemma), allowing us to lay the framework
of  our ideas.


\subsection{Definitions and Notation} \label{sec:notmatching}
An instance of the online bipartite matching problem $(G,E,\pi,m)$ consists of a 
bipartite graph $G=(U\cup V, E)$ with $m=|E|$, and where the edges in $E$
arrive according to the order defined by $\pi.$ We assume that the algorithm
knows $m$  but does not know $E$ or $\pi$.  For $1\leq i\leq j \leq m$, let 
$E^{\pi}[i,j]$   denote
the set of edges that arrive in between positions $i$ through $j$ according to
$\pi$ \footnote{We emphasize that our definition also works when $i$ and $j$
    are non-integral}. When  permutation $\pi$ is implicit, we 
abbreviate this to $E[i,j]$. 

\Greedy denotes the algorithm that picks an edge into the matching whenever possible. 
Let $\OPT$ denote a fixed maximum offline matching of graph $G$.  For $f
\in [0,1]$, let $T^{\pi}_f$ denote the matching produced by \Greedy after seeing the first $f$-fraction of the edges according to order
$\pi$. For a uniformly random chosen order $\pi$,  
    \[ \G(f):= \frac{ \E_{\pi}[ |T^{\pi}_f| ] }{|\OPT|}.\] 
Hence, $\G(1)\,|\OPT|$ is the expected output size of \Greedy and
$\G(\frac12)\,|\OPT|$ is the expected output size of \Greedy after seeing half of the
edges.  We observe that  \Greedy has a competitive ratio of at-least half and in \ifFULL
Section~\ref{appn:otherResults}, \else the full version \fi we show that this ratio is tight for worst case
input graphs.\footnote{We also show that for regular graphs \Greedy is at least
$\left( 1- \frac1e \right)$  competitive, and that no online algorithm for OBME can be better than $\frac{69}{84} \approx 0.821$ competitive. }



\subsection{ Beating Half }

\ifFULL

\begin{figure}
	\centering
	\begin{tikzpicture} [thin,scale=1.5]
		\draw (0,-0.05) ellipse (0.2cm and 0.60cm)[fill=black!10];	
		\draw (1.5, -0.1) ellipse (0.2cm and 0.5cm)[fill=black!10];
		
		\draw [-, densely dashdotted, thick] (0,0.2) to node[auto]{$ $} (1.5,0.2);
		\draw [-, densely dashdotted, thick] (0,-0) to node[auto]{$ $} (1.5,0);

		\draw (1.5,-1.55) ellipse (0.2cm and 0.60cm)[fill=black!10];
		\draw (0,-1.5) ellipse (0.2cm and 0.5cm)[fill=black!10];

		\draw (1.5,0.1) ellipse (0.1cm and 0.21cm)[fill=blue!40];
		\draw (0,-1.3) ellipse (0.1cm and 0.21cm)[fill=blue!40];
		
		\draw [-,  dashed, thick, red] (0,-1.2) to node[auto]{$ $} (1.5,0.2);
		\draw [-,  dashed, thick, red] (0,-1.4) to node[auto]{$ $} (1.5,-0);
		
		\draw [-, densely dashdotted, thick] (0,-1.2) to node[auto]{$ $} (1.5, -1.2);
		\draw [-, densely dashdotted, thick] (0,-1.4) to node[auto]{$ $} (1.5, -1.4);
		
		\draw [-,   thick] (0,-1.65) to node[auto]{$ $} (1.5,-0.25);
		\draw [-,   thick] (0,-1.85) to node[auto]{$ $} (1.5,-0.45);		

		\draw (-0.7,0.6) -- (2.2,0.6) -- (2.2, -0.7) -- (-0.7, -0.7) -- cycle;		
		\draw (-0.7,-0.9) -- (2.2,-0.9) -- (2.2, -2.2) -- (-0.7, -2.2) -- cycle;	
		
		\node at (-0.5,-0) {$Y_2$};
		\node at (2,-0) {$X_2$};
		\node at (2.5,0) {$G_2$};
		\node at (2.5,-1.5) {$G_1$};
		\node at (-0.5,-1.5) {$X_1$};
		\node at (2,-1.5) {$Y_1$};

	\end{tikzpicture}
    \caption[margin=5cm]{$U= X_1 \cup Y_2$ and $V=X_2\cup Y_1$, where $X_1$ and $X_2$ denote the set of vertices matched by \Greedy in Phase~(a). Here thick-edges are picked and diagonal-dashed-edges
    are  marked. Horizontal-dashed-edges show augmentations for the
    marked edges. } 
        \label{figOBME}
\end{figure}
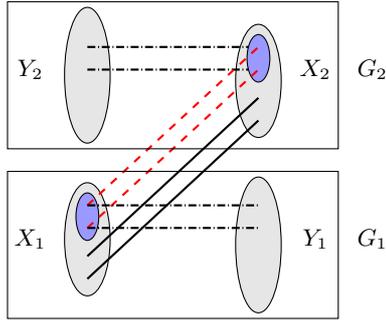		

\else

\fi

Lemma~\ref{lem:reduc} shows that we can restrict our attention to the case when
the expected \Greedy size is small\ifFULL \else(proof in the full version)\fi. 
Theorem~\ref{thm:onebit}  gives an algorithm that beats half for this
restricted case.

\begin{lemma}\label{lem:reduc}
Suppose there exists an Algorithm~\A that achieves a competitive ratio of  $\frac12 + \gamma$
when $\G(1) \leq (\frac{1}{2}+\epsilon)$ for some
$\epsilon, \gamma >0$. Then there exists an algorithm with competitive ratio at least 
$\frac12 + \delta $, where $\delta = \frac{\epsilon  \gamma}{\frac12 + \epsilon + \gamma}$.
\end{lemma}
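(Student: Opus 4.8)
The plan is to run two algorithms in parallel — the plain \Greedy algorithm and the given Algorithm~\A — and decide which one to use based on a guess of whether the instance is ``good'' (i.e. $\G(1)$ is large) or ``bad'' (i.e. $\G(1) \le \tfrac12 + \epsilon$). The subtlety is that the algorithm does not know $\G(1)$ in advance, since this depends on the (unknown) graph. The standard trick here is to flip a biased coin up front: with some probability $q$ we commit to running \Greedy for the entire stream, and with probability $1-q$ we commit to running Algorithm~\A. The competitive ratio of the resulting algorithm is then $q \cdot (\text{ratio of \Greedy}) + (1-q)\cdot(\text{ratio of \A})$, and we choose $q$ to balance the two regimes.

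First I would record the two facts we have available. On the one hand, \Greedy always has competitive ratio at least $\G(1)$ by definition of $\G$, and also always at least $\tfrac12$ (the maximality bound). On the other hand, Algorithm~\A has competitive ratio at least $\tfrac12 + \gamma$ \emph{provided} $\G(1) \le \tfrac12 + \epsilon$; when $\G(1) > \tfrac12 + \epsilon$ we can only guarantee \A is at least $\tfrac12$-competitive (or even nothing — but we will not need more than the trivial bound, or we simply run \A only in the branch where it helps). Now consider the mixed algorithm that runs \Greedy with probability $q$ and \A with probability $1-q$. In the ``bad'' case $\G(1) \le \tfrac12+\epsilon$, the expected ratio is at least $q\cdot \tfrac12 + (1-q)(\tfrac12+\gamma) = \tfrac12 + (1-q)\gamma$. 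In the ``good'' case $\G(1) = \tfrac12 + x$ with $x > \epsilon$, the expected ratio is at least $q\cdot(\tfrac12 + x) + (1-q)\cdot\tfrac12 = \tfrac12 + qx \ge \tfrac12 + q\epsilon$ (using $x > \epsilon$ and also $x \le \tfrac12$, though the monotonicity in the good case only needs $x>\epsilon$).

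The remaining step is to optimize $q$: we want $(1-q)\gamma = q\epsilon$, giving $q = \gamma/(\epsilon+\gamma)$ and a guaranteed advantage of $\delta = \dfrac{\epsilon\gamma}{\epsilon+\gamma}$ over $\tfrac12$. I notice this is slightly better than the claimed bound $\delta = \dfrac{\epsilon\gamma}{\tfrac12 + \epsilon + \gamma}$, so presumably the authors' version of \A requires running \Greedy for an $f$-fraction first (the Hastiness Lemma setup), so that the ``\A branch'' is really ``\Greedy-for-$f$-fraction then switch'', and in the good case this branch already collects roughly $\G(f)\,|\OPT|$ which must be accounted for; alternatively the $\tfrac12$ in the denominator comes from bounding $x \le \tfrac12$ somewhere, i.e. from using a weaker inequality like $\G(1)\le 1$ rewritten as $x \le \tfrac12$ to combine the two cases into one threshold. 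The cleanest route I would actually write: let the threshold be at ratio $\tfrac12+\epsilon$, set up the two linear lower bounds as functions of $q$, and take the $q$ that equalizes the worst point of each — then verify the resulting $\delta$ matches (or beats) the stated value by direct substitution. The main obstacle is purely bookkeeping: making sure the ``good case'' bound is stated in a way that is valid for \emph{all} $\G(1) > \tfrac12+\epsilon$ simultaneously (it is, since $\tfrac12 + qx$ is increasing in $x$), so that a single choice of $q$ works without knowing which case we are in.
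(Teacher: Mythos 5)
Your approach is exactly the paper's (pick \Greedy or \A by one initial coin flip, balance the two cases by optimizing the coin bias), but you introduce an unsupported assumption that shifts the constant. You write that when $\G(1) > \tfrac12 + \epsilon$ ``we can only guarantee \A is at least $\tfrac12$-competitive,'' and then carry the $\tfrac12$ into the good-case computation $q(\tfrac12+x) + (1-q)\cdot\tfrac12$. But the lemma's hypothesis gives \emph{no} guarantee on \A outside the regime $\G(1) \le \tfrac12+\epsilon$, and in the paper's instantiation \A is \MkGreedy, which deliberately discards a $p$-fraction of the edges \Greedy would pick and may never recover them; when \Greedy is already near-optimal one cannot rule out that \MkGreedy does much worse than $\tfrac12$. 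Your ``or even nothing --- but we will not need more than the trivial bound'' hedges in the right direction but you never carry it through, and the follow-up speculation (a hidden $x\le\tfrac12$ bound, or an $f$-fraction accounting) is not the actual source of the discrepancy.

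The fix is simply to charge \A zero in the good case. Then the good-case guarantee is $q(\tfrac12+\epsilon) + (1-q)\cdot 0 = \tfrac12 + \epsilon - (1-q)(\tfrac12+\epsilon)$, and equalizing the excess over $\tfrac12$ with the bad-case excess $(1-q)\gamma$ gives $(1-q)(\gamma+\tfrac12+\epsilon)=\epsilon$, i.e.\ probability of \A equal to $\tfrac{\epsilon}{\tfrac12+\epsilon+\gamma}$ and advantage $\delta = \tfrac{\epsilon\gamma}{\tfrac12+\epsilon+\gamma}$, matching the statement. This is exactly the paper's computation; the extra $\tfrac12$ in the denominator is precisely the cost of having no guarantee on \A in the good regime. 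Once you make that substitution your proof is correct and identical in structure to the paper's.
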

\ifFULL
\begin{proof}
    Consider the algorithm that tosses a coin at the beginning and runs \Greedy
    with probability $1-r$ and Algorithm~$\A$ with probability
    $r$, where $r>0$ is some constant. This lemma follows from simple case analysis.
    \begin{itemize}
        \item Case 1: $\G(1) < \frac12 + \epsilon$ \newline
            Since \Greedy is always $\frac12$ competitive, we can say that in expectation, 
            the competitive ratio will be at least
                \[ \left(1-r \right) \frac12 + r \left(\frac12 + \gamma \right) 
                        = \frac12 + r \gamma \] 
        \item Case 2: $\G(1) \geq \frac12 + \epsilon$ \newline
            Since we have no guarantees on the performance of Algorithm~$\A$
            when \Greedy performs well, we assume that it achieves a competitive
            ratio of 0. Our expected performance will be at least
                \[ \left(1 - r \right) \left(\frac12 + \epsilon \right)
                    + 0 = \frac12 +  \epsilon - \frac{r}{2} - r \epsilon \]
    \end{itemize}
    Choosing $r = \frac{\epsilon}{\frac12 + \epsilon + \gamma}$, we get 
    $\delta \geq \frac{\epsilon  \gamma}{\frac12 + \epsilon + \gamma}.$
\end{proof}
\else
\fi

\begin{theorem} \label{thm:onebit}
    If $\G(1) \leq (\frac{1}{2}+\epsilon)$ for some constant $\epsilon>0$ then the
    \MkGreedy algorithm outputs a matching of size at least
    $(\frac{1}{2}+\gamma)\,|\OPT|$ in  expectation, where $\gamma>0$ is a constant.
\end{theorem}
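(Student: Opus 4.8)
The plan is to follow the two-phase structure hinted at in Figure~\ref{figOBME}. First, I would run \Greedy for a tiny fraction $f$ of the edges (Phase~(a)), producing a partial matching $T := T^{\pi}_f$. The \emph{Hastiness Lemma} (to be proved, generalizing Konrad--Magniez--Mathieu) is the crucial input here: since $\G(1) \le \tfrac12 + \epsilon$, i.e.\ \Greedy is globally bad, its matching at time $f$ is \emph{already} essentially as large as it will ever be — $|T^{\pi}_f| \approx \G(1)\,|\OPT| \approx \tfrac12|\OPT|$ — and moreover $T$ is \emph{almost maximal}, meaning that for almost every vertex pair $\{u,v\}$ that is $\OPT$-matched but not covered by $T$, at least one of $u,v$ becomes blocked by a $T$-edge. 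Concretely, partition the $\OPT$ edges: let $X_1$ (resp.\ $X_2$) be the $\OPT$-matched vertices on the $U$-side (resp.\ $V$-side) that are matched by $T$, and let $Y_1 = V \setminus (\text{stuff})$, $Y_2$ be the remaining $\OPT$-matched vertices; hastiness gives $|X_1| + |X_2| \gtrsim |\OPT|$, so one of the two induced subgraphs $G_1$ (between $X_1$ and $Y_1$) or $G_2$ (between $Y_2$ and $X_2$) contains a near-perfect matching from $\OPT$ of size $\gtrsim \tfrac14|\OPT|$.

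Second comes Phase~(b), the ``augmentation'' phase, run on the remaining $1-f$ fraction of edges, which is where the \emph{Sampling Lemma} does the work. Before Phase~(b) begins, each edge of $T$ is independently \emph{marked} with probability $p$ (and otherwise kept as a committed edge). The marked edges of $T$ are the ones we are willing to sacrifice in order to find length-3 augmenting paths: an unmarked $T$-edge freezes both its endpoints, but a marked $T$-edge $\{a,b\}$ can be replaced by two new edges $\{a,a'\}$ and $\{b,b'\}$ to previously-unmatched vertices $a', b'$, netting a gain of one edge. The algorithm greedily builds such augmentations online: when a new edge arrives, it takes it if it either (i) extends the current matching without touching a committed $T$-edge, or (ii) forms half of an augmenting path through a marked $T$-edge whose other half is still available. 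The key claim is that restricting attention to edges incident to $T$ on \emph{exactly one} side (so that the $G_1$-side and $G_2$-side augmentations do not conflict), the Sampling Lemma yields that on each dropped/marked edge we gain $\gamma = \Omega(1)$ augmenting edges in expectation, for a suitable choice of $p \in (0,1)$: dropping a $p$-fraction of a maximal common independent set lets online-greedy recover a $\tfrac{1}{1+p} > \tfrac12$ fraction of the optimum on the contracted instance, and the surplus over $\tfrac12$, scaled by the $p$-fraction dropped, is the per-element gain. Summing, the final matching has expected size $|T| - p|T| + (\text{gain}) \ge \tfrac12|\OPT| + \gamma'|\OPT|$ for some constant $\gamma' > 0$.

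The main obstacle, and where most of the care goes, is twofold. First is establishing the Hastiness Lemma quantitatively for random arrival order — one must show that running \Greedy for only an $f$-fraction already captures almost all of the eventual matching \emph{and} almost all potential ``conflicts'', with error terms (in $f$ and in the deviation of $\G(f)$ from $\G(1)$) that stay controlled; this requires a concentration/averaging argument over the random $\pi$. Second is the conflict-resolution between the two families of augmenting paths: naively the augmenting paths rooted at $X_1$ and those rooted at $X_2$ can share endpoints in $Y_1 \cap Y_2$ or compete for the same arriving edge, so one must commit to working in only one of $G_1, G_2$ (whichever hastiness certifies is large) and only use edges that are in the span of $T$ on a single side, so that the online greedy augmentation process is well-defined and the Sampling Lemma applies cleanly. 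Choosing $f$ small enough, $p$ appropriately, and tracking all the $\epsilon$-dependent slack so that $\gamma$ comes out as a genuine positive constant is the final bookkeeping step.
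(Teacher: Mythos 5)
Your overall framework---use the Hastiness Lemma to make $T_f^{\pi}$ almost maximal, mark a $p$-fraction of $T_f$, restrict Phase~(b) to edges touching exactly one $T_f$-matched vertex, and apply the Sampling Lemma to recover roughly a $\frac{1}{1+p}$ fraction of the surviving $\OPT$ edges---matches the paper's. But the conflict-resolution step you flag as an obstacle does not actually arise, and your proposed fix for it would break the argument. You claim the two families of augmenting paths (rooted at $X_1$ and at $X_2$) can share endpoints in $Y_1\cap Y_2$ or compete for arriving edges, so one must commit to only one of $G_1$ or $G_2$. In the paper's decomposition this conflict cannot occur: $G_1$ is induced on $X_1\cup(V\setminus X_2)$ and $G_2$ on $(U\setminus X_1)\cup X_2$, and these vertex sets are disjoint precisely because $U$ and $V$ are opposite sides of the bipartition (so in particular $Y_1 = V\setminus X_2$ and $Y_2 = U\setminus X_1$ never meet). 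The two greedy runs in Phase~(b) therefore cannot interfere, and \MkGreedy deliberately outputs $S\cup N_1\cup N_2$ with both.

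This is not cosmetic: the arithmetic only closes if you use both subgraphs. Marking forfeits about $\frac{p}{2}|\OPT|$ edges from Phase~(a). Running the Sampling Lemma on both $G_1$ and $G_2$ recovers at least $\frac{p}{1+p}\,\E\bigl[|\tilde{I}_1|+|\tilde{I}_2|\bigr]$, and Claim~\ref{claim:bipmatchsizeI} (together with the $f|\OPT|$ discount for $\OPT$ edges arriving in Phase~(a)) gives $\E\bigl[|\tilde{I}_1|+|\tilde{I}_2|\bigr]\gtrsim|\OPT|$; the net change versus plain greedy is $\frac{p(1-p)}{2(1+p)}|\OPT|>0$. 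If you commit to a single side, only $\E[|\tilde{I}_i|]$ is available, and all you know is $\E[|\tilde{I}_1|]+\E[|\tilde{I}_2|]\gtrsim|\OPT|$---the algorithm cannot tell at the start of Phase~(b) which side holds the augmenting paths, and even under the optimistic symmetric split $\E[|\tilde{I}_i|]\approx\frac12|\OPT|$ the recovery $\frac{p}{2(1+p)}|\OPT|$ is \emph{strictly less} than the $\frac{p}{2}|\OPT|$ you dropped, for every $p>0$. That yields a net loss, i.e., an algorithm strictly worse than greedy. So the step ``commit to whichever of $G_1,G_2$ hastiness certifies is large'' has to be replaced by ``run on both, which is safe because they are vertex-disjoint,'' and with that change the rest of your plan goes through as in Lemma~\ref{lemma:main}.
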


Before describing \MkGreedy, we need the following property about the
performance of \Greedy in the random arrival model --- if \Greedy is bad then it
makes most of its decisions quickly and incorrectly.  
We will be interested in the regime where $0 < \epsilon \ll f \ll 1/2$.

\begin{lemma}[Hastiness property: Lemma~$2$ in~\cite{KMM-APPROX12}] \label{lemma:KMMmatch}
    For any graph $G$ if $\G(1) \leq (\frac{1}{2}+ \epsilon)$ for some
    $0<\epsilon<\frac{1}{2}$, then for any $0<f<1/2$ 
    \[ \G(f) \geq \frac{1}{2} - \left( \frac{1}{f} -2 \right)\epsilon. \]
\end{lemma}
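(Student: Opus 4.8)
The plan is to argue that if \Greedy does very little in the first $f$-fraction of the arrivals, then it cannot recover much later, contradicting $\G(1) \leq \tfrac12 + \epsilon$ being a lower bound we are trying to beat --- i.e., we turn a small $\G(f)$ into a small $\G(1)$. Fix a uniformly random permutation $\pi$ and condition on the set $A = E^\pi[1, fm]$ of edges arriving in the first block and on the set $B = E^\pi[fm+1, m]$ of edges arriving in the second block, where the ordering within each block is still uniformly random and independent. Let $T_f$ be the matching \Greedy produces on the first block. The first key observation is structural: at the end of the run, \Greedy produces a maximal matching $T_1$, and every edge of $\OPT$ is incident to at least one vertex matched by $T_1$. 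Partition the vertices matched by $T_1$ into those matched already by $T_f$ (call this vertex set $W_f$, with $|W_f| = 2|T_f|$) and those first matched during the second block.

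The core of the argument is to bound how many edges \Greedy can add in the second block. Here I would use the following idea, which is the heart of the original Konrad--Magniez--Mathieu argument. Consider the edges of $B$ that, at their arrival time, could have been added by \Greedy --- equivalently, edges of $B$ with both endpoints still free. Because $B$ is a uniformly random $(1-f)$-fraction of $E$ (and symmetrically $A$ is a uniformly random $f$-fraction), running \Greedy on $A$ alone versus on all of $E$ but only ``counting'' the first block are related by a rescaling. More precisely: run \Greedy on the whole sequence; the edges it picks in the first block are exactly $T_f$, and the expected number it picks over the whole run is $\G(1)|\OPT|$. Every edge picked after position $fm$ either has an endpoint in a vertex that was free after the first block, so in expectation (over which $(1-f)$-fraction lands in the second block) the number of "second-block picks" is at most a $\frac{1-f}{f}$-type multiple of a quantity controlled by $|T_f|$ and $|\OPT|$. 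Carrying this out, one gets an inequality of the shape $\G(1) \le \G(f) + (\text{something}) \cdot \big(\tfrac{1}{f} - 1\big)\cdot\big(\tfrac12 - \G(f)\big)$ or, rearranged directly, $\G(f) \ge \tfrac12 - (\tfrac1f - 2)\epsilon$ once $\G(1) \le \tfrac12 + \epsilon$ is substituted. Since this is cited as Lemma~2 of~\cite{KMM-APPROX12}, I would either quote it directly or reproduce the two-paragraph counting argument: bound $|\OPT|$ by $2|T_f|$ plus the number of $\OPT$-edges with both endpoints free after block one, and bound the latter's contribution to later \Greedy gains using the random-split symmetry between $A$ and $B$.

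The main obstacle is the random-split symmetry step: making precise that the edges \Greedy picks in positions $(fm, m]$ are, in expectation, comparable to $\tfrac{1-f}{f}$ times the edges it picks in $[1, fm]$, because the relevant "augmenting" structure is itself a uniformly random subset. One must be careful that \Greedy's choices in the second block depend on $T_f$, so the two blocks are not independent as processes even though the edge sets are an independent random split; the fix is to condition on $T_f$ (equivalently on the unordered set $A$ and the matching \Greedy outputs on it) and then use that, conditioned on this, $B$ is still a uniformly random subset of $E \setminus A$ presented in uniformly random order. Everything else --- maximality of \Greedy's final matching, the bound $|\OPT| \le 2|T_1|$, and the final algebra to isolate $\G(f)$ --- is routine. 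Since the statement is explicitly attributed to~\cite{KMM-APPROX12}, in the paper itself I would keep the proof short, citing that reference and sketching only the reduction from $\G(1)$ to $\G(f)$ via the random-split argument.
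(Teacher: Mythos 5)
Your skeleton is right: write $|T_f|\ge\tfrac12\bigl(|\OPT|-|U_f|\bigr)$, where $U_f$ is the set of $\OPT$ edges with both endpoints unmatched in $T_f$, then bound $\E[|U_f|]$ using maximality of $T_1$ and $\G(1)\le\tfrac12+\epsilon$. The gap is in how you propose to bound $\E[|U_f|]$. Your ``random-split symmetry'' claim --- that the number of second-block picks scales like $\tfrac{1-f}{f}$ times the first-block picks --- is false in general (it is precisely what the lemma says \emph{fails} when \Greedy is bad), and the intermediate inequality $\G(1)\le\G(f)+(\cdot)(\tfrac1f-1)(\tfrac12-\G(f))$ points the wrong way: feeding the hypothesis $\G(1)\le\tfrac12+\epsilon$ into an \emph{upper} bound on $\G(1)$ yields nothing about $\G(f)$.

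The step that actually works is a per-element monotone coupling, not a block-level symmetry. Fix $e\in\OPT$, take any permutation $\pi$, and re-insert $e$ uniformly at random into a position in $[1,fm]$. Under this map the event $\mathcal{X}_e:=(e\in T_f)\ \vee\ (e\text{ has both endpoints free in }T_f)$ is preserved: moving $e$ earlier cannot cause $T_f$ to newly cover an endpoint of $e$, and if both endpoints are still free when $e$ arrives in the first block, \Greedy picks it. Conditioned on $e\in E^\pi[1,fm]$ the event $\mathcal{X}_e$ coincides with $e\in T_f$, so $\Pr_\pi[\mathcal{X}_e]\le\tfrac1f\Pr_\pi[e\in T_f]$, hence $\Pr_\pi[e\text{ has both endpoints free}]\le(\tfrac1f-1)\Pr_\pi[e\in T_f]$. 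Summing over $e\in\OPT$ and using $\E_\pi[|T_f\cap\OPT|]\le\E_\pi[|T_1\cap\OPT|]\le2\epsilon|\OPT|$ gives $\E[|U_f|]\le2(\tfrac1f-1)\epsilon|\OPT|$; accounting separately for the $\OPT$ edges already inside $T_f$ sharpens $\tfrac1f-1$ to $\tfrac1f-2$. Since the paper only cites~\cite{KMM-APPROX12} for this bipartite statement (and proves the matroid generalization in the appendix via exactly this coupling), deferring to that reference is acceptable as exposition, but the sketch you gave of the missing step is not the argument that makes it work.
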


\subsubsection{\MkGreedy  for bipartite matching: } \label{sec:algomatch}
~

\noindent \MkGreedy consists of two phases  (see the pseudocode). 
In Phase~(a), it runs \Greedy for the 
first $f$-fraction of the edges, but \emph{picks} each edge \emph{selected} by
\Greedy into the final matching only with probability $(1-p)$, where $p>0$ is a
constant. With the remaining probability $p$, it \emph{marks} the edge $e$ and
its vertices, and behaves as if it had been picked.  In Phase~(b), which is for
the remaining $1-f$ fraction of edges, the algorithm runs \Greedy to pick edges
on two restricted disjoint subgraphs $G_1$ and $G_2$, where  it only considers
edges incident to exactly one marked vertex in Phase~(a). \ifFULL(see
Figure~\ref{figOBME}.)\else \fi 

Phase~(a) is equivalent to running \Greedy to select elements, but then
randomly dropping  $p$ fraction of the selected edges.  The idea of marking
some vertices (by marking an incident edge) is to ``protect'' them for
augmentation in Phase~(b). To distinguish if an edge is marked or picked, the
algorithm uses auxiliary random bits $\K$ that are unknown to the adversary. We
assume that $\K(e) \sim \text{Bern}(1-p)$ i.i.d. for all $e \in E$.
\vspace{-0.5cm}
\begin{algorithm}
\caption{$\MkGreedy(G,E,\pi,m,\K)$}
\label{AlgA1}
\begin{algorithmic}[1]
    \Statex \textbf{Phase~(a)}
    \State Initialize $S, T, N_1, N_2 $ to $\emptyset$
    \For{each element $e \in E^{\pi}[1,fm]$} \Comment{\emph{\Greedy while picking and marking}}
       \If {$T \cup e $ is a matching in $G$}
            \State $T \gets T \cup e$ \Comment{\emph{Elements selected by \Greedy}}
            \If{$\K(e)=1$ }	\Comment{\emph{Auxiliary random bits $\K$}}
                \State $S  \gets S \cup e $ \Comment{\emph{Elements picked into final solution}}
            \EndIf
       \EndIf
    \EndFor
    \Statex \textbf{Phase~(b)}
    \State Initialize set $T_f$ to $T$. Let sets $X_1, X_2$ be vertices of  $U, V$ matched in $T_f$ respectively.
    \State Let $G_1$ be the subgraph of $G$ induced on $X_1$ and $V \setminus X_2$. 
    \State Let $G_2$ be the subgraph of $G$ induced on $U \setminus X_1$ and $X_2$.
    \For{ each edge $e \in (E^{\pi}[fm,m])$} \Comment{\emph{\Greedy on two disjoint subgraphs}}
    	\For{ $i \in \{1,2\}$}
            \If {$e \in G_i$ and $S\cup N_i \cup e $ is a matching }  \Comment{\Greedy step} 
                \State $N_i \gets N_i \cup e$  \Comment{\emph{New edges picked}}
            \EndIf
        \EndFor
    \EndFor
    \State \Return $S \cup N_1 \cup N_2$
\end{algorithmic}
\end{algorithm}

\paragraph{Comparison to Konrad et al.~\cite{KMM-APPROX12}}
For the special case of bipartite matching, we can consider \MkGreedy  to be a
variant of the streaming algorithm of \cite{KMM-APPROX12}.  For graphs where
\Greedy is bad,  both algorithms use the first phase to pick an approximately
maximal matching $T$ using {the Hastiness Lemma}.  \cite{KMM-APPROX12} divides
the remaining stream into two portions and uses each portion to find greedy
matchings, say $F_1$ and $F_2$. Since decisions in the streaming setting are
revocable, at the end of the stream they use  edges in $F_1 \cup F_2$ to  find
sufficient number of three-augmenting paths w.r.t.  $T$.  Their algorithm is
not online because it keeps all the matchings till the end.  
One can view the current algorithm as turning their algorithm  into
an online one by flipping a coin for each edge in $T$.  In the second phase,
it runs \Greedy on  two random disjoint
subgraphs and use {the Sampling Lemma} to argue that in expectation the
algorithm picks sufficient number of augmenting paths.  

While our online matching algorithm is simple and succinct, the main difficulty
lies in extending it to OMI as the notions of marking and protecting vertices
do not exist.  This is also the reason why obtaining a linear time algorithm
for offline matroid intersection problem, where  Hastiness Lemma is not needed,
had been open.  Defining and proving the correct form of Sampling Lemma forms
the core of our OMI analysis  in Section~\ref{section:matroids}.


\subsubsection{Proof that \MkGreedy works for bipartite matching:} 
\label{sec:analysisMatch}~

\noindent Let $G_i$ denote graphs $G_1$ or $G_2$ for $i\in\{1,2\}$. 
For a fixed order $\pi$ of the edges, graphs $G_i$ in \MkGreedy are independent
of the randomness $\K$. Since the algorithm uses $\K$ to pick a random subset
of the \Greedy solution, this can be viewed as independently sampling each
vertex matched by \Greedy in $G_i$. Lemma~\ref{lemma:samp} shows that this
suffices to pick in expectation more than the number of marked edges.  In
essence, we use the randomness $\K$ to limit the power of an  adversary
deciding the order of the edges in Phase~(b). While the proof follows from 
the more general Lemma~\ref{lem:matrmain}, we include a simple self-contained proof\ifFULL\else  for this case in the full version\fi. 


\begin{lemma}[Sampling Lemma\footnote{This special case of Lemma~\ref{lem:matrmain} for  bipartite matching  is also present in~\cite{KMM-APPROX12}. The authors of this paper thank Deeparnab Chakrabarty for pointing this  at IPCO 2017. }] \label{lemma:samp} 
    Consider a bipartite graph $H = ( X \cup Y, \tE)$  containing  a matching $\tI$.
    Let $\K(x) \sim \text{Bern}(1-p)$ i.i.d. for all $x\in X$, and define $X' = \{ x \mid x \in X \text{ and } \K(x)=0 \}.$ 
    I.e., the vertices of $X'$ are obtained by independently sampling each vertex in $X$ with probability $p$. 
    Let $H'$ denote the subgraph induced on $X' $ and $Y$. 
    Then  for any arrival order of the edges in $H'$,  
    \[ \E_{\K}[ \Greedy(H', \tE) ] \geq \frac{1}{1+p} \left( p |\tI| \right) . \]
\end{lemma}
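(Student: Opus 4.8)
Write $\tI=\{(x_1,y_1),\dots,(x_k,y_k)\}$ with $k=|\tI|$, $x_i\in X$, $y_i\in Y$; any other vertices of $H$ only make the inequalities below easier, so I track just these $2k$ vertices. Fix the arrival order and view $\Greedy$ as scanning the edges of $\tE$ in this order, adding an edge to the matching whenever it lies in $H'$ (equivalently, is incident to $X'$) and both endpoints are currently free; let $M$ be the resulting matching. Put $X'_I=\{i:x_i\in X'\}$, so $\E_{\K}|X'_I|=pk$. First I would prove a deterministic decomposition $X'_I=A\sqcup C$, where $A=\{i:x_i\in V(M)\}$ and $C=\{i:x_i\in X'\setminus V(M)\}$: if $x_i\in X'$ is unmatched by $M$ then the edge $(x_i,y_i)$ lies in $H'$ and greedy rejected it, which—since $x_i$ is never matched—forces $y_i\in V(M)$. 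The matched $X$-endpoints are distinct, so $|A|\le|M|$ pointwise, and therefore $pk=\E|A|+\E|C|\le\E|M|+\E|C|$. Hence it suffices to show $\E_{\K}|C|\le p\,\E_{\K}|M|$, which gives $\E_{\K}|M|\ge\tfrac{pk}{1+p}=\tfrac{1}{1+p}\bigl(p|\tI|\bigr)$, as required.

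The heart of the argument is a coupling (the ``carefully constructed invariant''). Fix $i$ and condition on all the coin flips except $\K(x_i)$; call this $\K_{-i}$. Let $M^{-i}$ and $M^{+i}$ be the greedy matchings obtained when $x_i$ is, respectively, excluded from or kept in $X'$; both are determined by $\K_{-i}$, and the two edge-sequences differ only in the edges incident to $x_i$, which appear in the run of $M^{+i}$ at their fixed positions. Running the two processes in lockstep, I would establish the invariant that at every time step $M^{+i}_t\triangle M^{-i}_t$ is either empty or a simple alternating path with one endpoint $x_i$. This is proved by a case analysis on the edge being processed: if it is incident to $x_i$, only the run of $M^{+i}$ can react; otherwise both react, and their decisions can differ only when an endpoint of the edge is the ``free end'' of the current difference path—the only vertex besides $x_i$ whose matched status can differ between $M^{+i}$ and $M^{-i}$, since $x_i$ is never matched in $M^{-i}$—and in each case one checks the symmetric difference remains a single path rooted at $x_i$. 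Two consequences fall out: \emph{(i)} if $x_i\notin V(M^{+i})$ then the difference path is empty, i.e.\ $M^{+i}=M^{-i}$; and \emph{(ii)} $V(M^{-i})\cap Y\subseteq V(M^{+i})\cap Y$, because along the difference path the $Y$-vertices matched by $M^{-i}$ are exactly its internal $Y$-vertices, whereas $M^{+i}$ matches those and possibly also the far endpoint.

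Given the coupling, bounding $\E|C|$ is short. Conditioned on $\K_{-i}$, the event $i\in C$ requires $x_i\in X'$ (probability $p$, independent of $\K_{-i}$) and then $x_i\notin V(M^{+i})$, whence by \emph{(i)} $M=M^{-i}$ and $y_i\in V(M^{-i})$; so $\Pr[i\in C\mid\K_{-i}]\le p\cdot\mathbf{1}[y_i\in V(M^{-i})]$. Moreover $\Pr[y_i\in V(M)\mid\K_{-i}]=(1-p)\,\mathbf{1}[y_i\in V(M^{-i})]+p\,\mathbf{1}[y_i\in V(M^{+i})]\ge\mathbf{1}[y_i\in V(M^{-i})]$ by \emph{(ii)}. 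Taking expectations over $\K_{-i}$ and summing over $i\in[k]$, $\E|C|=\sum_i\Pr[i\in C]\le p\sum_i\Pr[y_i\in V(M^{-i})]\le p\sum_i\Pr[y_i\in V(M)]\le p\,\E\,|V(M)\cap Y|=p\,\E|M|$, since $|V(M)\cap Y|=|M|$. This is exactly $\E|C|\le p\,\E|M|$, and combined with the first paragraph it finishes the proof.

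The hard part will be the path invariant in the coupling: one must enumerate the configurations in which greedy's decision on an edge differs between the $x_i$-present and $x_i$-absent runs, and check each time that the symmetric difference stays a single $x_i$-rooted path—in particular that it never closes into a cycle or spawns a second component. Once that bookkeeping is done, the rest is just linearity of expectation together with the elementary fact that greedy leaves no $H'$-edge with both endpoints free.
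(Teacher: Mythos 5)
Your proof is correct, but it takes a genuinely different route from the paper's. The paper proves Lemma~\ref{lemma:samp} by a short induction on $|\tI|$ (combined with a minimal-counterexample argument on $|X|$): it looks at the first arriving edge $e=(x,y)$, splits on whether $x\in X'$, and in each branch deletes one or two vertices and applies the induction hypothesis to a subgraph whose $\tI$-matching has dropped by one or two; the arithmetic $(1-p)\tfrac{p}{1+p}(|\tI|-1)+p\bigl(1+\tfrac{p}{1+p}(|\tI|-2)\bigr)=\tfrac{p}{1+p}|\tI|$ then closes the induction. Your argument is instead a global counting-plus-coupling proof: you split the sampled $\tI$-endpoints $X'_I$ into matched ($A$, with $|A|\le|M|$) and unmatched ($C$), reduce the whole lemma to the single inequality $\E|C|\le p\,\E|M|$, and establish that inequality via a per-index coupling in which the symmetric difference $M^{+i}\triangle M^{-i}$ is maintained as an alternating path rooted at $x_i$. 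I checked the two consequences you extract from the invariant --- $x_i\notin V(M^{+i})$ forces $M^{+i}=M^{-i}$, and $V(M^{-i})\cap Y\subseteq V(M^{+i})\cap Y$ --- and the way they combine with $\Pr[i\in C\mid\K_{-i}]\le p\cdot\mathbf{1}[y_i\in V(M^{-i})]$ and with $\sum_i\Pr[y_i\in V(M)]\le\E|M|$; the chain is sound, and the path invariant itself does survive the case analysis (no cycle can form because $x_i$ stays unmatched in $M^{-i}$, and no second component can spawn because only $x_i$ and the free end of the path have differing matched status between the two runs). The trade-off: the paper's induction is markedly shorter and needs no invariant at all, while your coupling is heavier but more structural --- it isolates exactly where the factor $p$ comes from (the event $x_i\in X'$) and where the extra $\tfrac{1}{1+p}$ comes from (the monotonicity $V(M^{-i})\cap Y\subseteq V(M^{+i})\cap Y$), and this style of argument is closer to the symmetric-difference invariants the paper itself uses for the full matroid version in Section~\ref{sec:misssamp}. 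The one thing you must actually write out to make this a complete proof is the case analysis verifying the path invariant, which you have only sketched; the rest is tight.
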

\ifFULL
\begin{proof} 
   We prove this statement by induction on $|\tI|$. 
   Consider the base case $|\tI|=1$. Whenever
   \Greedy does not select any edge, the vertex adjacent to $\tI$ in $X$ is not sampled. 
   This happens with probability $1-p$. Hence, the expected size of the matching is
   at least $p \geq \frac{p}{1+p}$, which implies the statement is true when $|\tI|=1.$

   From the induction hypothesis (I.H.) we can assume  the statement is true  
   when the matching size is at most $|\tI|-1$.
   We prove the induction step by contradiction and consider the smallest graph in terms of $|X|$ 
   that does not satisfy the statement.  Note that $|X| \geq |\tI|$.
   Consider the first edge $e=(x,y)$ that arrives. 
   The first case is when $x \not\in X'$ and it happens with probability $1-p$. Here any edge incident
   to $x$ does not matter for the remaining algorithm.   We use I.H. 
   on the subgraph induced on $(X\backslash x, Y)$ as
   $|X \backslash x| = (|X|-1)$. Since this subgraph has a matching of size at least $|\tI| -1$, I.H. gives a  matching of expected size at least
    $\frac{p}{1+p} ( |\tI|-1)$.

    The second case is when $x \in X'$  and it happens with probability $p$. Now edge $(x,y)$ is
    included in the \Greedy matching for the induced graph on $(X',Y)$.  Vertices
    $x$ and $y$, along with the edges incident to them, do not participate in the
    remaining algorithm. We apply I.H. on the subgraph
    induced on the vertices $(X\backslash x , Y\backslash y)$. Noting that
    this graph has a matching of size at least $|\tI|-2$, I.H. gives a matching of expected
    size at least $\frac{p}{1+p} (|\tI|-2)$. 
    Combining both cases, the expected matching size is at least \[(1-p)
        \left(\frac{p}{1+p} (|\tI|-1) \right) + p \left( 1 + \frac{p}{1+p} (|\tI|-2)
        \right) = \frac{p}{1+p} |\tI|. \]
    This is a contradiction as we assumed that the graph did not satisfy the induction statement, which completes the proof of Lemma~\ref{lemma:samp}.
\end{proof}
\else
\fi

We next prove the main lemma needed to prove Theorem~\ref{thm:onebit}. Setting $f=0.07$,
$p=0.36$, and $\epsilon = 0.001$ in Lemma~\ref{lemma:main}, the theorem
 follows by taking $\gamma > 0.05$. 
\begin{lemma} \label{lemma:main} 
    For any $0 < f< 1/2$  and bipartite graph $G$, \MkGreedy outputs a matching
    of expected size at least 
    \[ \left[ (1-p) \left( \frac12 - \left( \frac1f - 2 \right)\epsilon \right) +  \frac{p}{1+p}   \left( 1 - \frac{2\epsilon}{f} - f\right)  \right] |\OPT|. \]
\end{lemma}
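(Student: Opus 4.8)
Throughout, write $V(F)$ for the set of endpoints of an edge set $F$, let $T_1$ denote the matching produced by plain \Greedy on the whole stream (the paper's $T^\pi_1$ for a fixed $\pi$), and recall that $T_f$ only uses the first $fm$ edges so that $T_f \subseteq T_1$, $\E_\pi[|T_f|] = \G(f)\,|\OPT|$, and $\E_\pi[|T_1 \setminus T_f|] = (\G(1) - \G(f))\,|\OPT|$; we work under the standing hypothesis $\G(1) \le \frac12 + \epsilon$ of Theorem~\ref{thm:onebit}. First I would split \MkGreedy's output $S \cup N_1 \cup N_2$ into three vertex-disjoint pieces: $S \subseteq T_f$ is supported on $X_1 \cup X_2$, the subgraphs $G_1$ and $G_2$ are vertex-disjoint, and an edge added to $N_i$ avoids $V(S)$ by construction; hence $|S \cup N_1 \cup N_2| = |S| + |N_1| + |N_2|$ and each expectation can be bounded separately. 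For the first piece, conditioned on $\pi$ the set $T_f$ is independent of the auxiliary bits $\K$ and each of its edges enters $S$ independently with probability $1-p$, so $\E[|S|] = (1-p)\,\G(f)\,|\OPT| \ge (1-p)\bigl(\tfrac12 - (\tfrac1f-2)\epsilon\bigr)|\OPT|$ by the Hastiness property (Lemma~\ref{lemma:KMMmatch}). It then remains to show $\E[|N_1|+|N_2|] \ge \frac{p}{1+p}\bigl(1 - \tfrac{2\epsilon}{f} - f\bigr)|\OPT|$.

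\textbf{Reducing $N_1,N_2$ via the Sampling Lemma.} Fix $\pi$, which determines $T_f, X_1, X_2, G_1, G_2$. In Phase~(b) an edge $(u,v) \in G_1$ (so $u \in X_1$, $v \notin X_2$) is eligible for $N_1$ exactly when $u$ is not matched by $S$, i.e.\ when $\K$ is $0$ on the $T_f$-edge at $u$; this event is independent across $u \in X_1$ and has probability $p$. Thus $N_1$ is precisely the greedy matching (in the order induced by $\pi$) on the subgraph of $G_1$ on $(V \setminus X_2) \cup X_1'$, where $X_1' \subseteq X_1$ keeps each vertex independently with probability $p$. The Sampling Lemma (Lemma~\ref{lemma:samp}), applied with $X = X_1$, $Y = V\setminus X_2$ and any matching $\tilde I_1$ of $G_1$ made of Phase-(b) edges, gives $\E_\K[|N_1| \mid \pi] \ge \frac{p}{1+p}|\tilde I_1|$, and symmetrically $\E_\K[|N_2| \mid \pi] \ge \frac{p}{1+p}|\tilde I_2|$ for a Phase-(b) matching $\tilde I_2$ of $G_2$. (The bits $\K$ are shared between the two subgraphs, but only the marginals of $X_1'$ and $X_2'$ enter, so linearity of expectation suffices.) Taking expectations over $\pi$, it now suffices to exhibit matchings $\tilde I_1, \tilde I_2$ of this form with $\E_\pi[|\tilde I_1| + |\tilde I_2|] \ge \bigl(1 - \tfrac{2\epsilon}{f} - f\bigr)|\OPT|$.

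\textbf{The matching bound.} I would take $\tilde I_1 \cup \tilde I_2$ to be the set of $\OPT$-edges with \emph{exactly one} endpoint in $V(T_f) = X_1 \cup X_2$ that arrive in Phase~(b): such an edge with its $U$-endpoint in $X_1$ automatically has its $V$-endpoint outside $X_2$, hence lies in $G_1$ (put it in $\tilde I_1$), while the remaining ones lie in $G_2$; both are sub-matchings of $\OPT$ on Phase-(b) edges. To bound their total size, consider the decomposition of $T_f \cup \OPT$ into alternating paths and cycles. An $\OPT$-edge has exactly one endpoint in $V(T_f)$ iff it is the end-edge of a path at a $T_f$-unmatched end (a length-one augmenting path, i.e.\ an $\OPT$-edge isolated in $T_f \cup \OPT$, has \emph{no} endpoint in $V(T_f)$ and is excluded), and counting these gives that their number is $2A_{\ge 3} + E$, where $A_{\ge 3}$ counts $T_f$-augmenting paths of length $\ge 3$ and $E$ counts paths with equally many $T_f$- and $\OPT$-edges. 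Since every augmenting path contributes $+1$ to $|\OPT| - |T_f|$ while the other components contribute $\le 0$, the total number $A$ of augmenting paths satisfies $A \ge |\OPT| - |T_f|$, so with $A_1 := A - A_{\ge 3}$ (the number of isolated $\OPT$-edges) we get $2A_{\ge 3} + E \ge 2(|\OPT| - |T_f|) - 2A_1$. Finally, each isolated $\OPT$-edge must arrive in Phase~(b) (else $T_f$ would not be maximal among the Phase-(a) edges), and when plain \Greedy reaches it it either takes it (the edge then lies in $T_1 \setminus T_f$) or finds an endpoint blocked by an earlier Phase-(b) greedy edge (also in $T_1 \setminus T_f$, since isolated-edge endpoints lie outside $V(T_f)$); charging each isolated $\OPT$-edge to such an edge, at most two per edge, gives $A_1 \le 2|T_1 \setminus T_f|$. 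Combining these inequalities and taking expectations over $\pi$ (using also $\E_\pi|\OPT \cap \text{Phase (a)}| \le f|\OPT|$, the Hastiness bound on $\G(f)$, and $\G(1) \le \frac12 + \epsilon$), the key cancellation reads
\begin{align*}
\E_\pi\bigl[|\tilde I_1| + |\tilde I_2|\bigr]
&\ge \bigl(2 + 2\G(f) - 4\G(1)\bigr)|\OPT| - f\,|\OPT| \\
&\ge \Bigl(2 + 2\bigl(\tfrac12 - (\tfrac1f-2)\epsilon\bigr) - 4\bigl(\tfrac12+\epsilon\bigr) - f\Bigr)|\OPT|
\;=\; \Bigl(1 - \tfrac{2\epsilon}{f} - f\Bigr)|\OPT| .
\end{align*}
Feeding this into the previous paragraph and adding the bound on $\E[|S|]$ yields the lemma.

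The real work, and the step I expect to be the main obstacle, is that last paragraph: one must exploit that the bad-case hypothesis $\G(1)\le\frac12+\epsilon$ simultaneously pins $|T_f|$ near $\frac12|\OPT|$ \emph{and} forces plain greedy to gain almost nothing after Phase~(a), and then translate this into the claim that the decomposition of $T_f \cup \OPT$ carries $\approx |\OPT|$ ``single-touched'' $\OPT$-edges — in particular controlling the number $A_1$ of isolated $\OPT$-edges and making the $\G(\cdot)$-bookkeeping tight enough to land exactly on $1 - \tfrac{2\epsilon}{f} - f$.
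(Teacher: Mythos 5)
Your proposal is correct and follows essentially the same route as the paper: bound $\E[|S|]=(1-p)\G(f)|\OPT|$, apply the Sampling Lemma separately to $G_1$ and $G_2$ against Phase-(b) $\OPT$-sub-matchings $\tilde I_1,\tilde I_2$, and show $\E_\pi[|\tilde I_1|+|\tilde I_2|]\ge(1-\tfrac{2\epsilon}{f}-f)|\OPT|$. The one place you diverge in packaging is the last step: the paper proves Claim~\ref{claim:bipmatchsizeI} via Fact~\ref{fact:matching}, directly counting $\OPT$-vertices matched in $T_1$ and charging doubly-unmatched $\OPT$-edges to $T_1\setminus T_f$, while you reach the identical intermediate bound $2|\OPT|-4|T_1|+2|T_f|$ through the $T_f\cup\OPT$ alternating path/cycle decomposition ($2A_{\ge 3}+E\ge 2(|\OPT|-|T_f|)-2A_1$ and $A_1\le 2|T_1\setminus T_f|$); the two computations are isomorphic, and your charging of isolated $\OPT$-edges to $T_1\setminus T_f$ is exactly the paper's proof of Eq.~(\ref{eq:factT12}).
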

\begin{proof}
    We remind the reader that for any $f \in [0,1]$ and any permutation $\pi$ of the edges,  $T_f^{\pi}$     denotes the matching that \Greedy produces on $E^{\pi}[1,fm]$.    
    For $i \in \{1,2\}$, let $H_i$ denote the subgraph of $G_i$ containing all its edges that appear in Phase~(b). 
Let $\I_i$ denote the set of edges of $\OPT$ that appear in graph $G_i$. We use the following  claim\ifFULL\else ~proved in the full version\fi.
\begin{myclaim}\label{claim:bipmatchsizeI}
\[ \E_\pi \left[ |\I_1| + |\I_2| \right] \geq \left(1 - \frac{2\epsilon}{f} \right)\,|\OPT|. \]
\end{myclaim}
\ifFULL
\begin{proof}
We use the following two simple properties of $T_1^{\pi}$ (proved in Section~\ref{sec:missfacts}).
\begin{fact}\label{fact:matching}
    \label{fact:matching}
    \begin{align}
        |T_1^{\pi}| &\geq  \frac12 \left( |\OPT| + \sum_{e \in \OPT}  
                \mathbf{1}[\text{Both ends of $e$ matched in $T_f^{\pi}$}] \right) \text{ and }
                                    \label{eq:factT11}\\
        |T_1^{\pi}| &\geq |T_f^{\pi}| + \frac12 \sum_{e \in \OPT}  
            \mathbf{1}[\text{Both ends of $e$ unmatched in $T_f^{\pi}$}].  \label{eq:factT12} 
    \end{align}
\end{fact}

\noindent Note that, $\E_{\pi} \left[ |\I_1| + |\I_2| \right]$ is equal to
\begin{align*}
& \hspace{-0.7cm} \E_{\pi} \left[ |\OPT| - \sum_{e \in \OPT}  \mathbf{1}[\text{Both ends of $e$ matched in $T_f^{\pi}$}]  \right. \\
& \hspace{2.5cm} \left. - \sum_{e \in \OPT}  \mathbf{1}[\text{Both ends of $e$ unmatched in $T_f^{\pi}$}]  \right] \\
\geq~ &  |\OPT| - \E_{\pi} \left[ 2\,|T_1^{\pi}| - |\OPT| \right] -  \E_{\pi} \left[ 2(|T_1 ^{\pi}| - |T_f^{\pi}|)\right]  \tag{using Eq.~(\ref{eq:factT11}) and Eq.~(\ref{eq:factT12})} \\
\geq~ & |\OPT| - 2\epsilon\,|\OPT| - 2\left( \epsilon +  \left(\frac1f -2 \right) \epsilon \right)\,|\OPT| \tag{using $\G(1) \leq \frac12+\epsilon$ and Lemma~\ref{lemma:KMMmatch}} \\
=~ & \left(1 - \frac{2\epsilon}{f} \right)\,|\OPT|, \text{ which finishes the proof of the claim.}	 
\end{align*}
\end{proof}
\else
\fi

For $i \in \{1,2\}$, let $\tI_i \subseteq \I_i$ denote the set of edges of
$\OPT$ that appear in Phase~(b) of \MkGreedy, i.e., they appear in graph $H_i$.
In expectation over uniform permutation $\pi$, at most $f|\OPT|$ elements of
$\OPT$ can appear in Phase~(a). Hence, 
\[  \E_{\pi} \left[ |\tI_1| + |\tI_2| \right] \geq \E_{\pi} \left[ |\I_1| + |\I_2| \right] - f|\OPT| 
    \geq \left( 1 - \frac{2\epsilon}{f} - f\right)\,|\OPT|.\]
Marking a random subset of
    $T_f^{\pi}$ independently is equivalent to marking a random subset of vertices
    independently. Thus, we can apply Lemma~\ref{lemma:samp} to both $H_1$ and $H_2$.
    The expected number of edges in $N_1 \cup N_2$ is at least  $
    \frac{p}{1+p}(|\tI_1| + |\tI_2| )$, where the expectation is over the auxilary bits $\K$ that distinguishes
    the random set of edges marked. Taking expectations over $\pi$
    and noting that Phase~(a) picks $(1-p)\,\G(f)\,|\OPT|$ edges, we have 
 \begin{align*} 
&  \E_{\K,
            \pi}[|S\cup N_1 \cup N_2|] \quad = \quad \E_{\K, \pi}[|S|] +  \E_{\K, \pi}[|N_1| + |N_2|] \\
& \geq ~ \G(f)  (1-p)\,|\OPT| + \frac{p}{1+p} \E_{\pi} \left[|\tI_1| + |\tI_2| \right] \\
 &\geq ~ \left[ (1-p) \left( \frac12 - \left( \frac1f - 2 \right)\epsilon \right) +  \frac{p}{1+p}   \left( 1 - \frac{2\epsilon}{f} - f\right)  \right] |\OPT| \quad \left(\text{by Lemma~\ref{lemma:KMMmatch}} \right).  
\end{align*}
\end{proof}


\section{Online Matroid Intersection}\label{section:matroids}

\subsection{Definitions and Notation} \label{sec:matrnotation}
  An instance of the online matroid intersection problem 
$(\M_1, \M_2, E, {\pi},m)$ consists of matroids $\M_1$ and $\M_2$ defined on 
ground set $E$ of size $m$, and where the elements in  $E$ arrive according to the order defined by $\pi$.
For any $1 \leq i \leq j \leq m$, let $E^{\pi}[i,j]$  denote the
ordered set of elements of $E$ that arrive in positions $i$ through $j$ according to $\pi$.
For any matroid $\M$ on ground set $E$, we use $T \in \M$ to denote $T \subseteq E$ is an independent set in matroid $\M$.
We use the terminology of matroid restriction and matroid contraction as defined in Oxley~\cite{O-MT06}.
To avoid clutter, for any $e \in E$ we  abbreviate $A \cup \{e \}$ to $A \cup e$ and $A \setminus \{e\}$ to $A \setminus e$.  

\vspace{-0.5cm}
\begin{algorithm}
\caption{\Greedy$(\M_1,\M_2,E, \pi)$}
\label{AlgGreedy}
\begin{algorithmic}[1]
    \State Initialize set $T$ to $\emptyset$
    \For{each element $e \in E^{\pi}[1,|E|]$}
 	   \If{$T \cup e \in \M_1 \cap \M_2$ }
           \State $T  \gets T \cup e $
        \EndIf
    \EndFor\\
    \Return $T$
\end{algorithmic}
\end{algorithm}
\vspace{-0.5cm}

We note that \Greedy is well defined even when matroids $\M_1$ and $\M_2$ are defined on larger ground sets as long as they contain $E$. This notation will be useful when we run \Greedy on matroids after contracting different sets in the two matroids.
Since \Greedy always produces a maximal independent set, its  competitive ratio is at least half~(see Theorem~13.8 in~\cite{KV08}). This is  because  an ``incorrect'' element  creates at most two circuits in $\OPT$, one for each matroid.

 Let $\OPT$ denote a fixed maximum offline independent set in the intersection
of both the matroids.  For $f \in [0,1]$, let $T^{\pi}_f$
denote the independent set that \Greedy produces after seeing the
first $f$ fraction of the edges according to order $\pi$. When clear from context,
we will often abbreviate $T^{\pi}_f$ with $T_f$.  
Let   $\G(f):= \frac{ \E_{\pi}[ |T_f| ] }{|\OPT|}$, where $\pi$ is a uniformly random chosen order.

For $i \in \{1,2\}$, let $\cl_i(T) := \{ e \mid (e\in E) \wedge
\left(\rank_{\Mi}(T\cup e) = \rank_{\Mi}(T) \right) \}$  denote the span of set
$T \subseteq E$ in matroid $\Mi$. Suppose we have  $T \in \Mi$ and $e \in \cl_i(T)$, then we denote the
unique circuit of $T \cup e$ in matroid $\Mi$ by $C_i(T\cup e)$. If
$i=1$, we use $\thickbar{\imath}$ to denote $2$, and vice versa.  

\ifFULL We provide a table of all notation used in Section~\ref{sec:notation}.
\else \fi

\subsection{Hastiness Property}

Before describing our algorithm \MkGreedy, we need an important hastiness property of \Greedy in the random arrival model. Intuitively, it states that if \Greedy's performance is bad then it
makes most of its decisions quickly and incorrectly.  This observation was
first made by Konrad et al.~\cite{KMM-APPROX12} in the special case of
bipartite matching. We extend this property to
matroids in Lemma~\ref{lemma:KMM} (proof in \ifFULL Section~\ref{sec:lemhastpf}\else the full version\fi). We
are interested in the regime where $0 < \epsilon \ll f \ll 1$.

\begin{lemma}[Hastiness Lemma] \label{lemma:KMM} 
  For any two matroids $\M_1$ and $\M_2$ on the same ground set $E$,  let
  $T_f^{\pi}$ denote the set selected by \Greedy after running for the first
  $f$ fraction of elements $E$ appearing in order $\pi$. Also, for $i\in
  \{1,2\}$, let $\Phi_i(T_f^{\pi}) := \cl_i(T_f^{\pi}) \cap \OPT$.     Now for
  any $0 < f, \epsilon \leq \frac12$, if $\E_{\pi}[|T_1^{\pi}|]  \leq
  (\frac{1}{2}+ \epsilon)\,|\OPT|$ then 
    \begin{align*}
     & \E_{\pi} \left[ |\Phi_1(T_f^{\pi}) \cap \Phi_2(T_f^{\pi})| \right]  
            \leq 2\epsilon\,|\OPT| \qquad \text{and} \\
     & \E_{\pi} \left[ |\Phi_1(T_f^{\pi}) \cup \Phi_2(T_f^{\pi})| \right]  
            \geq \left(1 -  \frac{2\epsilon}{f} + 2 \epsilon \right)\,|\OPT|.   
      \end{align*}
    This implies  $\G(f) := \frac{\E_{\pi}[|T_f^{\pi}|] }{|\OPT|} \geq \left(\frac{1}{2} - 
                \left( \frac{1}{f} -2 \right)\epsilon \right)$.       
 \end{lemma}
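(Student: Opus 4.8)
The plan is to reduce both displayed inequalities to the case $f=1$ and then exploit a concavity property of the rate at which \Greedy absorbs $\OPT$. For $f=1$: the output $T_1^{\pi}$ is a maximal common independent set, so every $o\in\OPT$ lies in $\cl_1(T_1^{\pi})\cup\cl_2(T_1^{\pi})$ (if $o\notin T_1^{\pi}$ it could otherwise be added, and if $o\in T_1^{\pi}$ this is trivial), i.e.\ $\Phi_1(T_1^{\pi})\cup\Phi_2(T_1^{\pi})=\OPT$; also $\Phi_i(T_1^{\pi})=\cl_i(T_1^{\pi})\cap\OPT$ is an independent set of $\M_i$ inside a flat of rank $\rank_{\M_i}(T_1^{\pi})=|T_1^{\pi}|$, so $|\Phi_i(T_1^{\pi})|\le|T_1^{\pi}|$. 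Hence $|\Phi_1(T_1^{\pi})\cap\Phi_2(T_1^{\pi})|=|\Phi_1(T_1^{\pi})|+|\Phi_2(T_1^{\pi})|-|\OPT|\le 2|T_1^{\pi}|-|\OPT|$, and taking expectations with the hypothesis gives $\E_{\pi}[|\Phi_1(T_1^{\pi})\cap\Phi_2(T_1^{\pi})|]\le 2\epsilon|\OPT|$. Since an $\OPT$-element picked by \Greedy is spanned by $T_1^{\pi}$ in both matroids, $\OPT\cap T_1^{\pi}\subseteq\Phi_1(T_1^{\pi})\cap\Phi_2(T_1^{\pi})$, so also $\E_{\pi}[|\OPT\cap T_1^{\pi}|]\le 2\epsilon|\OPT|$. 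The first displayed inequality now follows for every $f$: \Greedy only adds elements, so $T_f^{\pi}\subseteq T_1^{\pi}$, hence $\cl_i(T_f^{\pi})\subseteq\cl_i(T_1^{\pi})$, $\Phi_i(T_f^{\pi})\subseteq\Phi_i(T_1^{\pi})$, and therefore $\Phi_1(T_f^{\pi})\cap\Phi_2(T_f^{\pi})\subseteq\Phi_1(T_1^{\pi})\cap\Phi_2(T_1^{\pi})$.

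For the second inequality, write $D:=\OPT\setminus(\cl_1(T_f^{\pi})\cup\cl_2(T_f^{\pi}))$, so $|\Phi_1(T_f^{\pi})\cup\Phi_2(T_f^{\pi})|=|\OPT|-|D|$ and it suffices to prove $\E_{\pi}[|D|]\le(\tfrac{2\epsilon}{f}-2\epsilon)|\OPT|$. If an $\OPT$-element $o$ arrives among the first $fm$ elements then, at the step it is processed, \Greedy either adds it or finds it already spanned by a subset of $T_f^{\pi}$, so in either case $o\in\cl_1(T_f^{\pi})\cup\cl_2(T_f^{\pi})$; hence $o\in D$ forces $\pi^{-1}(o)>fm$, and in fact $o\in D$ iff $\pi^{-1}(o)>fm$ and $o$ is addable (i.e.\ $T_f^{\pi}\cup\{o\}$ is independent in both matroids). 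Conditioned on $\pi^{-1}(o)>fm$, the first $fm$ arrivals form a uniformly random ordered $fm$-subset of $E\setminus\{o\}$ independently of the exact position of $o$, so $\Pr_{\pi}[o\in D]=(1-f)\,q_o(fm)$, where $q_o(k)$ is the probability that $o$ is addable after \Greedy is run on a uniformly random ordered $k$-subset of $E\setminus\{o\}$. Setting $\alpha(k):=\E_{\pi}[|\OPT\cap T_k^{\pi}|]$, where $T_k^{\pi}$ is \Greedy's set after $k$ steps, the identical conditioning applied to "$o$ arrives at step $k+1$'' gives $\alpha(k+1)-\alpha(k)=\tfrac1m\sum_{o\in\OPT}q_o(k)$, whence $\E_{\pi}[|D|]=(1-f)\,m\,\bigl(\alpha(fm+1)-\alpha(fm)\bigr)$.

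It remains to bound this increment. The sequence $\alpha$ is nondecreasing with $\alpha(0)=0$ and $\alpha(m)=\E_{\pi}[|\OPT\cap T_1^{\pi}|]\le 2\epsilon|\OPT|$ by the first paragraph; the crucial point is that it is concave. Indeed $q_o(k)$ is nonincreasing in $k$: truncating a uniformly random ordered $(k+1)$-subset of $E\setminus\{o\}$ to its first $k$ elements gives a uniformly random ordered $k$-subset, and \Greedy's output on the full $(k+1)$-subset contains its output on the truncation (they agree for $k$ steps, and one further step can only enlarge it), so addability with respect to the larger output implies addability with respect to the smaller; summing over $o\in\OPT$ shows the increments $\alpha(k+1)-\alpha(k)$ are nonincreasing. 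Therefore $(k+1)\bigl(\alpha(k+1)-\alpha(k)\bigr)\le\alpha(k+1)-\alpha(0)\le\alpha(m)\le 2\epsilon|\OPT|$, so $\alpha(fm+1)-\alpha(fm)\le\tfrac{2\epsilon|\OPT|}{fm}$ and $\E_{\pi}[|D|]\le\tfrac{2\epsilon(1-f)}{f}|\OPT|=(\tfrac{2\epsilon}{f}-2\epsilon)|\OPT|$, giving the second inequality. Finally, the flat-rank bound $|\Phi_i(T_f^{\pi})|\le|T_f^{\pi}|$ yields $2|T_f^{\pi}|\ge|\Phi_1(T_f^{\pi})|+|\Phi_2(T_f^{\pi})|\ge|\Phi_1(T_f^{\pi})\cup\Phi_2(T_f^{\pi})|$, so $\G(f)=\E_{\pi}[|T_f^{\pi}|]/|\OPT|\ge\tfrac12\bigl(1-\tfrac{2\epsilon}{f}+2\epsilon\bigr)$, which gives the displayed lower bound on $\G(f)$.

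The whole difficulty is concentrated in the second inequality: the random order cannot be avoided, because purely closure-theoretic manipulations are circular — the term $|D|$ cancels against itself. The two delicate ingredients are the exchangeability identity $\E_{\pi}[|D|]=(1-f)m\bigl(\alpha(fm+1)-\alpha(fm)\bigr)$, obtained by conditioning on the position of a single fixed $\OPT$-element, and the concavity of $k\mapsto\alpha(k)$, which is the step that genuinely uses that \Greedy is monotone under extending its input prefix; once these are in place, the bound $\alpha(m)\le 2\epsilon|\OPT|$ coming from the $f=1$ analysis does the rest.
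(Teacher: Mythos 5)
Your proofs of the two displayed inequalities are correct, and the argument for the second one is a genuinely different packaging of the key estimate than the paper uses. The paper proves a per-element bound $\Pr_{\pi}[e\notin\cl_1(T_f^{\pi})\cup\cl_2(T_f^{\pi})]\le(\tfrac1f-1)\Pr_{\pi}[e\in T_f^{\pi}]$ via a reinsertion map on permutations (move $e$ to a uniformly random position in the first $fm$ slots and observe that the bad event is preserved), and then feeds this into the inclusion--exclusion identity. You instead set $\alpha(k):=\E_{\pi}[|\OPT\cap T_k^{\pi}|]$, show by a coupling argument (truncate a random ordered $(k{+}1)$-prefix, use monotonicity of $\Greedy$ under prefix extension and downward closure of independence) that the addability probabilities $q_o(k)$ are nonincreasing, and conclude that $\alpha$ is concave; the exchangeability identity $\E_{\pi}[|D|]=(1-f)m\bigl(\alpha(fm{+}1)-\alpha(fm)\bigr)$ then closes the argument. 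This is a clean structural reformulation of the same underlying exchangeability fact (indeed, $q_o$ nonincreasing plus $\Pr[e\in T_f]=\tfrac1m\sum_{k<fm}q_o(k)\ge f\,q_o(fm)$ recovers exactly the paper's per-element claim), so the proofs are close in spirit but differ in presentation.

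However, the last step has a concrete error. You conclude $\G(f)\ge\tfrac12\bigl(1-\tfrac{2\epsilon}{f}+2\epsilon\bigr)=\tfrac12-\bigl(\tfrac1f-1\bigr)\epsilon$ and assert this ``gives the displayed lower bound,'' but the lemma claims the stronger $\G(f)\ge\tfrac12-\bigl(\tfrac1f-2\bigr)\epsilon$, which exceeds your bound by $\epsilon$. The loss occurs because you pass through $|\Phi_1(T_f^{\pi})|+|\Phi_2(T_f^{\pi})|\ge|\Phi_1(T_f^{\pi})\cup\Phi_2(T_f^{\pi})|$, discarding the intersection term. The paper instead keeps $|\Phi_1(T_f^{\pi})\cap\Phi_2(T_f^{\pi})|\ge|T_f^{\pi}\cap\OPT|$ (since $T_f^{\pi}\subseteq\cl_1(T_f^{\pi})\cap\cl_2(T_f^{\pi})$) together with the \emph{relation} $\E_{\pi}[|D|]\le(\tfrac1f-1)\E_{\pi}[|T_f^{\pi}\cap\OPT|]$, so that
\begin{equation*}
\E_{\pi}\bigl[|\Phi_1(T_f^{\pi})|+|\Phi_2(T_f^{\pi})|\bigr]\ \ge\ |\OPT|+\E_{\pi}[|T_f^{\pi}\cap\OPT|]-\E_{\pi}[|D|]\ \ge\ |\OPT|-\Bigl(\tfrac1f-2\Bigr)\E_{\pi}[|T_f^{\pi}\cap\OPT|],
\end{equation*}
and only then substitutes the upper bound $\E_{\pi}[|T_f^{\pi}\cap\OPT|]\le2\epsilon|\OPT|$ (this substitution is valid because $\tfrac1f-2\ge0$ when $f\le\tfrac12$). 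Dividing by two gives the stated $\tfrac12-\bigl(\tfrac1f-2\bigr)\epsilon$. Your machinery does in fact deliver the needed relation: concavity gives $\alpha(fm{+}1)-\alpha(fm)\le\alpha(fm)/(fm)$, hence $\E_{\pi}[|D|]\le\tfrac{1-f}{f}\alpha(fm)=(\tfrac1f-1)\E_{\pi}[|T_f^{\pi}\cap\OPT|]$; you should carry this through instead of immediately plugging in $\alpha(m)\le2\epsilon|\OPT|$, and you should bound $2|T_f^{\pi}|$ by $|\Phi_1|+|\Phi_2|$ with the intersection kept, not by the union alone.
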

                

\subsection{Beating Half for Online Matroid Intersection}

Once again, we use Lemma~\ref{lem:reduc} to restrict our
attention to the case when the expected size of \Greedy is small. 
In Theorem~\ref{thm:onebitmat},  we give an
algorithm that beats half for this restricted case, which when combined with
Lemma~\ref{lem:reduc} finishes the proof of Theorem~\ref{thm:matroidmain}.
%

\begin{theorem} \label{thm:onebitmat}
    For any two matroids $\M_1$ and $\M_2$ on the same ground set $E$, there 
 exist  constants $\epsilon, \gamma>0$ and a randomized online algorithm \MkGreedy  such that if $\G(1) \leq
    \left(\frac{1}{2}+\epsilon \right)$ then $\MkGreedy$ outputs an independent
    set in the intersection of both the matroids of expected size at least $\left(
    \frac{1}{2}+\gamma \right)|\OPT|$.  
\end{theorem}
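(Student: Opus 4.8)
Since Lemma~\ref{lem:reduc} has already reduced the general statement (Theorem~\ref{thm:matroidmain}) to the regime $\G(1)\le\tfrac12+\epsilon$, it suffices to exhibit the algorithm \MkGreedy directly, and the plan is to lift the bipartite warmup of Section~\ref{section:matching} by replacing ``marking vertices'' with contraction. In Phase~(a), \MkGreedy runs \Greedy on the first $f$-fraction of $E$, obtaining $T:=T_f$, and uses the auxiliary bits $\K$ to split $T$ into a \emph{picked} set $S=\{e\in T:\K(e)=1\}$ (each element of $T$ enters $S$ independently with probability $1-p$) and a \emph{marked} set $T\setminus S$. In Phase~(b), over the remaining $(1-f)$-fraction, it grows two sets $N_1,N_2$, initially empty: an arriving $e$ is added to $N_1$ if $e\in\cl_1(T)\setminus\cl_2(T)$, $S\cup N_1\cup e\in\M_1$, and $T\cup N_1\cup e\in\M_2$; symmetrically $e$ is added to $N_2$ if $e\in\cl_2(T)\setminus\cl_1(T)$, $S\cup N_2\cup e\in\M_2$, and $T\cup N_2\cup e\in\M_1$ (the two triggers are mutually exclusive). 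The output is $S\cup N_1\cup N_2$, using only $O(m)$ oracle calls.

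\textbf{Feasibility.} I would first record an elementary matroid fact: if $A\in\M$ with $A\subseteq\cl(T)$ for some $T\in\M$, and $B$ satisfies $T\cup B\in\M$ and $B\cap\cl(T)=\emptyset$, then $A\cup B\in\M$ --- indeed, any circuit $C\subseteq A\cup B$ must meet $B$ in some $b$, whence $b\in\cl(A\cup(B\setminus b))\subseteq\cl(\cl(T)\cup(B\setminus b))=\cl(T\cup(B\setminus b))$, contradicting $T\cup B\in\M$. By construction $N_1\subseteq\cl_1(T)\setminus\cl_2(T)$ and $N_2\subseteq\cl_2(T)\setminus\cl_1(T)$, so applying the fact with $A=S\cup N_1$, $B=N_2$ gives $S\cup N_1\cup N_2\in\M_1$, and with $A=S\cup N_2$, $B=N_1$ gives $S\cup N_1\cup N_2\in\M_2$; hence the output is a common independent set. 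This is exactly where restricting each $N_i$ to elements spanned by $T$ in \emph{exactly one} matroid is used.

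\textbf{Size.} Over $\K$ we have $\E_\K[|S|]=(1-p)|T_f|$, so taking expectations over $\pi$ and invoking the Hastiness Lemma~\ref{lemma:KMM}, $\E[|S|]\ge(1-p)\bigl(\tfrac12-(\tfrac1f-2)\epsilon\bigr)|\OPT|$. For Phase~(b), I would fix $\pi$, note that $S$ is a uniform independent sample of $T_f$, and apply the Sampling Lemma (Lemma~\ref{lem:matrmain}) separately to the two Phase-(b) restricted instances, on which \Greedy computes exactly $N_1$ and $N_2$: this yields $\E_\K[|N_1|+|N_2|]\ge\tfrac1{1+p}\E_\K\bigl[|\OPT(\M_1/S,\M_2/T,\cdot)|+|\OPT(\M_2/S,\M_1/T,\cdot)|\bigr]$. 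To lower bound the contracted optima I would use a short rank count: $\rank_{\M_2/T}(\OPT)\ge|\OPT|-|T|$; in $\M_2/T$ the only non-loops of $\OPT$ lie in $\cl_1(T)\setminus\cl_2(T)$ together with $\OPT\setminus(\cl_1(T)\cup\cl_2(T))$, the latter of expected size $\le(\tfrac{2\epsilon}{f}-2\epsilon)|\OPT|$ by the Hastiness Lemma, so $\rank_{\M_2/T}$ of the OPT-elements in $\cl_1(T)\setminus\cl_2(T)$ is $\ge|\OPT|-|T|-(\tfrac{2\epsilon}{f}-2\epsilon)|\OPT|$; a basis $I$ of that set in $\M_2/T$ stays independent in $\M_1$ and lies in $\cl_1(T)$, so the elementary sampling bound $\E_S[\rank_{\M_1/S}(I)]\ge p\,|I|$ (valid whenever $T\in\M_1$ spans $I$ and $I\cap T=\emptyset$) shows $\E_S[|\OPT(\M_1/S,\M_2/T,\cdot)|]$ is at least $p$ times that rank, minus the $\le f|\OPT|$ OPT-elements falling in Phase~(a); symmetrically for $N_2$. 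Combining, $\E[|N_1|+|N_2|]\ge\tfrac{p}{1+p}\bigl(2-2\G(1)-2f-O(\epsilon/f)\bigr)|\OPT|$, which has the same shape as Lemma~\ref{lemma:main} with somewhat larger lower-order corrections. Summing the two contributions, substituting $\G(f)\ge\tfrac12-(\tfrac1f-2)\epsilon$ and $\G(1)\le\tfrac12+\epsilon$, and optimizing $f,p,\epsilon$ with $0<\epsilon\ll f\ll1$ gives a constant $\gamma>0$, proving Theorem~\ref{thm:onebitmat}; together with Lemma~\ref{lem:reduc} this yields Theorem~\ref{thm:matroidmain}.

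\textbf{The main obstacle.} The genuinely new ingredient is the Sampling Lemma for matroid intersection, Lemma~\ref{lem:matrmain}. Unlike the bipartite case---where the OPT-edges in a Phase-(b) subgraph are automatically a valid matching there---the OPT-elements of a Phase-(b) region need not form a common independent set of the contracted pair, so the lemma must be stated intrinsically: \Greedy on a \emph{randomly contracted} pair $(\M_1/S,\M_2/T)$ is within a $1/(1+p)$ factor of the optimum of that very pair, \emph{for every} arrival order. Its proof is the technical heart --- an alternate ``method of deferred decisions'' view of \Greedy on the random instance, controlled by a carefully maintained invariant (carried out in Section~\ref{section:matroids}). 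Making its two applications (one per contraction) compatible, so that $N_1$ and $N_2$ do not conflict, is precisely what forces the restriction to elements in the span of $T$ for exactly one matroid; everything else is a faithful transcription of the matching argument.
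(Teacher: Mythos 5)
Your proposal follows essentially the same approach as the paper: identical two-phase algorithm, feasibility via a matroid exchange fact combined with the ``span of $T$ in exactly one matroid'' restriction, Hastiness Lemma for Phase~(a), and the Sampling Lemma as a black box for Phase~(b). The small deviations are: (i)~your exchange fact carries the extra hypothesis $B\cap\cl(T)=\emptyset$, which makes its proof a direct circuit argument, whereas the paper's Fact~\ref{swapbases} is stated without that hypothesis and proved by a step-wise replacement; both are fine and both apply here since $N_2\cap\cl_1(T_f)=\emptyset$. (ii)~For the analogue of Claim~\ref{claim:propSetI}, you use a rank-counting argument --- $\rank_{\M_2/T}(\OPT)\geq|\OPT|-|T|$, subadditivity, and the loop observation in $\M_2/T$ --- together with the sampling bound $\E_S[\rank_{\M_1/S}(I)]\geq p|I|$, whereas the paper explicitly constructs auxiliary sets $X_{\bar\imath}\subseteq T_f$ and removes $|X_{\bar\imath}|$ elements from $\I_i$. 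Your version is somewhat cleaner and lands essentially the same lower bound (your $\G(1)$ should really be $\E[|T_f|]/|\OPT|=\G(f)$, though $\G(f)\leq\G(1)$ keeps the inequality valid). One imprecision worth flagging: you invoke Lemma~\ref{lem:matrmain} as if it directly yields $\E_\K[\Greedy(\M_i/S,\M_{\bar\imath}/T,\cdot)]\geq\tfrac{1}{1+p}\E_\K[|\OPT(\M_i/S,\M_{\bar\imath}/T,\cdot)|]$; the formal statement instead lower-bounds $\E_\K[\Greedy]$ by $\tfrac{p}{1+p}|\tI|$ for a \emph{fixed} $\tI\in\M_i\cap(\M_{\bar\imath}/T_f)$ with $\tI\subseteq\tE$. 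Once you have your basis $I_i$ (restricted to Phase~(b) elements, losing $\leq f|\OPT|$ in expectation), you can simply feed it in as $\tI$; the detour through $\OPT$ of the contracted pair and the $\E_S[\rank_{\M_1/S}(I)]\geq p|I|$ estimate is unnecessary, though the latter is true (take a basis $B$ of $(\M_1/I)|_T$ and use $\rank(S)\geq|B\cap S|$). None of this affects correctness; the proposal is a faithful reconstruction of the paper's argument with a marginally different execution of the disjoint-sets claim.
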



\subsubsection{\MkGreedy for OMI:} \label{sec:matrOutline}~
\vspace{-0.5cm}
\begin{algorithm}
\caption{\MkGreedy$(\M_1,\M_2,E, \pi, m, \K)$}
\label{AlgA}
\begin{algorithmic}[1]
   \Statex \textbf{Phase~(a)}
    \State Initialize $S,T$ to $\emptyset$
    \For{each element $e \in E^{\pi}[1,fm]$}		\Comment{\emph{\Greedy while picking and marking}}
        \If{ $T \cup e \in \M_1 \cap \M_2 $}
        \State $T \gets T \cup e$	\Comment{\emph{Elements selected by \Greedy}}
            \If{ $\psi(e)=1$ }	\Comment{\emph{Auxiliary random bits $\K$}}
                \State $S \gets  S \cup  e$	\Comment{\emph{Elements picked into the final solution}}
            \EndIf
        \EndIf
    \EndFor
    \Statex \textbf{Phase~(b)}
    \State Fix ${T_f}$ to $T$ and initialize sets ${N}_1, {N}_2$ to $\emptyset$
    \For{each element $e \in E^{\pi}[fm,m]$}	\Comment{\emph{\Greedy on two disjoint problems}}
    	\For { $i \in \{1,2\} $ }
            \If { $e \in \cl_i(T_f )$ and $ e \notin \cl_{\thickbar{\imath}}(T_f)$ } \label{algoB:span} \Comment{\emph{To ensure disjointness}}
                \If { $(S\cup \Ni \cup  e  \in \Mi )$ and 
                            $( T_f\cup \Ni \cup e \in \Mni)$} \Comment{\Greedy step} \label{SamAlg:claimprecon}
                    \State $\Ni \gets \Ni \cup e$	\Comment{\emph{Newly picked elements}}
                \EndIf
            \EndIf
        \EndFor
    \EndFor \\
    \Return $(S \cup {N}_1 \cup {N}_2)$
\end{algorithmic}
\end{algorithm}
\vspace{-0.5cm}

\noindent \MkGreedy consists of two phases\ifFULL (see notation in Section~\ref{sec:notation})\else\fi. In Phase~(a), it runs \Greedy for the 
first $f$ fraction of the elements, but \emph{picks} each element \emph{selected} by
\Greedy into the final solution only with probability $(1-p)$, where $p>0$ is a constant. With the remaining probability $p$, it
\emph{marks} the element $e$, and behaves as if it had been
selected. The idea of marking some elements in Phase~(a) is that we hope to ``augment'' them in Phase~(b). To distinguish if an element is marked or picked,
the algorithm uses auxiliary random bits $\K$ that are unknown to the adversary. We
assume that $\K(e) \sim \text{Bern}(1-p)$ i.i.d. for all $e \in E$.

In Phase~(b), one needs to ensure that the augmentations of the marked elements
do not conflict with each other.  The crucial idea is to use the span of the
elements selected by \Greedy in Phase~(a) as a proxy to find two random disjoint OMI
subproblems.  The following Fact~\ref{swapbases} \ifFULL  (proof in Section~\ref{sec:missfacts}) \else \fi underlies this intuition. 
It states that given any independent set $S$, we can substitute it
 by any other independent set contained in the span of $S$. In
Lemma~\ref{lem:correctness} we use it to prove the correctness of \MkGreedy. \ifFULL \else Both Fact~\ref{swapbases} and Lemma~\ref{lem:correctness} are proved in the full version. \fi

\begin{fact}     \label{swapbases}
    Consider any matroid $\M$ and independent sets $A,B,C \in \M$  such that $A \subseteq \cl_{\M} (B)$ and 
    $B \cup C \in \M$. Then, $A \cup C \in \M$. 
\end{fact}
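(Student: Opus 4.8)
The plan is to argue directly from the closure/rank axioms of a matroid. Since $A \subseteq \cl_\M(B)$ and $B$ is independent, we have $\rank_\M(A \cup B) = \rank_\M(B) = |B|$; in particular $B$ is a basis of the set $A \cup B$ in $\M$. The natural first move is to restrict attention to the matroid $\M$ restricted to the ground set $A \cup B \cup C$, so that all sets in play are spanning/comparable, and then to build the desired independent set $A \cup C$ inside the span of $B \cup C$ using the basis-exchange / augmentation property.

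The key steps, in order, would be: (1) Observe $\cl_\M(A \cup B) = \cl_\M(B)$, so $C \subseteq \cl_\M(B \cup C) = \cl_\M(A \cup B \cup C)$ and hence $\rank_\M(A \cup B \cup C) = \rank_\M(B \cup C) = |B \cup C| = |B| + |C|$ (the last equality since $B \cup C \in \M$). (2) Now $A \cup C$ is a subset of $A \cup B \cup C$, a set of rank $|B| + |C|$. I would show $A \cup C$ is independent by a counting/augmentation argument: start from the independent set $C$ and repeatedly augment it toward a basis of $A \cup B \cup C$ using elements of $A$, using the fact that $A \cup B \cup C$ has rank $|B| + |C| \ge |C| + |A|$ only when... — actually this inequality need not hold, so instead I would argue as follows. (3) Extend $C$ to a basis $B'$ of $A \cup B \cup C$ inside $A \cup B \cup C$; since $\rank_\M(A \cup B \cup C) = |B| + |C|$ and $C \subseteq \cl_\M(B)$ forces no element of $B$ to be "wasted," one shows that $B'$ can be taken of the form $C \cup B''$ with $B'' \subseteq B$, $|B''| = |B|$, i.e. $B'' = B$. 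Then $B \cup C$ is a basis of $A \cup B \cup C$, so $A \cup B \cup C \subseteq \cl_\M(B \cup C)$, and in particular every element of $A$ lies in $\cl_\M(B \cup C) = \cl_\M(B \cup C)$. Now apply the given hypothesis $B \cup C \in \M$ together with $A \subseteq \cl_\M(B) \subseteq \cl_\M(B \cup C)$: this is exactly the hypothesis of the standard fact that an independent set plus elements in its span that keep independence can be swapped — concretely, $|A \cup C| \le \rank_\M(A \cup C) \le \rank_\M(\cl_\M(B \cup C)) = |B \cup C|$, and a cardinality comparison combined with $A \cap C$-disjointness-of-spans finishes it.

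Let me give the cleaner route I would actually write down. The hypotheses give $\rank_\M(A \cup C) \le \rank_\M(\cl_\M(B) \cup C) = \rank_\M(B \cup C) = |B| + |C|$ — this bound is too weak. The honest cleanest argument: induct on $|A|$. If $A = \emptyset$ we are done since $C \in \M$. Otherwise pick $a \in A$. By the exchange axiom applied to the independent set $(A \setminus a) \cup C$ (independent by induction) and the independent set $B \cup C$: if $|(A\setminus a) \cup C| < |B \cup C|$ there is $x \in (B \cup C) \setminus ((A \setminus a)\cup C) \subseteq B$ with $(A \setminus a) \cup C \cup x \in \M$. Repeating, $(A \setminus a) \cup C$ extends to a set $(A\setminus a) \cup C \cup B_0$ with $B_0 \subseteq B$ and $|(A\setminus a)\cup C \cup B_0| = |B\cup C|$; since this set spans $B$ (it has full rank in $A \cup B \cup C$, whose rank is $|B\cup C|$ and which contains $B$), we get $a \in \cl_\M(B) \subseteq \cl_\M((A\setminus a)\cup C \cup B_0)$, so adding $a$ keeps the rank at $|B \cup C|$; but then, since $(A \setminus a) \cup C$ is independent and $a \notin (A\setminus a)\cup C$, I claim $(A \setminus a) \cup C \cup a = A \cup C$ is independent — this needs $a \notin \cl_\M((A \setminus a)\cup C)$, which is the crux. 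The main obstacle is precisely showing $a \notin \cl_\M((A\setminus a) \cup C)$: it follows because $A \in \M$ gives $a \notin \cl_\M(A \setminus a)$, and $C \subseteq \cl_\M(B)$ with... — so I would instead prove the contrapositive of the whole statement via submodularity of rank: $\rank(A \cup C) \ge \rank(A) + \rank(B\cup C) - \rank(B) $? No. I will settle on the induction above and isolate the claim $a \notin \cl_\M((A \setminus a) \cup C)$ as the lemma to prove, deriving it from $\rank_\M(A \cup C) \ge \rank_\M(A) + \rank_\M(C) - \rank_\M(\text{something})$ via submodularity, which is the step I expect to require the most care.
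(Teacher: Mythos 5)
Your proposal does not arrive at a complete proof: it cycles through several candidate strategies and explicitly leaves the crux step of the final induction unresolved, namely showing that $a \notin \cl_{\M}\big((A\setminus a)\cup C\big)$. That step is not a routine technicality --- it is essentially equivalent to the fact being proved (given the induction hypothesis that $(A\setminus a)\cup C\in\M$, the claim $A\cup C\in\M$ is \emph{precisely} the statement that $a\notin\cl_{\M}((A\setminus a)\cup C)$), so the induction as written has not reduced the problem. The earlier augmentation sketches run into the same obstruction (you note yourself that $\rank_{\M}(A\cup B\cup C)=|B|+|C|$ need not dominate $|A|+|C|$), so none of the three attempted routes closes.

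Ironically, the submodularity inequality you wrote down and then dismissed with ``No'' is correct and finishes the proof in three lines, and it is the cleanest route of all. Assume $A,B,C$ pairwise disjoint (the only case used in the paper; disjointness of $A$ and $C$ is automatic since $C\cap\cl_{\M}(B)\subseteq B$). Apply submodularity of $\rank_{\M}$ to $X=A\cup B$ and $Y=A\cup C$: since $X\cup Y=A\cup B\cup C$ and $X\cap Y=A$,
\[
\rank_{\M}(A\cup B)+\rank_{\M}(A\cup C)\;\geq\;\rank_{\M}(A\cup B\cup C)+\rank_{\M}(A).
\]
Now $A\subseteq\cl_{\M}(B)$ gives $\rank_{\M}(A\cup B)=\rank_{\M}(B)=|B|$ and $\rank_{\M}(A\cup B\cup C)=\rank_{\M}(B\cup C)=|B|+|C|$, and $\rank_{\M}(A)=|A|$. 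Substituting yields $\rank_{\M}(A\cup C)\geq|A|+|C|=|A\cup C|$, i.e.\ $A\cup C\in\M$. This is exactly the bound $\rank_{\M}(A\cup C)\geq\rank_{\M}(A)+\rank_{\M}(B\cup C)-\rank_{\M}(B)$ you conjectured. By contrast, the paper's proof is a constructive exchange argument: it adds the elements of $A$ to $B$ one at a time, each addition creating a fundamental circuit inside $\cl_{\M}(B)$ from which an element of $B$ can be deleted; since these circuits are disjoint from $C$, the same exchanges starting from $B\cup C$ produce $A\cup C\in\M$. Both approaches are valid; the submodularity argument is shorter and axiom-level, while the paper's exchange argument mirrors the algorithmic viewpoint used elsewhere (e.g.\ in the Sampling Lemma machinery).
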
 

\begin{lemma} \label{lem:correctness}
\MkGreedy outputs sets $S, {N}_1$, and ${N}_2$ such that 
\[(S \cup {N}_1 \cup {N}_2) \in \M_1 \cap \M_2. \]
\end{lemma}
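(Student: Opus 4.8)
The plan is to verify membership in each matroid separately, and then combine. First I would show $(S \cup N_1 \cup N_2) \in \M_1$; the case of $\M_2$ is symmetric with the roles of the indices swapped. Recall that $N_1$ collects elements $e$ with $e \in \cl_1(T_f)$ (and $e \notin \cl_2(T_f)$), while $N_2$ collects elements with $e \in \cl_2(T_f)$, $e \notin \cl_1(T_f)$. So every element of $N_1$ lies in $\cl_1(T_f)$, and moreover $N_2 \cap \cl_1(T_f) = \emptyset$, which is the key structural consequence of the disjointness test on line~\ref{algoB:span}.

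The main step is to argue $S \cup N_1 \in \M_1$, and then to add $N_2$. For the first part, note that the \Greedy step on line~\ref{SamAlg:claimprecon} explicitly maintains $S \cup N_1 \in \M_1$ as an invariant: an element $e$ is added to $N_1$ only when $S \cup N_1 \cup e \in \M_1$. Since $S \subseteq T_f$ (elements picked in Phase~(a) are a subset of what \Greedy selected) and $T_f \in \M_1$, the set $S$ is independent in $\M_1$ to begin with, and the invariant is preserved throughout Phase~(b). Hence $S \cup N_1 \in \M_1$. The same argument in $\M_2$ gives that line~\ref{SamAlg:claimprecon} also maintains $T_f \cup N_1 \in \M_2$; I record this because it is what makes the symmetric bookkeeping for $N_2$ go through, but for the $\M_1$ direction what matters is just $S \cup N_1 \in \M_1$.

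Now I would invoke Fact~\ref{swapbases} to bring in $N_2$. Apply it in matroid $\M_1$ with $A := S \cup N_1$, $B := T_f$, and $C := N_2$. We have $A \subseteq \cl_1(T_f)$: indeed $S \subseteq T_f \subseteq \cl_1(T_f)$, and every element of $N_1$ lies in $\cl_1(T_f)$ by the test on line~\ref{algoB:span}. We have $B = T_f \in \M_1$. It remains to check $B \cup C = T_f \cup N_2 \in \M_1$: this is exactly the condition enforced in $\M_1$ by line~\ref{SamAlg:claimprecon} when building $N_2$ (with indices $i=2$, $\thickbar{\imath}=1$, so the requirement $T_f \cup N_2 \cup e \in \M_{\thickbar{\imath}} = \M_1$ is maintained as an invariant, and $T_f \in \M_1$ is the base case). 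Fact~\ref{swapbases} then yields $A \cup C = S \cup N_1 \cup N_2 \in \M_1$. Swapping the two matroids (and hence the roles of $N_1$ and $N_2$, and of $S$ against $T_f$ in the span bound) gives $S \cup N_1 \cup N_2 \in \M_2$ by the identical argument, and combining the two completes the proof.

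I expect the only subtle point to be getting the invariant bookkeeping on line~\ref{SamAlg:claimprecon} exactly right — in particular noticing that that line maintains \emph{two} invariants per branch ($S \cup N_i \in \M_i$ and $T_f \cup N_i \in \M_{\thickbar{\imath}}$), and that it is the \emph{second} of these, applied when $i$ is the ``other'' index, that supplies the hypothesis $B \cup C \in \M$ needed for Fact~\ref{swapbases}. Everything else is a direct unwinding of the algorithm together with $S \subseteq T_f \in \M_1 \cap \M_2$ and the span conditions on line~\ref{algoB:span}.
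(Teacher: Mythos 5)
Your proof is correct and takes essentially the same route as the paper: both apply Fact~\ref{swapbases} once per matroid, with the paper working in the contraction $\M_i/S$ (so $A=\Ni$, $B=T_f\setminus S$, $C=\Nni$) and you working directly in $\M_i$ with $S$ absorbed into $A$ (so $A=S\cup \Ni$, $B=T_f$, $C=\Nni$), which are the same argument up to translating between a matroid and its contraction by $S$. Your explicit unpacking of the two invariants maintained by Line~\ref{SamAlg:claimprecon} and the observation that $B\cup C\in\M$ already subsumes $C\in\M$ are both sound.
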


\ifFULL
\begin{proof}
Observe that the outputs sets $S, {N}_1$, and ${N}_2$  of \MkGreedy satisfy the following for $i \in \{1,2\}$:
\begin{align}
    \Ni \in & ~\Mi / S \cap \Mni /T_f \qquad 
            &\left(\text{due to Line~\ref{SamAlg:claimprecon}} \right)  \label{eq:protect1}\\
    \Ni \subseteq & ~\cl_{\Mi/S}(T_f \setminus S) \qquad 
            &\left( \text{due to Line~\ref{algoB:span}}\right) \label{eq:protect2}
\end{align}
     From Property~(\ref{eq:protect1}) above we know
     ${\Nni} \in \Mi /T_f$, which implies
    ${\Nni} \cup (T_f\setminus S) \in \Mi /S$ because $S \subseteq T_f \in \Mi$.  
    Also, Property~(\ref{eq:protect2}) implies  $\Ni \subseteq \cl_{\Mi/S} (T_f\setminus S)$. Using  Fact \ref{swapbases}, we have  $ \big( {N}_1 \cup {N}_2 \big) \in \Mi / S$.
\end{proof}
\else
\fi

\subsubsection{Proof that \MkGreedy works for OMI:} \label{sec:matrAnalysis}
~\\
\noindent We know from Lemma~\ref{lemma:KMM} that $\G(f)$ is close to half for $\epsilon \ll f \ll 1$. In the following Lemma~\ref{lem:algoB}, we show that \MkGreedy (which returns $S\cup N_1\cup N_2$ by Lemma~\ref{lem:correctness}) gets an improvement over \Greedy. This completes the proof of Theorem~\ref{thm:onebitmat} to give $\gamma \geq 0.03$ 
 for $\epsilon = 0.001$, $f = 0.05 $, and $p =
0.33$. The rest of the section is devoted to proving the following lemma.


\begin{lemma} \label{lem:algoB}
    \MkGreedy outputs sets $S, {N}_1$, and ${N}_2$ such that
	\[ \E_{\pi,\K}[|S \cup {N}_1 \cup {N}_2|] \geq 
        (1-p)\, \G(f)\,|\OPT| + \frac{2p}{1+p} \left(1 -  \frac{2\epsilon}{f} -
            2\epsilon -f -\G(f) \right)\,|\OPT|. \]
\end{lemma}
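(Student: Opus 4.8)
The plan is to decompose the output $S \cup N_1 \cup N_2$ into the Phase~(a) contribution $|S|$ and the Phase~(b) contribution $|N_1| + |N_2|$ (these are disjoint since $N_i$ only contains newly arriving elements). The first term is easy: in Phase~(a), \Greedy selects a set $T_f$ of expected size $\G(f)\,|\OPT|$, and each selected element is independently kept in $S$ with probability $1-p$, so $\E_{\pi,\K}[|S|] = (1-p)\,\G(f)\,|\OPT|$. The bulk of the work is to lower bound $\E_{\pi,\K}[|N_1| + |N_2|]$ by $\frac{2p}{1+p}\left(1 - \frac{2\epsilon}{f} - 2\epsilon - f - \G(f)\right)|\OPT|$.

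For the Phase~(b) term, the first step is to identify an appropriate ``target'' independent set to compare $N_i$ against. Fix the order $\pi$; then $T_f$ is determined, and $S$ is obtained by sampling each element of $T_f$ independently with probability $1-p$ (equivalently, deleting each with probability $p$), and this sampling is independent of the Phase~(b) stream. Using the Hastiness Lemma (Lemma~\ref{lemma:KMM}), $\E_\pi[|\Phi_1(T_f) \cup \Phi_2(T_f)|] \geq (1 - \frac{2\epsilon}{f} + 2\epsilon)|\OPT|$ and $\E_\pi[|\Phi_1(T_f) \cap \Phi_2(T_f)|] \leq 2\epsilon|\OPT|$; subtracting, the set of $\OPT$ elements lying in the span of $T_f$ for \emph{exactly one} of the two matroids has expected size at least $(1 - \frac{2\epsilon}{f} - 2\epsilon)|\OPT|$. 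These are precisely the elements that pass the disjointness test on Line~\ref{algoB:span}; call the ones spanned only in $\M_i$ (so belonging to the $i$-th subproblem) the set $I_i$. Removing the at most $f|\OPT|$ (in expectation) $\OPT$-elements that arrive during Phase~(a), we get $\E_\pi[|\tilde I_1| + |\tilde I_2|] \geq (1 - \frac{2\epsilon}{f} - 2\epsilon - f)|\OPT|$ for the parts $\tilde I_i$ arriving in Phase~(b). Further, since $T_f$ itself intersects $\OPT$ in $\G(f)\,|\OPT|$ expected elements and an element of $\OPT \cap T_f$ is spanned by $T_f$ in both matroids (hence excluded from $I_1 \cup I_2$), actually I will want to be slightly more careful: the correct accounting subtracts $\G(f)\,|\OPT|$ as in the statement, coming from the fact that $\Phi_1 \cup \Phi_2$ includes $\OPT \cap T_f$ and we must not double-count the augmenting capacity there — this bookkeeping is where I would be most careful to match the stated constant $1 - \frac{2\epsilon}{f} - 2\epsilon - f - \G(f)$.

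The key step is then to apply the Sampling Lemma for matroid intersection (Lemma~\ref{lem:matrmain}, the general form promised in the text) separately to each subproblem $i \in \{1,2\}$. Within subproblem $i$, Phase~(b) runs \Greedy on the matroids $\M_i/S$ and $\M_{\bar\imath}/T_f$ restricted to the elements of $\cl_i(T_f) \setminus \cl_{\bar\imath}(T_f)$; the set $S$ restricted to the relevant part of $T_f$ is a random subset keeping each element with probability $1-p$, matching the hypothesis of the Sampling Lemma with parameter $p$. The Sampling Lemma gives $\E_\K[|N_i|] \geq \frac{1}{1+p}\,\E_\K[|\OPT(\M_i/S, \M_{\bar\imath}/T_f, \cl_i(T_f))|] \geq \frac{1}{1+p}\cdot p\,|\tilde I_i|$, using that $\tilde I_i$ is a common independent set of the contracted instance of the appropriate size (this needs Fact~\ref{swapbases}-style reasoning to check $\tilde I_i$ survives the contractions, analogous to the bipartite statement $\E \geq \frac{p}{1+p}|\tilde I|$). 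Summing over $i$, taking expectations over $\pi$, and combining with the $|S|$ term yields the claimed bound. The main obstacle I anticipate is establishing the precise form of the general Sampling Lemma (its statement and proof via the deferred-decisions invariant) and verifying that the two invocations for $(\M_1/S,\M_2/T_f)$ and $(\M_1/T_f,\M_2/S)$ genuinely act on disjoint ground sets so the two $N_i$ can be combined without conflict — but that disjointness is exactly what Line~\ref{algoB:span} enforces and what Lemma~\ref{lem:correctness} already certifies, so here I would simply cite it.
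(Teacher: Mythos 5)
Your high-level architecture matches the paper's: decompose $|S \cup N_1 \cup N_2| = |S| + |N_1| + |N_2|$, bound $\E[|S|]$ by $(1-p)\G(f)|\OPT|$, identify disjoint target sets $\tI_i$ of $\OPT$-elements that lie in $\cl_i(T_f) \setminus \cl_{\bar\imath}(T_f)$ and arrive in Phase~(b), then invoke the Sampling Lemma once per subproblem and sum. However, there is a genuine gap precisely at the step you flag as needing care, and the gap is not a matter of being ``slightly more careful''---it is the technical core of the lemma, which the paper isolates as Claim~\ref{claim:propSetI}.

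First, a factor-of-$2$ discrepancy. Your sketch produces a bound of roughly $\frac{p}{1+p}\bigl(1 - \tfrac{2\epsilon}{f} - f - \G(f)\bigr)|\OPT|$ for $\E[|N_1|+|N_2|]$, but the lemma asserts $\frac{2p}{1+p}(\cdots)$. That factor of $2$ is not cosmetic; it comes from the symmetric construction in Claim~\ref{claim:propSetI}, where the loss in $\I_i$ from enforcing independence in $\M_{\bar\imath}/T_f$ is $|X_{\bar\imath}| = |T_f| - |\I_{\bar\imath}|$, so that $|\I_1'| + |\I_2'| \geq (|\I_1| - |T_f| + |\I_2|) + (|\I_2| - |T_f| + |\I_1|) = 2\bigl(|\I_1| + |\I_2| - |T_f|\bigr)$. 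Without carrying out this calculation you would actually prove a weaker statement than what is claimed.

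Second, and more fundamentally, your explanation for why $\G(f)$ appears---``we must not double-count the augmenting capacity'' because $\OPT \cap T_f$ is in both spans---is not the right reason, and that informal intuition does not lead to a proof. The $-\G(f)$ term arises because an element $e \in \I_i$ satisfying $e \notin \cl_{\bar\imath}(T_f)$ only guarantees $T_f \cup e \in \M_{\bar\imath}$ elementwise; it does \emph{not} give you $\I_i \cup T_f \in \M_{\bar\imath}$, i.e.\ $\I_i \in \M_{\bar\imath}/T_f$, which is the precondition the Sampling Lemma needs. You must prune $\I_i$ down to a subset $\I_i'$ such that $\I_i' \cup T_f \in \M_{\bar\imath}$. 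The paper does this by choosing a minimal $X_{\bar\imath} \subseteq T_f$ with $\cl_{\bar\imath}(X_{\bar\imath} \cup \I_{\bar\imath}) = \cl_{\bar\imath}(T_f)$, noting $|X_{\bar\imath}| = |T_f| - |\I_{\bar\imath}|$, deleting at most $|X_{\bar\imath}|$ elements from $\I_i$ to make $\I_i' \cup X_{\bar\imath} \cup \I_{\bar\imath}$ independent, and then invoking Fact~\ref{swapbases} to upgrade this to $\I_i' \cup T_f \in \M_{\bar\imath}$. Your proposal names Fact~\ref{swapbases} but does not carry out this construction; saying ``$\tilde I_i$ survives the contractions'' begs exactly the question at issue. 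Finally, a minor arithmetic slip: subtracting the intersection bound $2\epsilon|\OPT|$ from the union bound $(1 - \tfrac{2\epsilon}{f} + 2\epsilon)|\OPT|$ gives $(1 - \tfrac{2\epsilon}{f})|\OPT|$, not $(1 - \tfrac{2\epsilon}{f} - 2\epsilon)|\OPT|$; the $-2\epsilon$ in the lemma statement is simply slack, not something you should be deriving from the Hastiness Lemma.
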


\begin{proof}[Lemma~\ref{lem:algoB}]
We treat the sets $S \subseteq T_f, {N}_1,$ and ${N}_2$ as random sets depending on 
$\pi$ and $\K$. Since \MkGreedy ensures the sets are disjoint, 
\begin{align}
     \E_{\pi,\K}[ | S\cup {N}_1 \cup {N}_2 |] &= 
         \E_{\pi,\K}[|S|] +  \E_{\pi,\K}[|{N}_1| + |{N}_2|] \notag \\
         &\geq 
            (1-p)\,\G(f)\,|\OPT| + \E_{\pi,\K}[|{N}_1| + |{N}_2|]. \label{eq:setN}
\end{align}

Next, we lower bound
$\E_{\pi,\K}[|{N}_1| + |{N}_2|]$ by observing that for $i\in \{1,2\}$, $\Ni$ is
the result of running $\Greedy$ on  the following restricted set of elements.

\begin{definition} [Sets $\tE_i$] \label{defn:setsEI}
     For $i \in \{1,2\}$, we define $\tE_i$ 
    to be the set of elements $e$ that arrive in Phase~(b) and satisfy $e \in \cl_{i}(T_f)$ and $e
    \not\in \cl_{\thickbar{\imath}}(T_f)$. 
\end{definition}

\noindent It's easy to see that $\Ni$ is obtained by running $\Greedy$ 
on the matroids $\Mi /S$ and
$\Mni /T_f$ with respect to  elements in $\tE_i$, i.e. 
$\Ni =  \Greedy(\Mi/S,\Mni/T_f,\tE_i)$. To lower
bound $\E_{\pi,\K}[|{N}_1| + |{N}_2|]$, we use
the following Sampling Lemma \ifFULL (proved in Section~\ref{sec:misssamp}) \else (see full version) \fi that forms the core of our technical analysis. Intuitively, it says that if $S$ is a random subset of $T_f$ then for the obtained random OMI instance, with optimal solution of expected size $p\,|\tI|$,
\Greedy performs better than half-competitiveness even for adversarial arrival order of ground elements.


\begin{lemma}[{Sampling Lemma}]
\label{lem:matrmain} 
     Given matroids $\M_1, \M_2$ on ground set $E$, a set $T \in \M_1 \cap \M_2 $, and $\K(e) \sim  \text{Bern}(1-p)$ i.i.d. for all $e \in T$, we define set $S := \{e \mid e \in T \text{ and } \K(e) =    1\}.$ I.e., $S$ is a set achieved by dropping each element in $T$
    independently with probability $p$.
   For $i\in \{1,2\}$,  consider a set $\tE \subseteq \cl_i(T)$ and a set  ${\tI} \subseteq \tE $ satisfying  
    ${\tI} \in \Mi \cap (\Mni/T) $. Then 
    for any arrival order of the elements of $\tE$, we have
    \[ \E_{\K}[ \text{\Greedy}(\Mi/S, \Mni/T, \tE)] \geq \frac{1}{1+p} \left(p\,|{\tI} | \right) .\] 
\end{lemma}

To use the Sampling Lemma, in Claim~\ref{claim:propSetI} we argue that in expectation there exist disjoint sets ${\tI}_i
\subseteq \tE_i$ of ``large'' size that satisfy the preconditions of the Sampling Lemma (proof uses Hastiness Lemma and is deferred to \ifFULL Section~\ref{sec:omiDisjointSetsClaim}\else full version\fi).

\begin{myclaim} \label{claim:propSetI}
 If $\G(1) \leq \left(\frac12 + \epsilon \right)$ then for $i\in \{1,2\}$ $\exists$ disjoint sets $\tI_i \subseteq \tE_i$ s.t. 
 \begin{enumerate}[(i)]
        \item $\E_{\pi}\big[ |{\tI}_1| + |{\tI}_2| \big] 
                \geq 2\left(1 -  \frac{2\epsilon}{f}  -f -\G(f) \right)\,|\OPT| $.
       \item ${\tI}_i \in \Mi \cap (\Mni/T_f)$.  \label{eq:claimtildeI}
\end{enumerate}
\end{myclaim}

Finally, to finish the proof of Lemma~\ref{lem:algoB},  we use the sets ${\tI_i}$ from the above Claim~\ref{claim:propSetI} as ${\tI}$ and sets $\tE_i$ as $\tE$ in the Sampling Lemma~\ref{lem:matrmain}.
From Eq.~(\ref{eq:setN}) and Claim~\ref{claim:propSetI}, we get 
    \begin{align*}
         \E_{\pi,\K}[ | S\cup {N}_1 \cup {N}_2 |]& \geq (1-p)\, \G(f)\,|\OPT| 
         + \frac{p}{1+p}\, \E_{\pi} \big[ |{\tI}_1| + |{\tI}_2| \big] \\
         & \geq (1-p)\, \G(f)\,|\OPT| + \frac{2p}{1+p}\,\left(1 -  \frac{2\epsilon}{f}  -f -\G(f) \right)\,|\OPT| . 
    \end{align*}
\end{proof}

\ifFULL
\begin{proof}[Claim~\ref{claim:propSetI}]
Recall $\Phi_i(T_f^{\pi}) := \cl_i(T_f^{\pi}) \cap \OPT$.
Let $\I_i$ denote $\Phi_i(T_f^{\pi}) \setminus \Phi_{\thickbar{\imath}}(T_f^{\pi})$.
We construct sets $\tI_i$ by removing some elements from $\I_i$, which implies $\tI_i \in \Mi$ because $\I_i \in \Mi$ .
We first show that $|\I_1|+|\I_2|$ is large.
From the Hastiness Lemma~\ref{lemma:KMM}, we have 
\begin{align}
\E_{\pi} \left[ |\I_1| + |\I_2| \right] & = \E_{\pi} \left[ |\Phi_1(T_f^{\pi}) \cup \Phi_2(T_f^{\pi})| \right] - \E_{\pi} \left[ |\Phi_1(T_f^{\pi}) \cap \Phi_2(T_f^{\pi})| \right] \notag \\
&\geq \left(1 -  \frac{2\epsilon}{f}  \right)\,|\OPT| .  \label{eq:bothI}
\end{align}
 
Next, we ensure that $\tI_i \in \Mni/T_f$. Note that $\I_{\thickbar{\imath}} \subseteq \cl_{\thickbar{\imath}}(T_f )$.
    Let $X_{\thickbar{\imath}}$ denote a minimum subset of elements of $T_f $ such that
    $\cl_{\thickbar{\imath}}(X_{\thickbar{\imath}} \cup \I_{\thickbar{\imath}}) = \cl_{\thickbar{\imath}}(T_f)$. Since $\I_{\thickbar{\imath}}$ and $T_f$  are independent in $\Mni$, we have
    $|X_{\thickbar{\imath}}|  = |T_f | - |\I_{\thickbar{\imath}}|.$ 
    Now starting with $\left( \I_i \cup
    \I_{\thickbar{\imath}} \right) \in \Mni$, we add elements of $X_{\thickbar{\imath}}$ into it. We will remove at most $|X_{\thickbar{\imath}}|$ elements from $\I_i$  to get a set $\I_i'$ such that $(\I_i' \cup X_{\thickbar{\imath}} \cup \I_{\thickbar{\imath}})  \in    \Mni$ as $\left( X_{\thickbar{\imath}} \cup \I_{\thickbar{\imath}} \right) \in \Mni$. Using Fact~\ref{swapbases} and 
    $\cl_{\thickbar{\imath}}(X_{\thickbar{\imath}} \cup \I_{\thickbar{\imath}}) = \cl_{\thickbar{\imath}}(T_f )$, we also have 
    $\I_i' \cup T_f \in \Mni$. One can use a similar argument to obtain set  $\I_{\thickbar{\imath}}'$ and $X_i$ such that $\I_{\thickbar{\imath}}' \cup T_f \in \Mi$. Since  
    $\E_{\pi} \big[|X_i| \big] = \E_{\pi} \big[|T_f | - |\I_{i } | \big]$ , 
    \begin{align}
     \E_{\pi}[ |\I_1'| + |\I_2'|] &\geq \E_{\pi}[ |\I_1| + |\I_2| - |X_1| - |X_2| ] = 2 ~\E_{\pi} [ |\I_1| + |\I_2| -|T_f|]  \label{eq:bothIp}
     \end{align}
Finally, to ensure that  $\tI_i \subseteq \tE_i$, observe that any element $ e\in \I_i'$ already satisfies  $e \in \cl_{i}(T_f)$ and $e \not\in \cl_{\thickbar{\imath}}(T_f)$. To ensure that these elements also appear in Phase~(b), note that all elements of $\I_i'$ belong to $\OPT$. Hence, in expectation over $\pi$, at most $f\,|\OPT|$ of these elements can appear in Phase~(a). The remaining elements appear in Phase~(b). Thus, combining the following equation with Eq.~(\ref{eq:bothI}) and Eq.~(\ref{eq:bothIp}) completes the proof of Lemma~\ref{lem:algoB}
\[
 \E_{\pi}\big[ |{\tI}_1| + |{\tI}_2| \big] \geq  \E_{\pi}\big[ |{\I}_1'| + |{\I}_2'| \big]  - f\,|\OPT|   .
 \]
\end{proof}

\subsection{Existence of Large Disjoint Sets for Claim~\ref{claim:propSetI}} \label{sec:omiDisjointSetsClaim}

Finally, we prove the missing Claim~\ref{claim:propSetI} that in expectation there exist disjoint sets ${\tI}_i \subseteq \tE_i$ of ``large'' size that satisfy the preconditions of the Sampling Lemma

\begin{proof}[Claim~\ref{claim:propSetI}]
Recall $\Phi_i(T_f^{\pi}) := \cl_i(T_f^{\pi}) \cap \OPT$.
Let $\I_i$ denote $\Phi_i(T_f^{\pi}) \setminus \Phi_{\thickbar{\imath}}(T_f^{\pi})$.
We construct sets $\tI_i$ by removing some elements from $\I_i$, which implies $\tI_i \in \Mi$ because $\I_i \in \Mi$ .
We first show that $|\I_1|+|\I_2|$ is large.
From the Hastiness Lemma~\ref{lemma:KMM}, we have 
\begin{align}
\E_{\pi} \left[ |\I_1| + |\I_2| \right] & = \E_{\pi} \left[ |\Phi_1(T_f^{\pi}) \cup \Phi_2(T_f^{\pi})| \right] - \E_{\pi} \left[ |\Phi_1(T_f^{\pi}) \cap \Phi_2(T_f^{\pi})| \right] \notag \\
&\geq \left(1 -  \frac{2\epsilon}{f}  \right)\,|\OPT| .  \label{eq:bothI}
\end{align}
 
Next, we ensure that $\tI_i \in \Mni/T_f$. Note that $\I_{\thickbar{\imath}} \subseteq \cl_{\thickbar{\imath}}(T_f )$.
    Let $X_{\thickbar{\imath}}$ denote a minimum subset of elements of $T_f $ such that
    $\cl_{\thickbar{\imath}}(X_{\thickbar{\imath}} \cup \I_{\thickbar{\imath}}) = \cl_{\thickbar{\imath}}(T_f)$. Since $\I_{\thickbar{\imath}}$ and $T_f$  are independent in $\Mni$, we have
    $|X_{\thickbar{\imath}}|  = |T_f | - |\I_{\thickbar{\imath}}|.$ 
    Now starting with $\left( \I_i \cup
    \I_{\thickbar{\imath}} \right) \in \Mni$, we add elements of $X_{\thickbar{\imath}}$ into it. We will remove at most $|X_{\thickbar{\imath}}|$ elements from $\I_i$  to get a set $\I_i'$ such that $(\I_i' \cup X_{\thickbar{\imath}} \cup \I_{\thickbar{\imath}})  \in    \Mni$ as $\left( X_{\thickbar{\imath}} \cup \I_{\thickbar{\imath}} \right) \in \Mni$. Using Fact~\ref{swapbases} and 
    $\cl_{\thickbar{\imath}}(X_{\thickbar{\imath}} \cup \I_{\thickbar{\imath}}) = \cl_{\thickbar{\imath}}(T_f )$, we also have 
    $\I_i' \cup T_f \in \Mni$. One can use a similar argument to obtain set  $\I_{\thickbar{\imath}}'$ and $X_i$ such that $\I_{\thickbar{\imath}}' \cup T_f \in \Mi$. Since  
    $\E_{\pi} \big[|X_i| \big] = \E_{\pi} \big[|T_f | - |\I_{i } | \big]$ , 
    \begin{align}
     \E_{\pi}[ |\I_1'| + |\I_2'|] &\geq \E_{\pi}[ |\I_1| + |\I_2| - |X_1| - |X_2| ] = 2 ~\E_{\pi} [ |\I_1| + |\I_2| -|T_f|]  \label{eq:bothIp}
     \end{align}
Finally, to ensure that  $\tI_i \subseteq \tE_i$, observe that any element $ e\in \I_i'$ already satisfies  $e \in \cl_{i}(T_f)$ and $e \not\in \cl_{\thickbar{\imath}}(T_f)$. To ensure that these elements also appear in Phase~(b), note that all elements of $\I_i'$ belong to $\OPT$. Hence, in expectation over $\pi$, at most $f\,|\OPT|$ of these elements can appear in Phase~(a). The remaining elements appear in Phase~(b). Thus, combining the following equation with Eq.~(\ref{eq:bothI}) and Eq.~(\ref{eq:bothIp}) completes the proof
\[
 \E_{\pi}\big[ |{\tI}_1| + |{\tI}_2| \big] \geq  \E_{\pi}\big[ |{\I}_1'| + |{\I}_2'| \big]  - f\,|\OPT|.   
 \]
\end{proof}
\else
\fi

\ifFULL
\section{Sampling Lemma} \label{sec:misssamp}
We prove the lemma for $i=1$ as the other case is analogous.  

\subsection{Alternate View of the Sampling Lemma}

We prove the Sampling Lemma by first showing that $\Greedy( \M_1/S, \M_2/ T, \tE)$ produces
the same output as  algorithm \SampAlg (proof deferred to Section~\ref{sec:proofGreedyEquiv}).
    \begin{lemma} \label{clm:GreedyEquiv}
        Given a fixed $\K$ and assuming the elements of $\tE$ are presented in the same order, 
        the output of \SampAlg
        is the same as the output of $\Greedy(\M_1/S, \M_2/T,\tE).$ 
    \end{lemma}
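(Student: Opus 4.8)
Since $\K$ is fixed in the statement, the set $S=\{e\in T:\K(e)=1\}$ is determined, so the assertion becomes a deterministic identity: run on the same ordering of $\tE$, the two procedures output the same set. My plan is to unpack \SampAlg as the deferred-decision simulation of $\Greedy(\M_1/S,\M_2/T,\tE)$ and prove agreement one element at a time. Concretely, while scanning $\tE$ in the given order \SampAlg maintains the current greedy solution $N_1$, the set $R\subseteq T$ of elements whose coin $\K$ has already been inspected (split into an inspected-kept part $R\setminus R^{\mathrm{drop}}$ and an inspected-dropped part $R^{\mathrm{drop}}$), and a working set $B=(T\setminus R^{\mathrm{drop}})\cup N_1$ that it keeps independent in $\M_1$. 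I would run \SampAlg and $\Greedy(\M_1/S,\M_2/T,\tE)$ in lockstep and prove, by induction on the length of the scanned prefix of $\tE$, the invariant: (a) both procedures currently hold the same $N_1$; (b) $B$ is a basis of $\cl_1(T)$ with $S\cup N_1\subseteq B$ and $B\cap T=T\setminus R^{\mathrm{drop}}$; (c) every element of $R^{\mathrm{drop}}$ has $\K=0$ and every element of $R\setminus R^{\mathrm{drop}}$ has $\K=1$. Part (a) at termination is precisely the lemma.

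The base case is immediate ($N_1=\emptyset$, $R=\emptyset$, $B=T$, and $S\subseteq T$). For the inductive step, take the next element $e$. By (a) both procedures see the same $N_1$, and the $\M_2$-side test — whether $N_1\cup e\in\M_2/T$, i.e. $e\notin\cl_2(T\cup N_1)$ — reads no coins and depends only on $T$ and $N_1$, so both reach the same verdict; if it fails, both reject and nothing changes. If it holds, $\Greedy(\M_1/S,\M_2/T,\tE)$ accepts $e$ exactly when $e\notin\cl_1(S\cup N_1)$, while \SampAlg forms the fundamental circuit $C:=C_1(B\cup e)$ (well defined since $\tE\subseteq\cl_1(T)=\cl_1(B)$ and, w.l.o.g., $\tE\cap T=\emptyset$ so $e\notin B$), inspects the coins of the not-yet-inspected elements of $(C\setminus e)\cap T$, rejects if all of $(C\setminus e)\cap T$ turn out kept, and otherwise chooses a dropped $g\in(C\setminus e)\cap T$, sets $B\leftarrow(B\setminus g)\cup e$ and $N_1\leftarrow N_1\cup e$, and accepts. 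The key matroid fact is that, since $C$ is the \emph{unique} circuit of $B\cup e$, for every $A\subseteq B$ we have $e\in\cl_1(A)$ iff $C\setminus e\subseteq A$; applying this with $A=S\cup N_1\subseteq B$ and using $(C\setminus e)\setminus T\subseteq N_1$ yields $e\in\cl_1(S\cup N_1)$ iff $(C\setminus e)\cap T\subseteq S$, i.e. iff all the inspected coins are kept. Hence \SampAlg accepts iff $\Greedy(\M_1/S,\M_2/T,\tE)$ accepts, preserving (a); when $e$ is accepted, deleting the circuit element $g$ from $B\cup e$ leaves $(B\setminus g)\cup e$ independent, still of full rank inside $\cl_1(T)$ (as $\cl_1(T\cup N_1)=\cl_1(T)$), with $g\notin S$ and $e\in N_1$, which re-establishes (b); (c) holds because only freshly read coins are recorded.

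The step I expect to be the main obstacle is invariant (b) through the swaps: one must show \SampAlg's working basis $B$ stays synchronized with $S$ and $N_1$, so that the purely \emph{local} question ``does some element of $C_1(B\cup e)\cap T$ have coin $0$?'' faithfully answers the \emph{global} greedy question ``is $e\notin\cl_1(S\cup N_1)$?''. This rests on the uniqueness of fundamental circuits and on the basis-exchange property, and it also requires checking that a circuit element lying in $N_1$ rather than in $T$ is never forced to be swapped out — there is always a dropped $T$-element available whenever a swap is needed, because $(C\setminus e)\cap T\not\subseteq S$ is exactly the condition under which \SampAlg swaps. Once the invariant is proved the equivalence follows, and the remaining case $i=2$ of the Sampling Lemma (Lemma~\ref{lem:matrmain}) is handled symmetrically.
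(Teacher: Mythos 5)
Your proof is correct and follows essentially the same route as the paper's: induct over the arrival order, maintain that \SampAlg's working set (your $B$, the paper's $S'\cup N'\cup T'$) is always a basis of $\cl_1(T)$ containing $S\cup N'$, observe that the $\M_2$-test is coin-free, and match the two accept/reject decisions via the fundamental circuit in $B\cup e$. The one stylistic difference is that you make the key step explicit as a clean matroid fact (for $A\subseteq B$ a basis, $e\in\cl_1(A)$ iff $C_1(B\cup e)\setminus e\subseteq A$), whereas the paper folds the same reasoning into Observations~\ref{clm:SAinvariant1}--\ref{clm:SAinvariant2} and Claim~\ref{clm:SAcircuits}.
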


The idea behind \SampAlg is to run  \Greedy, but postpone distinguishing
   between  the elements that are selected by \Greedy (set $T$) and picked by our algorithm (set $S$). This limits what an
    adversary can do while ordering  the elements of $\tE$. 
    Intuitively, the sets in \SampAlg denote the following: 
    \begin{itemize}
        \item $N'$ denotes the new elements to be added to the independent set. 
        \item $T'$ are the elements of $T$ for which we still haven't read the random bit $\K$. 
        \item $S'$ are the elements $e\in T$ for which we have read $\K$ and they turn out to be picked, i.e., $\K(e)=1$. 
    \end{itemize}
    
    \begin{algorithm}
    \caption{\SampAlg}
    \begin{algorithmic}[1]
    \Statex Input: $\M_1, \M_2, T$, and random bits $\K \in \{0,1\}^{|T|}$. 
    \State Initialize: $N',S'$ to $\emptyset$, and $T' = T$
    \For{each element $e \in \tE$} \label{SampAlg:FortE}
        \If{$ T \cup N' \cup e \in \M_2 $} \label{SampAlg:M2check}
                \State Let ${C} \gets C_1(S' \cup N' \cup T',e) \cap T'$ \label{SampAlg:circuit}	\Comment{\emph{Unread elements of the formed circuit}}
                \For{each element $f \in {C}$} \label{SampAlg:For}
                    \State $T' \gets T' \setminus f$ \label{SampAlg:Tminus}
                    \If {$\K(f) = 1$} \Comment{\emph{Auxiliary random bits $\K$}}
                        \State $S' \gets S' \cup  f $ \label{SampAlg:Sadd}  \Comment{\emph{Already picked elements}}
                    \Else
                        \State $N' \gets N' \cup e$ \label{SampAlg:Nadd}	\Comment{\emph{Newly picked elements}}
                        \State \textbf{Break}
                    \EndIf
                \EndFor
        \EndIf
    \EndFor\\
    \Return $N'$
    \end{algorithmic}
    \end{algorithm}


\subsection{Proof of the Sampling Lemma}

    By Lemma~\ref{clm:GreedyEquiv}, it suffices to prove that given the preconditions of
    the Sampling Lemma, \SampAlg produces 
    an output of expected size at least $\frac{p}{1+p}|\tI|.$      
    More precisely, we need to show that if $\K$ in \SampAlg is chosen as $\K(e) \sim  \text{Bern}(1-p)$ i.i.d. for all $e \in T$,
    we have $\E_{\K} [ |N'|] \geq \frac{p}{1+p} |\tI|.$

The main idea of the proof is to argue that before every iteration of the
for-loop in Line~\ref{SampAlg:FortE}, there are ``sufficient'' number of
elements that are still to arrive and can be added to our solution. 
To achieve this, we define a set $\I'$, which intuitively denotes the set of $\OPT$
elements that are still to arrive and can be added to the current solution. The properties
of $\I'$ are rigorously captured in Invariant~\ref{invariant}, where
 $\trE$ denotes the remaining elements of $\tE$ that are still to be considered in the for-loop. Due to Lemma~\ref{clm:GreedyEquiv}, this also denotes the elements of $\tE$ that are still to arrive for \Greedy. Starting with $\I' = \tI$ at the beginning of \SampAlg, we wish to maintain the following.
\begin{invariant}\label{invariant}
For given sets $S', N', T$, and $\trE \subseteq \tE$, we have set $\I'$ satisfying this invariant if 
\begin{align}
S' \cup N' \cup I' &\in \M_1  \label{inv1}   \\ 
            T \cup N' \cup I' &\in \M_2  \label{inv2} \\
            I' &\subseteq \trE \label{inv3}
\end{align}
\end{invariant}

As the algorithm \SampAlg progresses, set $\I'$ has to drop some of its elements so that it continues to satisfy Invariant~\ref{invariant}. These drops from $\I'$ are rigorously captured in Updates~\ref{updates}. Note that set $\I'$ and Updates~\ref{updates} are just for analysis purposes, and never appear in the actual algorithm. 
  Starting with
$\I' = \tI$ at the beginning of \SampAlg and satisfying
Invariant~\ref{invariant}, in
Claim~\ref{claim:invariant} we prove that Updates~\ref{updates} to $\I'$ ensure
that the invariant is always satisfied. This lets us use induction
to prove in Claim~\ref{claim:sublemma}  that Updates~\ref{updates} never drop too many elements from $\I'$ and \SampAlg returns an independent set of large size.

\begin{updates} \label{updates} We perform the following updates to $\I'$
    whenever \SampAlg reaches Line~\ref{SampAlg:Sadd} or
    Line~\ref{SampAlg:Nadd}. Claim~\ref{claim:invariant} shows that these
    updates are well-defined.  
    
\begin{itemize}
\item Line~\ref{SampAlg:Sadd}: If circuit $C_1(S' \cup N' \cup \I' \cup f)$ is
    non-empty then remove an element from $\I'$ belonging to $C_1(S' \cup N'
    \cup \I' \cup f)$ to break the circuit. 

\item Line~\ref{SampAlg:Nadd}: If circuit $C_1(S' \cup N' \cup \I' \cup e)$ is
    non-empty then remove an element from $\I'$ belonging to 
    $C_1(S' \cup N' \cup \I' \cup e)$ to break the circuit.  If 
    $C_2(T \cup N' \cup \I' \cup e)$ is non-empty then remove another element from $\I'$ belonging
    to $C_2(T \cup N' \cup \I' \cup e)$ to break the circuit. In the special
    case where $e \in \I'$, we remove $e$ from $\I'$.

\end{itemize}
\end{updates}

 The following claim (proof deferred to Section~\ref{sec:invariantproof}) shows that Updates~\ref{updates} maintain Invariant~\ref{invariant}.
\begin{myclaim}\label{claim:invariant}
    Given matroids $\M_1,\M_2$, a set $T \in \M_1 \cap \M_2$,  a set $\trE \subseteq \cl_1(T)$ 
    (denoting the set of remaining elements), 
    and $\K(e) \sim  \text{Bern}(1-p)$ i.i.d. for all $e \in T$, 
    suppose there exists a set $I'$ satisfying Invariant~\ref{invariant} at the beginning of some iteration of the for-loop in Line~\ref{SampAlg:FortE} of \SampAlg.  Then 
     \begin{enumerate}[(i)]
    \item Updates~\ref{updates} are well-defined. \label{prop1}
    \item Updates~\ref{updates} ensure that Invariant~\ref{invariant} hold 
        at the end of the iteration. \label{prop2}
    \end{enumerate}
\end{myclaim}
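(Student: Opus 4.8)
The plan is to analyze the single iteration of the for-loop that processes some element $e \in \trE$, and to track how Invariant~\ref{invariant} is preserved as \SampAlg executes Lines~\ref{SampAlg:circuit}--\ref{SampAlg:Nadd}. First I would record what the code does: it computes the circuit $C_1(S' \cup N' \cup T', e)$ in $\M_1$ (this is well-defined precisely when $e \in \cl_1(S' \cup N' \cup T')$, which holds because $\trE \subseteq \cl_1(T)$ and $T \subseteq S' \cup N' \cup T'$ modulo the elements already transferred — one needs a small bookkeeping lemma that $\cl_1(S' \cup N' \cup T') \supseteq \cl_1(T)$ throughout, which follows since elements only move from $T'$ into $S'$, or trigger adding $e$ to $N'$ in a way that preserves the span), intersects it with $T'$ to get $C$, and then walks through $C$ moving each $f$ out of $T'$: if $\K(f)=1$ it goes to $S'$ (Line~\ref{SampAlg:Sadd}), and the first time $\K(f)=0$ it instead adds $e$ to $N'$ and breaks (Line~\ref{SampAlg:Nadd}). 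So within one iteration we hit Line~\ref{SampAlg:Sadd} some number of times and then possibly Line~\ref{SampAlg:Nadd} once.

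For part~\ref{prop1}, I would check that each Update is well-defined, i.e.\ that when we are told to ``remove an element of $\I'$ belonging to circuit $C_1(\cdots)$'', that circuit actually contains an element of $\I'$. The key point: at the moment we reach Line~\ref{SampAlg:Sadd} for element $f$, the invariant gives $S' \cup N' \cup \I' \in \M_1$; after conceptually adding $f$ we ask whether $f \in \cl_1(S' \cup N' \cup \I')$. If it is, the unique circuit is nonempty, and we must argue it meets $\I'$ — this is where I would use that $f$ lay in the original $\M_1$-circuit $C_1(S'\cup N'\cup T', e)$, so $f$ is spanned by $(S' \cup N' \cup T')\setminus f$; combined with $\I'$ ``covering'' the part of $T'$ that got dropped (a consequence of maintaining Invariant~\eqref{inv1}), the circuit through $f$ must use an $\I'$ element. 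The same reasoning handles Line~\ref{SampAlg:Nadd}: there $e$ is being added, and we break at most one $\M_1$-circuit and one $\M_2$-circuit, each argued to intersect $\I'$ using, respectively, Invariant~\eqref{inv1} and Invariant~\eqref{inv2} together with the precondition $\tI \in \M_1 \cap (\M_2/T)$ (so $T \cup \I'$ stays independent in $\M_2$ and adding $e$, which passed the Line~\ref{SampAlg:M2check} test $T \cup N' \cup e \in \M_2$, creates a circuit that must be broken inside $\I'$). The special case $e \in \I'$ is handled by just deleting $e$.

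For part~\ref{prop2}, I would verify the three conditions \eqref{inv1}--\eqref{inv3} hold at the end of the iteration. Condition~\eqref{inv3}, $\I' \subseteq \trE$, is immediate since $\trE$ shrinks by exactly one element (the $e$ just processed) and the only way $e$ could have been in $\I'$ is the special case in Line~\ref{SampAlg:Nadd}, where we explicitly removed it; all other Updates only delete from $\I'$. For \eqref{inv1}, $S' \cup N' \cup \I' \in \M_1$: each time $S'$ grows by $f$ or $N'$ grows by $e$, the corresponding Update deletes an $\I'$-element from the $\M_1$-circuit that would otherwise form, so independence in $\M_1$ is restored; I'd make this precise by noting that deleting any element of a circuit from an otherwise-dependent set of rank-deficiency one restores independence. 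For \eqref{inv2}, $T \cup N' \cup \I' \in \M_2$: $T$ and $S'$ are unchanged in $\M_2$ (adding $f$ to $S'$ doesn't affect the $\M_2$-condition since $f \in T$ already), so the only growth is $N'$ gaining $e$; the Line~\ref{SampAlg:Nadd} Update deletes an $\I'$-element from $C_2(T \cup N' \cup \I' \cup e)$, restoring $\M_2$-independence. The main obstacle I anticipate is the well-definedness argument in part~\ref{prop1} — specifically, cleanly proving that the $\M_1$-circuit through a dropped $T'$-element $f$ is forced to intersect $\I'$ rather than lying entirely within $S' \cup N'$. This requires carefully exploiting that $\I'$ was chosen (via Invariant~\eqref{inv1}) to ``span back up'' to $\cl_1(T)$ together with $S' \cup N'$, so that the rank lost by removing $f$ from $T'$ is recoverable only through $\I'$; I would isolate this as a short rank/closure computation using Fact~\ref{swapbases} and the hypothesis $\trE \subseteq \cl_1(T)$.
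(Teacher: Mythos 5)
Your plan follows the same overall structure as the paper's proof---casework on which line of \SampAlg fires the update, showing well-definedness by exhibiting a forced $\I'$-element in each circuit, then re-verifying Invariants~(\ref{inv1})--(\ref{inv3})---but you dismiss as ``immediate'' the one step that actually requires an argument. For Invariant~(\ref{inv3}) you assert that ``the only way $e$ could have been in $\I'$ is the special case in Line~\ref{SampAlg:Nadd}, where we explicitly removed it.'' That is the conclusion you need, not a given. There are two ways $e$ can leave $\trE$ without the special-case removal firing: (a) the check in Line~\ref{SampAlg:M2check} fails, or $C$ in Line~\ref{SampAlg:circuit} is empty, so the inner for-loop never runs at all; and (b) $C$ is nonempty but every $f\in C$ has $\K(f)=1$, so all of $C$ migrates to $S'$ and Line~\ref{SampAlg:Nadd} is never reached. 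In case (a) one must show $e\notin\I'$: if the $\M_2$-check fails this follows from Invariant~(\ref{inv2}); if the check passes but $C$ is empty, one combines Claim~\ref{clm:SAcircuits} (the circuit $C_1(S'\cup N'\cup T',e)$ is nonempty) with Invariant~(\ref{inv1}) to force that circuit to meet $T'$, contradicting $C=\emptyset$. In case (b) the paper runs a contradiction: if $e\in\I'$ and all of $C$ ends up in $S'$, then the entire nonempty circuit $C_1(S'\cup N'\cup T',e)$ now lies inside $S'\cup N'\cup e\subseteq S'\cup N'\cup\I'$, contradicting Invariant~(\ref{inv1}). None of this appears in your plan, and it is precisely what makes Invariant~(\ref{inv3}) nontrivial here.

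A smaller point: your well-definedness argument for Line~\ref{SampAlg:Sadd} is more roundabout than necessary. You invoke ``$\I'$ covering the part of $T'$ that got dropped'' and contemplate a rank/closure computation, but the clean route is simply Observation~\ref{clm:SAinvariant1}: since $f\in T'$ and $S'\cup N'\cup T'\in\M_1$, the subset $S'\cup N'\cup\{f\}$ is independent in $\M_1$, so any circuit in $S'\cup N'\cup\I'\cup\{f\}$ cannot live entirely inside $S'\cup N'\cup\{f\}$ and must therefore contain an $\I'$-element. No span-recovery argument via Fact~\ref{swapbases} is needed. Your handling of Line~\ref{SampAlg:Nadd} and of Invariants~(\ref{inv1})--(\ref{inv2}) otherwise matches the paper.
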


Finally, we use Invariant~\ref{invariant} to prove the main claim.
\begin{myclaim}\label{claim:sublemma}
    Given matroids $\M_1,\M_2$, a set $T \in \M_1 \cap \M_2$,  a set $\trE \subseteq \tE \subseteq \cl_1(T)$ 
    (denoting the set of remaining elements), 
    and $\K(e) \sim  \text{Bern}(1-p)$ i.i.d. for all $e \in T$, 
    suppose there exists a set $I'$ satisfying
    Invariant~\ref{invariant} at the beginning of some iteration of the for-loop of Line~\ref{SampAlg:FortE} in \SampAlg.  Then  the value of $N'$ at the end of \SampAlg satisfies
        \[ \E_{\K} [|N'|] \geq \frac{p}{1+p} |I'| \]
\end{myclaim}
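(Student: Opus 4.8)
The natural approach is induction on $|I'|$, mirroring the structure of the self-contained proof of the bipartite Sampling Lemma (Lemma~\ref{lemma:samp}), but now tracking the invariant-based set $I'$ rather than a matching. The base case $|I'| = 0$ is trivial since $|N'| \geq 0$. For the inductive step, I would look at the first element $e \in \trE$ processed by the for-loop in Line~\ref{SampAlg:FortE} and branch on what \SampAlg does with it.

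First I would dispose of the case where the $\M_2$-check in Line~\ref{SampAlg:M2check} fails: then $e$ is simply discarded, $\trE$ shrinks by one, and since $I' \subseteq \trE$ by Invariant~\ref{invariant}\,(\ref{inv3}), either $e \notin I'$ (so $I'$ still satisfies the invariant with the smaller $\trE$ and we apply the inductive hypothesis directly, losing nothing) or $e \in I'$ — but I should argue $e \in I'$ is actually impossible here, because $I' \subseteq \trE$ with $T \cup N' \cup I' \in \M_2$ would force $T \cup N' \cup e \in \M_2$, contradicting the failed check. So in this case $\E_\K[|N'|] \geq \frac{p}{1+p}|I'|$ with no loss.

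The main case is when Line~\ref{SampAlg:M2check} passes, so the algorithm walks down the circuit $C = C_1(S' \cup N' \cup T', e) \cap T'$, reading $\K(f)$ for elements $f \in C$ one at a time. Here I would condition on how many elements get read before either the circuit is exhausted (all read as $\K(f)=1$, element absorbed into $S'$, nothing added to $N'$) or the first $\K(f)=0$ is hit (then $e$ is added to $N'$). Each $\K(f)$ read is an independent $\mathrm{Bern}(1-p)$; crucially, since $\K$ is independent of the arrival order and of all earlier reads, I can treat the sequence of coin flips for the elements of $C$ as fresh. The key accounting is via Updates~\ref{updates} and Claim~\ref{claim:invariant}: each time we process a Line~\ref{SampAlg:Sadd} step we drop at most one element from $I'$, and the one Line~\ref{SampAlg:Nadd} step drops at most two (or exactly removes $e$ if $e \in I'$). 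So if $j$ elements of $C$ are read before the first tail — happening with probability $(1-p)^j p$ for $j \geq 0$ — then $I'$ loses at most $j+2$ elements and $N'$ gains $1$; if all $|C| =: c$ are read (probability $(1-p)^c$), $I'$ loses at most $c$ elements and $N'$ gains $0$. After the first element $e$ is fully processed, the remaining run of \SampAlg is governed by the updated $I'$ and the inductive hypothesis. Assembling these gives, after the first element,
\[
\E_\K[|N'|] \;\geq\; \sum_{j=0}^{c-1} (1-p)^j p \left( 1 + \tfrac{p}{1+p}(|I'| - j - 2) \right) + (1-p)^c \cdot \tfrac{p}{1+p}(|I'| - c),
\]
and the target bound $\frac{p}{1+p}|I'|$ should follow by a direct (geometric-series) calculation, exactly as in the $|\tI|=1$-to-general step of Lemma~\ref{lemma:samp}; the worst case is $c=1$, which reproduces the bipartite recursion $(1-p)\frac{p}{1+p}(|I'|-1) + p(1 + \frac{p}{1+p}(|I'|-2)) = \frac{p}{1+p}|I'|$.

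The hard part will not be the arithmetic but justifying that the conditioning is legitimate: that the coin flips $\K(f)$ for $f \in C$ are genuinely independent of everything the algorithm has done so far (true because $C \subseteq T'$ consists of elements whose bits were never read), and that after processing $e$ the tuple $(S', N', T, \trE \setminus e)$ together with the updated $I'$ again satisfies Invariant~\ref{invariant} so that the inductive hypothesis applies with a smaller $|I'|$ — this is precisely what Claim~\ref{claim:invariant} delivers, so I would invoke it at each update step. A minor subtlety is the special case $e \in I'$ inside the Line~\ref{SampAlg:Nadd} update: there $I'$ loses one element that is exactly $e$, but it also gains one in $N'$, so the bookkeeping still balances (and in fact is more favorable). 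Finally, setting $\trE = \tE$ and $I' = \tI$ at the start recovers the statement of the Sampling Lemma via Lemma~\ref{clm:GreedyEquiv}.
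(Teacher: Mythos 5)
Your proposal is correct and follows essentially the same route as the paper's proof: induction on $|I'|$, conditioning on the events defined by the first $\K$-tail among the elements of the circuit $C$, invoking Claim~\ref{claim:invariant} and Updates~\ref{updates} to keep $I'$ valid for the inductive hypothesis, and closing with the same geometric-series computation. The only differences are cosmetic: you base the induction at $|I'|=0$ rather than $|I'|=1$, and you dispose of the failed-$\M_2$-check case by directly arguing $e\notin I'$, whereas the paper packages that step into a ``WLOG $e$ is the first element with $C$ non-empty'' appeal (both of which are backed by the same reasoning as in Claim~\ref{claim:invariant}).
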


\begin{proof}
    To prove the claim we use induction on $|\I'|$ where
    $\I' \subseteq \tE$.  WLOG we can assume that $e$ is the first
    element such that $C $ in Line~\ref{SampAlg:circuit} is non-empty.     
    Let $C = \{t_1,\dots, t_l\} $ where $l\geq 1$.
     For $j \in \{0,\dots, l-1\}$, define event $B_j$ as 
    $\K(t_1) = \K(t_2) = \dots = \K(t_j)=1$ and $\K(t_{j+1}) = 0$. 
    Also, define $\thickbar{B} $ as $ \K(t_1) = \dots \K(t_l) = 1$.

    \noindent \textbf{\emph{Base Case:} }
    Since $C$ is a non-empty circuit, we can assume that any element 
    $f \in C$ satisfies the condition $\K(f) = 0$ with probability $p$. 
    Hence, $ |N'| \geq 1$ with probability at least $p$, proving the required claim.

    \noindent \textbf{\emph{Induction Step:} }
    The events $B_0,\dots, B_{l-1}$, and $\thickbar{B}$ partition
    the entire probability space. 

    \noindent \emph{~~Case 1} (Event $B_j$) : Since applying the Updates~\ref{updates} preserves 
    Invariant~\ref{invariant} by Claim~\ref{claim:invariant}, 
    we can apply the induction hypothesis to the updated  set $I'$. Moreover. Updates~\ref{updates}
    remove at most $j+2$ elements from $I'$ in the event $B_j$. 
    Applying the Induction hypothesis, we can conclude that 
    $\E_{\K}[ |N'| \,\big\vert\,  B_j ] \geq  1+ \frac{p}{1+p} (|I'| - j-2)$. 

    \noindent \emph{~~Case 2} (Event $\thickbar{B}$): Since applying the Updates~\ref{updates} preserves
    Invariant~\ref{invariant} by Claim~\ref{claim:invariant}, 
    we can apply the induction hypothesis to the updated set $I'$. Moreover, Updates~\ref{updates}
    remove  $l$ elements from $\I'$ in the event $\thickbar{B}$. 
    Conditioned on the event $\thickbar{B}$ and applying the induction hypothesis 
    to the updated set $I'$, we can conclude 
     $\E_{\K} [ |N'| ] \geq \frac{p}{1+p} (|I'| - l) $.

     Combining both the cases, we have $  \E_{\K}[|N|]$ is at least
    \begin{align*}
       &  ~\sum_{j=0}^{l-1} \Pr[B_j] \cdot  \E_{\K}[ |N'| \,\big\vert\, B_j] +  
                    \Pr[\thickbar{B}] \cdot \E_{\thickbar{B}}[ |N| \,\big\vert\, \thickbar{B}] \\
        &\geq  \sum_{j=0}^{l-1} (1-p)^j \, p\left( 1 + \frac{p}{1+p}(|\I'| - 2
            -j) \right) + (1-p)^l \left( \frac{p}{1+p} (|\I'| - l)\right) \\
        & = \frac{p}{1+p}\,|\I'| \text{\qquad using $\sum_{j=0}^{l-1} j\,(1-p)^j =  -\frac{l\,(1-p)^l}{p} - \frac{(1-p)}{p^2}( (1-p)^l-1)$.}  
    \end{align*}
    \end{proof}
    To finish the proof of Lemma~\ref{lem:matrmain}, we start with 
    $\I' := {\tI}$, $T':=T$,  $N' :=\emptyset$, and $S':= \emptyset$ in
    Claim~\ref{claim:sublemma}. The preconditions hold true because
    $T\cup {\I} \in \M_2$, $T\in \M_1$, and $ {\I} \in \M_1$.

\subsection{Proof of the Alternate View of Sampling Lemma} \label{sec:proofGreedyEquiv}
We restate the lemma for convenience.

\noindent \textbf{Lemma~\ref{clm:GreedyEquiv}}. 
       \textit{ Given a fixed $\K$ and assuming the elements of $\tE$ are presented in the same order,         the output of \SampAlg
        is the same as the output of $\Greedy(\M_1/S, \M_2/T,\tE).$ }

Starting with $S'= \emptyset$, $N'= \emptyset$, and $T' = T$, we make some simple observations and prove a small claim before proving Lemma~\ref{clm:GreedyEquiv}.
    \begin{observation} \label{clm:SAinvariant2}
The for-loop defined in  Line~\ref{SampAlg:FortE} of \SampAlg maintains the following invariant
        \[ S \subseteq S' \cup T' \subseteq T \]
    \end{observation}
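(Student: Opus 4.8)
The plan is to prove this by induction on the iterations of the for-loop in Line~\ref{SampAlg:FortE}, and in fact by tracking the inner for-loop of Line~\ref{SampAlg:For} step by step. At initialization we have $S' = \emptyset$ and $T' = T$, so $S' \cup T' = T$ and the invariant $S \subseteq T \subseteq T$ holds trivially (recall $S = \{e \in T : \K(e) = 1\} \subseteq T$). So I would focus on showing that no single update performed inside the loop body violates the invariant.

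The key observation is that the only place $T'$ shrinks is Line~\ref{SampAlg:Tminus}, where a single element $f \in C \subseteq T'$ is removed from $T'$; and the only place $S'$ grows is Line~\ref{SampAlg:Sadd}, where that same $f$ is added to $S'$, and this happens precisely when $\K(f) = 1$. So each iteration of the inner loop either (a) moves an element $f$ from $T'$ to $S'$ (when $\K(f)=1$), or (b) removes $f$ from $T'$ without adding it to $S'$ (when $\K(f)=0$). In case (a), the set $S' \cup T'$ is unchanged, so the right-hand inclusion $S' \cup T' \subseteq T$ is preserved, and the left-hand inclusion $S \subseteq S' \cup T'$ is preserved because $f$ moved to the $S'$ side. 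In case (b), $S' \cup T'$ loses the element $f$; the right inclusion $S' \cup T' \subseteq T$ is still fine since we only removed something, and the left inclusion $S \subseteq S' \cup T'$ still holds because $f$ had $\K(f) = 0$, hence $f \notin S$ by definition of $S$. Thus dropping $f$ from $S' \cup T'$ cannot break $S \subseteq S' \cup T'$. Note also that $N'$ is irrelevant here — it never affects $S'$ or $T'$ — so the updates to $N'$ in Line~\ref{SampAlg:Nadd} do not matter for this invariant.

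Putting these together: the invariant holds at initialization, and is preserved by every iteration of the inner for-loop (the only place $S'$ or $T'$ changes), hence it is preserved across every iteration of the outer for-loop in Line~\ref{SampAlg:FortE}, which is exactly the claim. I do not anticipate a real obstacle here; the only subtlety to state carefully is that $C = C_1(S' \cup N' \cup T', e) \cap T'$ is indeed a subset of $T'$, so each $f$ being processed genuinely lies in $T'$ before the removal in Line~\ref{SampAlg:Tminus}, which makes the "move/drop" dichotomy above well-defined.
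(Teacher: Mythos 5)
Your proof is correct and takes essentially the same approach as the paper's: both observe that the only modifications are either moving an element $f$ from $T'$ to $S'$ (when $\K(f)=1$) or dropping $f$ from $T'$ (when $\K(f)=0$, hence $f\notin S$), so starting from $S'=\emptyset$, $T'=T$ both inclusions are preserved. Your version is somewhat more explicit about the step-by-step case analysis and the role of $N'$, but the underlying argument is identical.
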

    \begin{proof}
        To show the first containment, observe that for each element if an $\K(e)=1$ then it simply
        moves from $T'$ to $S'$. Hence, all the elements of $S \subseteq S' \cup T'.$  
        To observe, the second containment, note that an element of $T'$ either moves into $S'$ or gets
        removed. Since $T'$ was initialized to $T$, the second containment follows. 
    \end{proof}

   \begin{observation} \label{clm:SAinvariant1}
The for-loop defined in Line~\ref{SampAlg:FortE} of \SampAlg maintains the following invariant 
        \[ S' \cup N' \cup T' \in \M_1 . \]
    \end{observation}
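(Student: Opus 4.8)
The plan is to prove this by induction on the iterations of the for-loop in Line~\ref{SampAlg:FortE}, with a case analysis for what a single iteration does to the set $S' \cup N' \cup T'$. For the base case, before the first iteration we have $S' = N' = \emptyset$ and $T' = T$, so $S' \cup N' \cup T' = T \in \M_1$ because $T \in \M_1 \cap \M_2$ by hypothesis.

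For the inductive step, suppose $P := S' \cup N' \cup T' \in \M_1$ at the start of an iteration processing some $e \in \tE$. If the membership test in Line~\ref{SampAlg:M2check} fails, none of $S', N', T'$ is touched and the invariant survives trivially. Otherwise, let $C := C_1(S' \cup N' \cup T', e) \cap T'$ be the set computed in Line~\ref{SampAlg:circuit}. If $C = \emptyset$ the inner for-loop of Line~\ref{SampAlg:For} is never entered and again $P$ is unchanged. So assume $C = \{t_1, \dots, t_l\}$, listed in the order the inner loop visits them. Since $C_1(S'\cup N'\cup T', e)$ is then a nonempty circuit of $P \cup e$ in $\M_1$, necessarily $e \notin P$ (if $e \in P$ then $P \cup e = P \in \M_1$ has no circuit), so $C_1(P,e)$ is the unique fundamental circuit of $P \cup e$ in $\M_1$, it contains $e$, and $C \subseteq C_1(P,e)$. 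I would then track the inner loop: as long as the visited element $t_j$ has $\K(t_j) = 1$, Lines~\ref{SampAlg:Tminus}--\ref{SampAlg:Sadd} merely move $t_j$ from $T'$ into $S'$, leaving the set $S' \cup N' \cup T' = P$ literally unchanged (and still independent). Hence either all of $t_1, \dots, t_l$ have $\K = 1$, in which case the loop terminates with $e$ never added and $P$ unchanged; or there is a first index $j$ with $\K(t_j) = 0$, in which case $t_j$ leaves $T'$ without entering $S'$, $e$ is added to $N'$ in Line~\ref{SampAlg:Nadd}, and the loop breaks. In this last case the new pool is $P' = (P \setminus \{t_j\}) \cup \{e\} = (P \cup e) \setminus \{t_j\}$, using $e \notin P$, $t_j \in T' \subseteq P$, and the fact that $t_1, \dots, t_{j-1}$ are still in the pool. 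Because $t_j \in C \subseteq C_1(P, e)$ lies on the fundamental circuit of $P \cup e$, deleting it from $P \cup e$ yields an independent set, so $P' \in \M_1$. This covers all cases, so $S' \cup N' \cup T' \in \M_1$ holds again at the end of the iteration, completing the induction.

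The only non-routine point — and it is a mild one — is the bookkeeping in the break case: one must verify that the elements $t_1, \dots, t_{j-1}$ processed before $t_j$ do not leave the pool (they only migrate from $T'$ into $S'$), so that the pool after the iteration is \emph{exactly} $(P \cup e) \setminus \{t_j\}$ and not something strictly smaller, and that $e$ was not already present, which is forced by $C_1(P,e)$ being a nonempty circuit (one could alternatively note via Observation~\ref{clm:SAinvariant2} that $S' \cup T' \subseteq T$, so $e$ cannot sit in $S' \cup T'$ when $\tE$ is disjoint from $T$). Granting that, the single matroid fact invoked is the standard one that removing any element of the fundamental circuit $C_1(P,e)$ from $P \cup e$ restores independence, which finishes the argument.
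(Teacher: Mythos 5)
Your proof is correct and follows essentially the same inductive argument as the paper's, with a case analysis on whether the inner loop migrates elements from $T'$ to $S'$ (pool unchanged) or adds $e$ to $N'$ while removing a circuit element from $T'$ (a single exchange along the fundamental circuit restores independence). You are somewhat more careful than the paper's terse write-up — tracking the pool through the inner loop explicitly, noting that $e \notin P$, and correctly stating the base case $S'=N'=\emptyset$, $T'=T$ where the paper has a small typo claiming $S'=T'=\emptyset$ — but the underlying argument is the same.
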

    \begin{proof}
        Since $T \in \M_1$ and $S'=T'=\emptyset$ at the beginning, we can conclude that this is correct at the
        beginning of \SampAlg. Now consider an iteration of  the for-loop defined in Line~\ref{SampAlg:FortE}.
 When an element $f$ is added to $S'$ in 
        Line~\ref{SampAlg:Sadd}, it must have belonged to $T'$, implying that $S' \cup N' \cup T'$ is unchanged.
        If an element $e$ is added to $N'$ (in Line~\ref{SampAlg:Nadd}) then we must remove an element $f$ from $T'$ 
        (due to Line~\ref{SampAlg:Tminus}),  which belonged to the unique circuit $C_1(S'\cup T' \cup N',e)$. Hence,
        $S' \cup N'\cup e  \cup (T'\setminus f)$  is still an independent set in $\M_1$. 
    \end{proof}

    \begin{myclaim} \label{clm:SAcircuits}
        For an element $e \in \tE$, if Line~\ref{SampAlg:circuit} of \SampAlg is reached  then 
        $C_1(S' \cup N' \cup T', e) $ is non-empty. 
    \end{myclaim}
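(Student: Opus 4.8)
The plan is to prove, alongside Observation~\ref{clm:SAinvariant1}, the stronger invariant that the running set $Z := S' \cup N' \cup T'$ always spans $T$ in $\M_1$; that is, $\cl_1(Z) = \cl_1(T)$ holds at the start of every iteration of the outer for-loop on Line~\ref{SampAlg:FortE}. Granting this, the claim follows at once: when Line~\ref{SampAlg:circuit} is reached for some $e \in \tE$, the hypothesis $\tE \subseteq \cl_1(T)$ gives $e \in \cl_1(T) = \cl_1(Z)$, whence $\rank_{\M_1}(Z \cup e) = \rank_{\M_1}(Z) = |Z|$ since $Z \in \M_1$; also $e \notin Z$, because the elements arriving through $\tE$ are distinct from those of $T \supseteq S' \cup T'$ and $N'$ contains only elements of $\tE$ handled in strictly earlier iterations. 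Thus $Z \cup e$ has $|Z| + 1$ elements but rank $|Z|$, hence is dependent; as $Z$ is independent in $\M_1$ by Observation~\ref{clm:SAinvariant1}, $Z \cup e$ has a unique circuit, which is non-empty and is exactly $C_1(S' \cup N' \cup T', e)$.

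It remains to establish the span invariant, which I would do by induction over the updates \SampAlg performs to the triple $(S', N', T')$, the base case $Z = T$ being trivial. Within one pass of the outer loop, $Z$ is altered only inside the for-loop on Line~\ref{SampAlg:For}, in one of two ways. When $\K(f) = 1$ (Line~\ref{SampAlg:Sadd}), $f$ merely migrates from $T'$ to $S'$, so $Z$ is unchanged as a set. When $\K(f) = 0$ (Lines~\ref{SampAlg:Tminus}--\ref{SampAlg:Nadd}), we delete $f$ from $T'$, add the current $e$ to $N'$, and break; so this second type of update happens at most once per outer pass, and every update preceding it in that pass is of the first type. For the second type, $f$ lies in $C = C_1(Z, e) \cap T'$ as computed on Line~\ref{SampAlg:circuit} with $Z$ at its start-of-pass value, and that value is still current since only first-type updates have intervened. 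By the inductive hypothesis $\cl_1(Z) = \cl_1(T)$ and hence $e \in \cl_1(Z)$, so $Z \cup e$ has rank $|Z|$ and its unique circuit is non-empty and contains $f$. The standard fact that deleting any single element of the unique circuit of an independent set extended by one spanned element preserves the span then gives $\cl_1((Z \setminus f) \cup e) = \cl_1(Z \cup e) = \cl_1(Z) = \cl_1(T)$, closing the induction.

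The one subtlety is the apparent circularity: showing the second-type update preserves the invariant invokes that $C_1(Z, e)$ is a genuine non-empty circuit — which is the assertion of the claim. This is resolved purely by ordering the induction correctly: the invariant is maintained as of the start of each outer pass; Line~\ref{SampAlg:circuit} then executes, at which point the claim holds by the argument of the first paragraph; and only afterwards do the inner-loop updates occur, whose analysis may therefore rely on $C_1(Z, e)$ being a non-empty circuit. Getting this bookkeeping straight is the main thing to be careful about; the matroid facts used (uniqueness of the circuit of an independent set plus a spanned element, and span-invariance under deleting a circuit element) are routine and can be found in Oxley~\cite{O-MT06}.
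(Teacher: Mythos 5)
Your proposal is correct, but it takes a genuinely different route from the paper's. Both proofs hinge on the same structural fact---that the running set $Z = S' \cup N' \cup T'$ is always a basis of $\cl_1(T)$---but express and establish it differently. You maintain the span invariant $\cl_1(Z) = \cl_1(T)$, proving it via the circuit-deletion span-preservation fact, and you are right that this needs the careful ordering of the induction (claim at Line~\ref{SampAlg:circuit} before the inner-loop updates) to escape circularity. The paper instead maintains only the cardinality invariant $|S' \cup N' \cup T'| = |T|$, which it proves by pure bookkeeping: every time $T'$ loses an element, either $S'$ or $N'$ gains one, so the union's size is constant. Combined with the containment $S' \cup T' \cup \tE \subseteq \cl_1(T)$ (so $\rank_{\M_1}$ of that set is at most $|T|$), an empty circuit would force $Z \cup e$ to be independent of size $|T|+1$ inside a set of rank $|T|$---a contradiction. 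The paper's counting argument is more elementary and needs no circuit-deletion machinery or delicate ordering; your span argument is conceptually tidy and arguably closer to the invariant one would want for later parts of the analysis, but it is heavier than the task requires. Both are valid, and the two invariants are equivalent given $Z \in \M_1$ and $Z \subseteq \cl_1(T)$.
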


    \begin{proof}
      We know $\tE \subseteq \cl_1(T)$. Moreover, $S' \cup T' \subseteq T \subseteq \cl_1(T)$ (using Observation~\ref{clm:SAinvariant2}). Hence, $S' \cup T' \cup \tE \subseteq \cl_1(T)$ implies 
\begin{align} \label{eq:boundedrank}
\rank_{\M_1}(S' \cup T' \cup \tE ) \leq |T|.
\end{align}
We prove the lemma by contradiction and assume circuit $C_1(S'\cup N'\cup T',e )$ is empty. Using Observation~\ref{clm:SAinvariant1}, this implies $\left(S'\cup N' \cup T' \cup e\right) \in \M_1$. Now,  $\rank_{\M_1}(S' \cup N' \cup T' \cup e) = |S' \cup N' \cup T'|+1 \leq \rank_{\M_1}(S' \cup T' \cup \tE ) \leq |T|$ using Eq.~(\ref{eq:boundedrank}). In the next paragraph, we show that the algorithm always maintains $|S' \cup N' \cup T'| = |T|$, which gives the contradiction $|T|+1 \leq |T|$.

 To prove $|S' \cup N' \cup T'| = |T|$, we note that the only time $T'$ decreases is in Line~\ref{SampAlg:Tminus}. In this case, we either add
        the dropped element to $S'$ in Line~\ref{SampAlg:Sadd} or a new element to $N'$ in Line~\ref{SampAlg:Nadd}. 
        Hence, the $|S' \cup N' \cup T'|$ is unchanged in the for-loop of Line ~\ref{SampAlg:FortE}. 
        Since we initialize $S'=N'=\emptyset$ and $T'=T$, we can conclude that this $|S'\cup N' \cup T'| =|T|$
        is maintained.
    \end{proof}

We now have the tools to prove Lemma~\ref{clm:GreedyEquiv}.
    \begin{proof}[Lemma~\ref{clm:GreedyEquiv}] 
        Let us assume the elements of $\tE$ are presented in order $e_1,\dots, e_t$ where $t=|\tE|$.  We will use induction on the following hypothesis. \\
        \textbf{\emph{Induction Hypothesis (I.H.):}} 
        After both algorithms have seen the first $k$ elements $e_1,\dots, e_k$, the set $N'$ in 
        \SampAlg is the same as the set of elements selected by $\Greedy(\M_1/S, \M_2/T,\tE)$.

        \noindent \textbf{\emph{Base Case:}}
        Initially, both algorithms have not selected any element. Hence, $N'= \emptyset$ 
        is the set of all elements selected by \Greedy.  

        \noindent \textbf{\emph{Induction Step:}}
        Suppose the I.H. is true for elements $e_1,\dots, e_{k-1}$ and we are considering element $e_k.$ 
        If $e_k$ does not satisfy $T \cup N' \cup e_k \in \M_2$, then it will also not 
        satisfy the same condition for \Greedy because $N'$ is the set selected by \Greedy (by I.H.) 
        and $N' \cup e \notin \M_2/T$. In this case we are done with the induction step. 	From now assume  $T \cup N' \cup e_k \in \M_2.$

        Suppose $e_k$ is added to $N'$ in \SampAlg,
        then we claim $\Greedy(\M_1/S, \M_2/T,\tE)$ will also select this element. The only location
        where $e_k$ could be added is Line~\ref{SampAlg:Nadd}. This occurs
        when we remove some appropriate element $f \in T'$ to ensure $S' \cup (T'\setminus f) \cup N' \cup e \in \M_1$. Furthermore $\K(f) = 0$ implies $f\notin S$.
        By Observation~\ref{clm:SAinvariant2}, set $S \subseteq S' \cup T' \setminus f$. Hence, $S' \cup (T'\setminus f) \cup N' \cup e \in \M_1$ implies $S \cup N' \cup e \in \M_1$ and  \Greedy will also select this element. 

        Next, suppose $e_k$ is not picked by the algorithm.  By Claim~\ref{clm:SAcircuits},
        we know that $C_1(S'\cup N' \cup T',e)$ is non-empty. In this case,
        every element $f \in C$ encountered
        in the for-loop of Line~\ref{SampAlg:For} must have had $\K(f)=1$. This implies that at the end
        of the for-loop of Line~\ref{SampAlg:For}, circuit $C_1(S' \cup N' \cup T', e) \subseteq S' \cup N'$. Since $S'\subseteq S$ 
        (by Observation~\ref{clm:SAinvariant2}), this gives $N' \cup e \notin \M_1/S.$ Hence, 
        \Greedy cannot select element $e_k.$ 
    \end{proof}

\subsection{Proof that the Updates are valid} \label{sec:invariantproof}

In this section we prove Claim~\ref{claim:invariant} by showing that Updates~\ref{updates} are well-defined and maintain Invariant~\ref{invariant}.

\begin{proof}[Claim~\ref{claim:invariant}]
    Since Invariant~\ref{invariant} holds before entering into the for-loop in
    Line~\ref{SampAlg:FortE}, we prove this claim by showing that after one
    iteration of the for-loop, i.e., after arrival of an element $e$,
    Properties~(\ref{prop1}) and (\ref{prop2}) hold.
        
    We first show that the properties hold if the set $C$ in
    Line~\ref{SampAlg:circuit} is empty. Since in this case we do not perform
    any updates to sets $S',N', \I'$, and $T'$, Invariant~\ref{inv1},
    Invariant~\ref{inv2}, and well-definedness trivially hold. To prove
    Invariant~(\ref{inv3}), we need to show  $e \not\in \I'$. This is true
    because by Claim~\ref{clm:SAcircuits}  element $e$  forms a circuit in
    $C_1(S' \cup N' \cup T',e)$,  and by Invariant~(\ref{inv1}) we know $S'
    \cup N' \cup I' \in \M_1$. Hence, the circuit $C_1(S' \cup N' \cup T',e)$
    contains some element of $T'$, which gives the contradiction that $C$ is non-empty.

    Now WLOG, we can assume that element $e$ forms a non-empty set $C$
    in  Line~\ref{SampAlg:circuit}. We prove
    Property~(\ref{prop1}), Invariant~(\ref{inv1}), and Invariant~(\ref{inv2})
    by showing that they hold after any iteration of the for-loop
    of Line~\ref{SampAlg:For}.  Note that  sets $S', N'$, and $\I'$ can only
    change in Lines~\ref{SampAlg:Sadd} or \ref{SampAlg:Nadd} of the for-loop.
    We prove the claim for both these cases. 

    \noindent \textbf{\emph{~~Case 1}} (Line~\ref{SampAlg:Sadd}):  Since $f$ belonged to $T'$,
     from Observation~\ref{clm:SAinvariant1} we
     know $\left(S' \cup N' \cup f\right) \in \M_1$. Now using Invariant~(\ref{inv1}) (which
     holds before the iteration), we can
     deduce that $C_1(S' \cup N' \cup I', f) \cap I'$ is non-empty and the update
     is well-defined. Invariant~(\ref{inv1}) holds because the update breaks any
     circuit in $S' \cup N' \cup \I'$ in $\M_1$. Since $T$ and $N'$ are unchanged
     and $\I'$ only gets smaller, Invariant~(\ref{inv2}) holds trivially.  

     \noindent \textbf{\emph{~~Case 2}} (Line~\ref{SampAlg:Nadd}): Since we are adding $e$ to $N'$, it must
     be the case that $S' \cup N' \cup e \in \M_1$ (by
     Lemma~\ref{clm:GreedyEquiv}). If $C_1(S' \cup N' \cup I' \cup e)$ is non-empty
     then $C_1(S' \cup N' \cup \I' \cup e) \cap I'$ must be non-empty. Moreover, by
     Line~\ref{SampAlg:M2check}, we know  that $T
     \cup N' \cup e \in \M_2$. Hence, if $C_2 (T \cup N' \cup I' \cup e)$ is
     non-empty then $C_2(T \cup N' \cup I' \cup e) \cap I'$ must be non-empty. Both
     of them together prove the the update is
     well-defined in this case. Invariant~(\ref{inv1}) and Invariant~(\ref{inv2})
     hold because Updates~\ref{updates} break any circuit $C_1(S' \cup N' \cup \I'
     \cup e)$ and $C_2(T \cup N' \cup \I' \cup e)$.

    Finally, to finish the proof of this claim, we show that
    Invariant~(\ref{inv3}) also holds at the end of every iteration of the for-loop
    of Line~\ref{SampAlg:FortE}.  If $e \not\in \I'$ then
    Invariant~(\ref{inv3}) trivially holds as $\trE$
    looses element $e$, and $\I' \subseteq \trE \setminus e$. Now suppose $e \in \I'$.
    Here we consider two cases. 
    
    \noindent \textbf{\emph{~~Case 1}} ($e$ is added to $N'$): Here \SampAlg reaches
    Line~\ref{SampAlg:Nadd} and the special case of Update~\ref{updates} ensures
    that $e$ is removed from $\I'$. Hence $\I' \subseteq \trE$. 

    \noindent \textbf{\emph{~~Case 2}} ($e$ is not added to $N'$): From the above proof, we know
    that Invariants~(\ref{inv1}) and (\ref{inv2}) are preserved at the end of
    this iteration. We prove by contradiction
    and assume that $e \in I'$ at the end of this iteration. Since $e \notin N'$, all the elements
    of $C$ in Line~\ref{SampAlg:circuit} are added to $S'$ by the end of this iteration. Hence, 
    the entire circuit in Line~\ref{SampAlg:circuit} (which is non-empty by
    Claim~\ref{clm:SAcircuits}) is contained in $S' \cup N' \cup e$ at the end
    of the iteration. Since $e \in I'$, this implies that   
    $S' \cup N' \cup I'$ is not independent.  This is a contradiction as 
    Invariant~(\ref{inv1}) is violated.
\end{proof}

\else

\fi


\ifFULL
\section{Beating Half for General Graphs} \label{sec:generalgraphs}

\begin{figure}
	\centering
	\begin{tikzpicture} [thin,scale=1.5]
	\draw (0,1.5) ellipse (0.5cm and 2cm)[fill=black!0];
	\draw (2,1.5) ellipse (0.5cm and 2cm)[fill=black!0];
	\foreach \y in {0,...,2}
	{
	\draw [-, very thick] (-0.25 , 0.5*\y) to node[auto]{$ $} (0.3 , 0.5*\y + 0.2);
	\draw [-, densely dashdotted, thick] (-0.25, 0.5*\y) to node[auto]{$ $} (2, 0.5*\y);
	\draw [-, densely dashdotted, thick] (0.3, 0.5*\y + 0.2) to node[auto]{$ $} (2, 0.5*\y + 0.2);
	\node at (-0.25 , 0.5*\y) [graphnode]{};
	\node at (0.3 , 0.5*\y + 0.2) [graphnode]{};
	\node at (2 , 0.5*\y) [graphnode]{};
	\node at (2 , 0.5*\y + 0.2) [graphnode]{};
	}
	\foreach \y in {3,...,5}
	{
	\draw [-, dashed] (-0.25 , 0.5*\y) to node[auto]{$ $} (0.3 , 0.5*\y + 0.2);
	\draw [-, densely dashdotted, thick] (-0.25, 0.5*\y) to node[auto]{$ $} (2, 0.5*\y);
	\draw [-, densely dashdotted, thick] (0.3, 0.5*\y + 0.2) to node[auto]{$ $} (2, 0.5*\y + 0.2);
	\node at (-0.25 , 0.5*\y) [graphnode]{};
	\node at (0.3 , 0.5*\y + 0.2) [graphnode]{};
	\node at (2 , 0.5*\y) [graphnode]{};
	\node at (2 , 0.5*\y + 0.2) [graphnode]{};
	}
	\node at (-1,1.5) {$U$};
	\node at (3,1.5) {$V$};

	\node at (-.25,2.7) {$u_2$};
	\node at (0.23,2.9) {$u_1$};
	\node at (2.25,2.55) {$v_2$};
	\node at (2.25 ,2.8) {$v_1$};
\end{tikzpicture}
    \caption[margin=5cm]{ $U$ denotes the set of vertices matched by \Greedy in Phase~(a) and $V$ denotes the remaining vertices of $G$. Solid edges within $U$ denote the picked edges and dashed edges within $U$ denote the marked ones. Dashed edges from $U$ to $V$ denote the $\OPT$ edges.} 
        \label{figOME}
\end{figure}
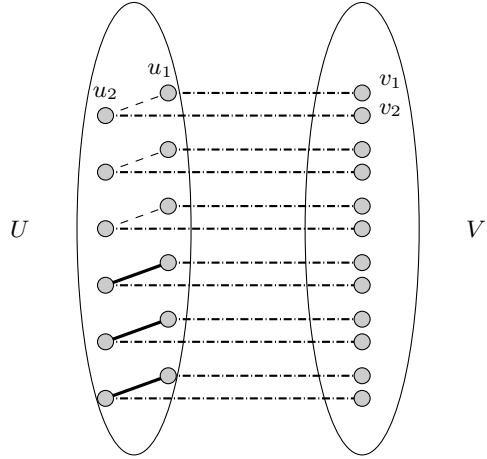		

\noindent \textbf{Theorem~\ref{thm:generalgraphs}.}
\emph{In the random edge arrival model, the  online matching problem for general graphs has
     a $(\frac{1}{2} + \delta')$-competitive randomized algorithm, where $\delta' >0$ is a constant.
}
\begin{proof} [Proof overview]
Let $G$ be the arrival graph with edge set $E$.
Using the same idea as Lemma~\ref{lem:reduc}, we can again focus on graphs where \Greedy has a competitive ratio of  at most $\left(\frac12 + \epsilon \right)$ for any constant $\epsilon>0$.
We construct  a two-phase algorithm that uses the algorithm from Theorem~\ref{thm:bipmatching} as a subroutine. In Phase~(a), we run \Greedy; however, each edge selected by \Greedy is picked only with probability $(1-p)$. With probability $p$, we mark it along with its vertices and behave as if it has been matched for the rest of Phase~(a). Since the hastiness property (Lemma~\ref{lemma:KMMmatch}) is also true for general graphs, in expectation we  pick $(1-p)\left( \frac12 - O(\frac{\epsilon}{f})\right)\,|\OPT|$ edges and mark $p\left( \frac12 - O(\frac{\epsilon}{f})\right)\,|\OPT|$ edges in Phase~(a). Now we need  to ensure that in expectation $(1+\gamma)$ edges, for some constant $\gamma>0$, are picked per marked edge in Phase~(b).

Let $T_f$ denote the set of edges selected by \Greedy in Phase~(a), i.e., both picked and marked edges.  Let $U$ denote the set of vertices matched in $T_f$ and $V$ denote the remaining set of vertices of $G$. Using the following simple Fact~\ref{lemma:KMMsimple} and Lemma~\ref{lemma:KMMmatch}, we can argue that $\left( 1 - O(\frac{\epsilon}{f}) \right)$ $\OPT$ edges go from a vertex in $U$ to a vertex in $V$ in graph $G$. 

\begin{fact}[Lemma~$1$ in~\cite{KMM-APPROX12}] \label{lemma:KMMsimple}
    Consider a maximal matching $T$ of graph $G$ such that 
    $|T| \leq \left(\frac{1}{2} + \epsilon \right) |\OPT|$ for some $\epsilon \geq 0$.  Then
    $G$ contains at least $\left(\frac{1}{2} - 3\epsilon \right) |\OPT|$ vertex
    disjoint 3-augmenting paths with respect to $T$. 
\end{fact}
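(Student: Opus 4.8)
The plan is to work entirely inside the symmetric difference $H := T \triangle \OPT$ and extract the $3$-augmenting paths by a counting argument. Since $T$ and $\OPT$ are both matchings, every vertex lies in at most one edge of $T$ and at most one edge of $\OPT$, so in $H$ every vertex has degree at most $2$; hence $H$ is a vertex-disjoint union of alternating paths and alternating cycles, and every cycle has even length. The $3$-augmenting paths with respect to $T$ that I want are exactly the length-$3$ components of $H$ (paths whose middle edge is in $T$ and whose two outer edges are in $\OPT$), and distinct components of $H$ are automatically vertex-disjoint, so it suffices to lower bound the number of such components.

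First I would rule out the wrong kind of odd path. Suppose $H$ contained an alternating path both of whose end-edges lie in $T$; such a path has one more $T$-edge than $\OPT$-edge, and (because its two endpoints have degree $1$ in $H$ via a $T$-edge, so they cannot be covered by $\OPT$) swapping $\OPT$ along this path would yield a matching of size $|\OPT|+1$, contradicting the maximality of $\OPT$. Therefore every odd-length component of $H$ is a path whose two end-edges lie in $\OPT$, i.e.\ a $T$-augmenting path; and since $T$ is maximal, no such path has length $1$, so each has length at least $3$. Write $P$ for the set of these $T$-augmenting paths.

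Next comes the bookkeeping. Every even component of $H$ (cycle or path) has equally many $T$-edges and $\OPT$-edges, while a $T$-augmenting path of length $2j+1$ has exactly $j$ edges of $T$ and $j+1$ edges of $\OPT$; summing the $\OPT$-versus-$T$ surplus over all components gives $|P| = |\OPT \setminus T| - |T \setminus \OPT| = |\OPT| - |T| \ge (\tfrac12 - \epsilon)\,|\OPT|$. Split $P$ into $P_3$, the paths of length exactly $3$, and $P_{\ge 5}$, those of length at least $5$. The $T$-edges contained in the members of $P$ are pairwise distinct, with each member of $P_3$ contributing one and each member of $P_{\ge 5}$ contributing at least two, so $|P_3| + 2|P_{\ge 5}| \le |T|$. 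Combining this with $|P_3| + |P_{\ge 5}| = |\OPT| - |T|$ gives $|P_3| \ge 2(|\OPT| - |T|) - |T| = 2|\OPT| - 3|T| \ge (\tfrac12 - 3\epsilon)\,|\OPT|$, using $|T| \le (\tfrac12 + \epsilon)|\OPT|$. Since each path in $P_3$ is a $3$-augmenting path with respect to $T$ and the members of $P_3$ are vertex-disjoint, this is the claimed collection.

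The one step carrying the real content is the inequality $|P_3| + 2|P_{\ge 5}| \le |T|$ together with its consequence: long augmenting paths are ``expensive'' because they consume at least two edges of $T$ while adding only one unit to the surplus $|\OPT|-|T|$, so when $|T|$ is barely above $\tfrac12|\OPT|$ there is no room for many of them, forcing almost all augmenting paths to have length exactly $3$. Everything else is a routine degree/parity analysis of the components of $T \triangle \OPT$, so I would expect no further obstacles.
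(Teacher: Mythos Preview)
Your argument is correct. The paper does not actually prove this fact; it merely quotes it as Lemma~1 of Konrad--Magniez--Mathieu~\cite{KMM-APPROX12} and uses it as a black box in the general-graph section, so there is no in-paper proof to compare against. What you wrote is essentially the standard proof: decompose $T\triangle\OPT$ into alternating paths and cycles, observe that maximality of $\OPT$ forbids odd paths ending in $T$ while maximality of $T$ forbids length-$1$ augmenting paths, count the surplus $|\OPT|-|T|$ as the number of $T$-augmenting components, and then use the budget inequality $|P_3|+2|P_{\ge 5}|\le |T|$ to squeeze out the long paths. One minor wording point: the $3$-augmenting paths you exhibit are not \emph{all} $3$-augmenting paths of $G$ with respect to $T$ (the outer edges need not come from $\OPT$ in general), only a particular vertex-disjoint family of them; but since you only need a lower bound, this is harmless.
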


\noindent Moreover, in expectation at most $f$ fraction of these $(U,V)$ $\OPT$ edges can appear in Phase~(a). Thus, setting $\epsilon \ll f \ll 1$ gives that most of the $\OPT$ edges, i.e., $\left( 1 - O(\frac{\epsilon}{f}) -f \right)$ fraction, appear in Phase~(b). This implies that most of the marked edges contain two 3-augmentation edges as shown in Figure~\ref{figOME}. 

Now
consider a marked edge $(u_1,u_2)$ with $(u_1,v_1)$ and $(u_2,v_2)$ denoting its 3-augmentations. In comparison to bipartite graphs, the new concern in general graphs is that there might be an edge between $u_1$ and $v_2$ as triangles are possible in non-bipartite graphs. Hence, the Sampling Lemma~\ref{lemma:samp} cannot be directly applied here. However, we are only interested in the bipartite graph  between vertices $U$ and $V$. Therefore, in Phase~(b), we run the algorithm from Theorem~\ref{thm:bipmatching} for bipartite graphs restricted to $(U,V)$ edges. For sufficiently small values of constants $\epsilon$  and $f$, the constant $\delta$ gain in Theorem~\ref{thm:bipmatching} is sufficient to obtain a constant $\delta'$ gain for this theorem.
\end{proof}

\else
\fi


\ifFULL
\subsubsection*{Acknowledgments}
We are grateful to Anupam Gupta for several  discussions on the problem.  We are thankful to Ashwinkumar Badanidiyuru for pointing us connections between online and offline matroid intersection. We thank Anupam Gupta, Manuel Blum, Bernhard Haeupler,  Deeparnab Chakrabarty, Euiwoong Lee, and David Wajc for feedback on an earlier draft of the paper.
\else
\fi
{
\bibliographystyle{abbrv}
\bibliography{bib}
}

\ifFULL
\appendix

 \section{Notation} \label{sec:notation}

\begin{table}[htbp!]
\begin{tabular}{r c p{14cm} }

\multicolumn{3}{c}{General Notation}\\
\multicolumn{3}{c}{}\\
$\M_{i}$ &  & Matroid indexed by $i$\\
$ A \in \M$ & & Subset $A$ is an independent set in the matroid $\M$ \\
$T \cup e$ & & Short form for notation $T \cup \{ e \} $\\
$\rank_{\M}$ & & The rank function defined by matroid $\M$ \\
$\thickbar{\imath}$ & & Denotes the index $3-i$ \\
$\M_1 \cap \M_2$ & & The set of subsets that are independent in both matroids $\M_1$ and $\M_2$  \\
$\M / T$ & & The matroid resulting from  contracting subset $T$ in matroid $\M$ \\
$\cl_i(T)$ & & $\{ e \mid (e \in E) \wedge (\rank_{\M_i}(T\cup \{e\}) = \rank_{\M_i}(T) \}$ \\
$\C_i(T \cup e)$ & & The unique circuit formed by $T \cup \{e\} $ in matroid $\M_i$. This  is undefined when $T$ is
not an independent set and $e \notin \cl_i(T)$.  \\
$ E$ & & The set of ground elements common to the matroids $\M_1$ and $\M_2$ \\
$ \pi$ & & A permutation on the set $E$ \\
$\OPT$ & & A fixed maximum independent set in the intersection of $\M_1 \cap \M_2$ \\
$\G(f)$ & & $\E_{\pi}[ | T_f|] / |\OPT|$ \\
  
\multicolumn{3}{c}{}\\
\multicolumn{3}{c}{}\\
\multicolumn{3}{c}{Notation used by \MkGreedy in Section~\ref{sec:matrOutline}}\\
\multicolumn{3}{c}{}\\
$\K$ & & The set of random bits used in the algorithm. For each $e \in E$, we have $\K(e) \sim Bern(1-p)$ \\
\emph{selecting}& & The element is chosen by  $\Greedy$ in Phase~(a)  \\
\emph{picking} & & The element is chosen  by \MkGreedy in the final solution \\
\emph{marking} & & The element is chosen by $\Greedy$ in Phase~(a) but the algorithm does not pick it \\
$T_f$ & & The set of elements selected by $\Greedy$  in Phase~(a) \\
$S$ & & The set of elements picked by \MkGreedy in Phase~(a)  \\
$N_i$ & & The set of elements belonging to $\M_i /S \cap \Mni /T$ picked by \MkGreedy in Phase~(b)  \\
\end{tabular}
\caption{Table of Notation}
\label{tab:TableOfNotation}
\end{table}


\section{Miscellaneous Results} \label{appn:otherResults}
\subsection{\Greedy Beats Half on Almost Regular Graphs}
\begin{theorem}
For online matching in random edge arrival model, \Greedy has a competitive ratio of at least $(1-\frac1e)$ on any $d$-regular graph.
    \end{theorem}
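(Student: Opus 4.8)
The plan is to bound, for each vertex $v$ of the $d$-regular graph $G$ (say on $n$ vertices), the probability that \Greedy leaves $v$ unmatched, and show it is at most $1/e$. This suffices: if $M$ denotes the matching \Greedy produces on a uniformly random edge order, then $\E[|M|] = \tfrac12\sum_{v}\Pr[v\text{ matched}] \ge \tfrac12 n(1-1/e)$, while every matching has size at most $n/2$, so the competitive ratio is $\E[|M|]/|\OPT| \ge (1-1/e)\cdot\frac{n/2}{|\OPT|}\ge 1-1/e$ (for $d$-regular bipartite $G$ one has $|\OPT|=n/2$, so the estimate is essentially tight).

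Fix $v$ with neighbours $u_1,\dots,u_d$, and let $Y_v$ be the number of indices $i$ for which $u_i$ is still free at the moment the edge $(v,u_i)$ is processed by \Greedy. First, I would record the clean equivalence ``$v$ is matched $\iff Y_v\ge 1$'': if some $(v,u_i)$ is processed while $u_i$ is free, then for the earliest such edge in time order $v$ is also still free (otherwise $v$ was matched earlier via an edge whose other endpoint was also free when processed, contradicting minimality), so \Greedy picks it; conversely if $Y_v=0$ every edge at $v$ is rejected. Second, $\E[Y_v]\ge 1$: conditioning on the arrival time $t$ of $(v,u_i)$, with probability $(1-t)^{d-1}$ none of the other $d-1$ edges at $u_i$ has arrived before $t$, and then $u_i$ is free when $(v,u_i)$ is processed; integrating, $\Pr[u_i\text{ free when }(v,u_i)\text{ processed}]\ge\int_0^1(1-t)^{d-1}\,dt=1/d$, and summing over the $d$ neighbours gives $\E[Y_v]\ge1$. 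Writing $p_i$ for these $d$ probabilities (so $\sum_i p_i\ge1$), if the events ``$u_i$ free when $(v,u_i)$ is processed'' were independent, or merely negatively associated, we would get $\Pr[Y_v=0]\le\prod_i(1-p_i)\le(1-\tfrac1d)^d\le1/e$ by AM--GM, hence $\Pr[v\text{ matched}]\ge1-1/e$, and we would be done.

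The hard part is exactly this last sub-multiplicativity step: $\Pr[\bigcap_i\{u_i\text{ busy when }(v,u_i)\text{ processed}\}]\le\prod_i\Pr[u_i\text{ busy when }(v,u_i)\text{ processed}]$. For bipartite $G$ the $u_i$ lie on one side and pairwise share no edge, and the greedy matching process is negatively associated in the relevant sense --- matching one vertex only blocks its neighbours and thereby keeps \emph{their} other neighbours free longer --- so the inequality holds; I would prove it either by a coupling / deferred-decisions argument in the spirit of the Sampling Lemma of Section~\ref{section:matroids}, revealing the edges in time order while postponing the commitments at $u_1,\dots,u_d$, or by a direct induction on $t$ for the quantity $\sup_{G,w}\Pr[w\text{ free at time }t]$, which the same computation shows is at most $\big(1-\tfrac{1-(1-t)^d}{d}\big)^d$ and hence at most $1/e$ at $t=1$. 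For non-bipartite $d$-regular $G$, short cycles through $v$ (e.g.\ an edge between two neighbours of $v$) introduce positive correlation among these events, so one must additionally argue this is outweighed by the negative contributions, or state the clean bound for bipartite/large-girth graphs; making this robust is the delicate point of the argument.
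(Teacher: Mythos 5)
Your approach and the paper's are close in spirit but not the same. You track the process-dependent events $B_i$ defined as ``$u_i$ is free when $(v,u_i)$ is processed by Greedy,'' prove $\Pr[B_i]\ge 1/d$, and then need the sub-multiplicative bound $\Pr\bigl[\bigcap_i \overline{B_i}\bigr] \le \prod_i(1-\Pr[B_i])$, which you honestly flag as unproved. The paper instead uses the smaller, permutation-only events $A_i$ defined as ``$(v,u_i)$ is the earliest of the $d$ edges incident to $u_i$'': on $A_i$, the vertex $u_i$ is certainly free when $(v,u_i)$ arrives, so $v$ gets matched either then or earlier, and $\Pr[A_i]=1/d$ exactly. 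The advantage is that when no two of $u_1,\dots,u_d$ are adjacent (in particular for every bipartite $d$-regular $G$), the edge sets incident to the $u_i$ are pairwise disjoint, so the $A_i$ are determined by relative orders inside disjoint blocks of a uniform permutation and are hence \emph{mutually independent}, and $\Pr\bigl[\bigcap_i\overline{A_i}\bigr]=(1-1/d)^d\le 1/e$ needs no further argument. Your $B_i$ are genuinely history-dependent and never factorize this way; replacing $B_i$ by $A_i$ is the simplification you were missing.

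That said, your worry about triangles is real and in fact applies to the paper's own proof, which does not mention this caveat. If $u_1\sim u_2$ then $(u_1,u_2)$ lies in both neighbouring edge sets and $A_1,A_2$ become positively correlated: for the triangle on $\{v,u_1,u_2\}$ with $d=2$ one has $\Pr\bigl[\overline{A_1}\cap\overline{A_2}\bigr] = \Pr[(u_1,u_2)\text{ arrives first}]=1/3 > 1/4=(1-1/d)^d$. So the paper's intermediate inequality $\Pr[v\text{ unmatched}]\le(1-1/d)^d$ is false verbatim for $d$-regular graphs containing triangles (here $1/3\le 1/e$, so the conclusion survives, but the stated argument does not). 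In summary: switching to the events $A_i$ gives a clean and complete proof for bipartite (or triangle-free) $d$-regular graphs, and your $B_i$ are an unnecessary detour; for general $d$-regular graphs both your write-up and the paper's one-paragraph proof leave the same gap, namely showing $\Pr\bigl[\bigcap_i\overline{A_i}\bigr]\le 1/e$ despite the positive correlation introduced by edges between neighbours of $v$.
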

\begin{proof}
Consider a vertex $v$,  and let $u_1,u_2,\dots,u_d$ be its neighbours. The probability that $(u_1,v)$ is the first to occur amongst all the edges of $u_1$ is exactly $\frac1d$. If this occurs, then we know that vertex $v$ will be surely matched. Thus, the probability that $v$ is not matched by the end of the algorithm is at most $(1-\frac1d)^d \leq \frac1e$. This means that each vertex is matched with probability at least $1-\frac1e$, leading to the stated theorem.
\end{proof}

The same analysis also extends to graphs that are almost regular, i.e., graphs with vertex degrees  
between $d\,(1 \pm \epsilon)$, for any small constant $\epsilon$.


\subsection{\Greedy Cannot Always Beat Half for Bipartite Graphs}
Dyer and Frieze~\cite{DF-RANDOM91} showed a general graph\footnote{This graph is popularly known as a \emph{bomb graph}. It is obtained by adding a new vertex and edge adjacent to each vertex of a complete graph.} for which \Greedy is half competitive. Inspired from their construction, we give the following bipartite graph for which \Greedy is  half competitive.

\begin{definition}[\textbf{Thick-\thickz graph}] \label{defn:thickz}
Let graph Thick-\thickz$:= ( (U_1 \cup U_2) \cup (V_1 \cup V_2) \}, E)$ be a bipartite graph with $|U_1|=|V_1|$ and $|U_2|=|V_2|$. The edge set $E$ consists of
the union of a perfect matching between $U_i$ and $V_i$ for $i\in \{ 1,2 \}$ and a complete bipartite graph between $U_2$ and $V_1$. If additionally $|U_1| = |V_2|$, we call the graph a balanced Thick-\thickz.
\end{definition}

\begin{lemma}
When the edges of a balanced Thick-\thickz are revealed one-by-one in a random order to  \Greedy then in expectation it produces a matching of size  $\left(\frac{1}{2} + o(1)\right)|\OPT|$.
\end{lemma}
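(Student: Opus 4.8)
Here $M$ denotes the (random) matching that \Greedy outputs. In the balanced \mbox{Thick-\thickz} all four blocks have the same size, say $n$; a maximum matching is obtained by pairing $U_1$ with $V_1$ and $U_2$ with $V_2$, and since the graph has only $4n$ vertices nothing larger exists, so $|\OPT|=2n$. Because \Greedy always returns a maximal matching we have the deterministic bound $|M|\ge \tfrac12|\OPT| = n$. Hence it suffices to prove the matching upper bound $\E[|M|]\le n+o(n)$, which then gives $\E[|M|]=(\tfrac12+o(1))|\OPT|$.

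The first step I would take is a purely structural inequality. The edges of the graph split into three classes: the ``diagonal'' complete bipartite graph between $U_2$ and $V_1$, the perfect matching between $U_1$ and $V_1$, and the perfect matching between $U_2$ and $V_2$. Let $a,b,c$ be the number of edges of $M$ in these three classes. The edges counted by $a$ and $b$ occupy distinct vertices of $V_1$, so $a+b\le n$; the edges counted by $a$ and $c$ occupy distinct vertices of $U_2$, so $a+c\le n$. Therefore $|M|=a+b+c=(a+b)+c\le n+c$, and it is enough to show $\E[c]=o(n)$.

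Next I would reduce $\E[c]=o(n)$ to the statement that \Greedy matches almost all of $U_2$ very early in the stream. A matching edge $e_u=(u,v)$ with $u\in U_2,\ v\in V_2$ can enter $M$ only if $u$ is still unmatched when $e_u$ arrives, since $v$ has degree $1$ and is free just before $e_u$. Fix $\theta=n^{-1/2}$ and $S=\lceil\theta m\rceil$ with $m=|E|=n^2+2n$, and let $\phi_S$ be the number of $U_2$-vertices still unmatched by \Greedy after it has processed the first $S$ edges. If $u$ contributes to $c$ then either $e_u$ is among the first $S$ edges, or $e_u$ arrives later, in which case $u$, being free at that moment, is also free after the first $S$ edges (``matched'' only ever turns on). Since at most the $\approx n\theta$ of the $n$ edges between $U_2$ and $V_2$ that land among the first $S$ can fall into the first case, this gives the deterministic bound $c\le |\{U_2\text{--}V_2 \text{ edges among the first }S\}|+\phi_S$, hence $\E[c]\le n\theta+1+\E[\phi_S]$.

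The remaining task, $\E[\phi_S]=o(n)$ — that \Greedy already builds a near‑perfect matching on the dense diagonal part within an $n^{-1/2}$‑fraction of the stream — is where the real work lies and is the main obstacle. I would prove it by a drift/supermartingale argument. Let $\phi_s,\psi_s$ be the numbers of free $U_2$- and free $V_1$-vertices after $s$ edges. Any diagonal edge joining two currently free vertices must be unprocessed (else \Greedy would have taken it), so all $\phi_s\psi_s$ of them lie in the future, and the next edge decreases $\phi$ with conditional probability at least $\phi_s\psi_s/(m-s)\ge \phi_s\psi_s/(2n^2)$. Moreover $\psi_s\ge \phi_s-B_s$, where $B_s$ counts the $U_1$--$V_1$ matching edges seen so far, and $B_S\le \delta n/2$ with high probability (its mean is $\approx n^{1/2}$). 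Fixing a small constant $\delta>0$ and stopping at the first time $\tau$ with $\phi_s<\delta n$ or $B_s>\delta n/2$, before $\tau$ we have $\psi_s\ge \phi_s/2$ and hence $\E[\phi_{s+1}-\phi_s\mid\mathcal F_s]\le -\phi_s^2/(4n^2)\le -\delta^2/4$; turning $\phi_{s\wedge\tau}+(\delta^2/4)(s\wedge\tau)$ into a supermartingale and applying the optional-stopping inequality bounds $\E[\tau\wedge S]\le 4n/\delta^2\ll S$, so with high probability $\phi$ drops below $\delta n$ before time $S$, and since $\phi_s$ is monotone non-increasing this forces $\phi_S<\delta n$ with high probability. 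Thus $\E[\phi_S]\le \delta n+o(n)$; letting $\delta\to0$ gives $\E[\phi_S]=o(n)$, hence $\E[c]=o(n)$ and $\E[|M|]\le n+o(n)$, which with the trivial lower bound completes the proof.
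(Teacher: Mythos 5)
Your proof is correct, and it takes a genuinely different route from the paper's. The paper indexes time by Greedy's matching picks, lets $t$ be the random step at which $\min\{|U_1'|,|U_2'|\}$ first drops to $n^{2/3}$ (using that the residual participating graph is again a balanced Thick-\thickz), and bounds the $\OPT$-edges picked before $t$ via a density argument — while both residual sides exceed $n^{2/3}$, the chance that a feasible pick is an $\OPT$-edge is at most $2n/n^{4/3}$ — and then bounds those picked after $t$ by the residual size, $|O_2|\le a+b\le 2n^{2/3}+|O_1|$. Your argument is assembled from different pieces: (i) you only need to control $c$, the $U_2$--$V_2$ picks, via the one-sided structural bound $|M|\le n+\min\{b,c\}$, rather than all $\OPT$-edges $|O_1|+|O_2|$; (ii) you stop at a deterministic prefix $S\approx n^{3/2}$ of the stream instead of a random stopping time tied to the residual block sizes; and (iii) you control the free-$U_2$ count $\phi_s$ by a supermartingale drift argument, coupling $\psi_s$ to $\phi_s$ through the observed $U_1$--$V_1$ edge count $B_s$. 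The paper's route is shorter and yields an explicit $O(n^{2/3})$ error term; yours needs more probabilistic machinery (optional stopping plus a $\forall\delta$ limiting step, which one could make quantitative, e.g.\ $\delta=n^{-1/6}$ giving $O(n^{5/6})$) but entirely avoids reasoning about the structure of the residual graph, which in the paper requires interpreting $U_1',V_1'$ with some care to justify $|U_1'|=|V_1'|$. Both arguments are sound.
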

\begin{proof}
We note that after an edge is picked by \Greedy, both the end points of the
edge do not participate later in the algorithm. Hence, at any instance during
the execution of \Greedy, the participating graph is still a Thick-\thickz
graph $( (U_1' \cup U_2') \cup (V_1' \cup V_2') \}, E')$, where 
$U_i' \subseteq U_i$ and $V_i' \subseteq V_i$ for $i\in \{1,2\}$.

We can view the choices made by \Greedy as being done in time steps, where \Greedy chooses
one edge at each time step.  At each time step, at least one of $U_1$ or $U_2$ decrease by $1$, and 
\Greedy halts when $ |U_1'| =  |U_2'| = 0$. 
Let $t$  be the random variable indicating  the first time step during the execution of \Greedy  when
$\min\{ |U_1'|, |U_2'|\} = n^{2/3}$. Let $a,b$ be the random variables denoting $a := |U_1'|=|V_1'| $ 
and $b := |U_2'|=|V_2'| $ at time $t$.  Let $O_1$ denote the number of edges of  $\OPT$  chosen by \Greedy
before time $t$ and let $O_2$ denote the number of edges of $\OPT$ chosen after time $t$.

We observe that the matching produced by \Greedy is of size $\frac{n}{2}+ |O_1|
+ |O_2|$.  Observe  $| \big(|U_1'| - |U_2'|\big) | $ changes only when \Greedy chooses an edge from $\OPT$,  
implying that we can bound $|a-b| \leq |O_1|$. Since $O_2$ is bounded by $|U_1'| + |U_2'|$ at time $t$, we can say 
\[ |O_2| \leq a + b = 2\,\min\{a,b\} +
    |a-b|  \leq 2\, n^{2/3} + |O_1|. \]

Next, to bound $|O_1|$, we note that before time $t$  the probability of an
edge picked by \Greedy being from $\OPT$ is at most 
$\frac{2n}{n^{2/3} \cdot n^{2/3}} = \frac{2}{n^{1/3}} $. Since \Greedy picks at most $n$ edges
before time $t$, we have $\E[|O_1|] \leq \frac{2n}{n^{1/3}} = 2\,n^{{2/3}} $. 
This proves that expected size of the matching chosen by  \Greedy is 
$\frac{n}{2}+ \E[|O_1| + |O_2|] \leq
\frac{n}{2} + 2\, n^{2/3} + 2\, \E[|O_1|] \leq \frac{n}{2} + 6\,n^{2/3}$. 

%
\end{proof}


\subsection{Limitations on any OBME Algorithm}
\begin{lemma}
No randomized algorithm can achieve a competitive ratio
greater than $\frac56 \sim 0.833$  for online bipartite matching in random edge arrival
model when the graph is a balanced Thick-\thickz  with $n=1$. This is true,
even the adversary knows the graph and can identify one vertex which has degree $2$. 
\end{lemma}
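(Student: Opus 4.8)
For $n=1$ the balanced Thick-\thickz graph is simply the path on four vertices: writing $u_1,v_1,u_2,v_2$ for its vertices in path order, its edges are $e_1=\{u_1,v_1\}$, $e_2=\{v_1,u_2\}$ and $e_3=\{u_2,v_2\}$, so $\OPT=\{e_1,e_3\}$ with $|\OPT|=2$, and the three edges arrive in one of the $3!=6$ equally likely orders. A randomized online algorithm is a mixture of deterministic ones (fix all its coins in advance), and mixing preserves an upper bound on the expected matching size, so the plan is to show that \emph{every} deterministic algorithm $A$ produces a matching whose expected size $\E_\pi[\,|A|\,]$ is at most $\tfrac53=\tfrac56|\OPT|$ --- and to show this even when $A$ is told up front that the graph is this four-vertex path and is shown the identity of one of its two degree-$2$ vertices, say $v_1$ (the two degree-$2$ vertices are interchangeable under the path's reflection symmetry, so this is without loss of generality). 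Since $\{e_1,e_3\}$ is the unique matching of size $2$, $A$ finishes with size $2$ only if it accepts $e_1$ on its arrival, accepts $e_3$ on its arrival, and never accepts $e_2$ (accepting $e_2$ blocks both $e_1$ and $e_3$); hence $\E_\pi[\,|A|\,]\le 1+\Pr_\pi[\,|A|=2\,]$, and it suffices to prove $\Pr_\pi[\,|A|=2\,]\le\tfrac23$.

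The crux is $A$'s decision on the first arriving edge. From the information it is allowed, $A$ can tell whether this edge is incident to $v_1$ --- true for $e_1$ and $e_2$, false for $e_3$ --- but at that instant it cannot distinguish $e_1$ from $e_2$, since in either case it merely sees a fresh edge at $v_1$ inside a graph of already-known structure. Thus $A$'s first move is one fixed action, accept or reject, applied identically whether the first edge is $e_1$ or $e_2$. If the action is ``accept'', then in the two orders beginning with $e_2$ the algorithm accepts $e_2$ and ends with size $1$, so $|A|=2$ is possible only among the four orders beginning with $e_1$ or $e_3$, whence $\Pr_\pi[\,|A|=2\,]\le\tfrac46=\tfrac23$. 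If the action is ``reject'', then in the two orders beginning with $e_1$ the algorithm has irrevocably discarded $e_1$ and cannot end with $\{e_1,e_3\}$, so $|A|=2$ is possible only among the four orders beginning with $e_2$ or $e_3$, again giving $\Pr_\pi[\,|A|=2\,]\le\tfrac23$. In both cases $\E_\pi[\,|A|\,]\le\tfrac53$; averaging over a randomized algorithm's coins preserves the bound, and it is tight because \Greedy already attains $\tfrac53$, failing (size $1$) exactly on the two orders headed by $e_2$.

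I expect the only step needing real care to be the indistinguishability claim: one must argue that the side information the adversary may supply --- in particular the name of a single degree-$2$ vertex, rather than a full vertex labelling that would let $A$ read off which edge each arrival is --- genuinely leaves $A$ unable to separate $e_1$ from $e_2$ when the first of them arrives. Once this is granted, the rest is just the two-out-of-six counting above, together with the trivial observation that $A$ occasionally ending with size $0$ only helps the bound, which is already built into $\E_\pi[\,|A|\,]\le 1+\Pr_\pi[\,|A|=2\,]$.
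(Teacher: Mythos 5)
Your proof is correct and arrives at the same $\tfrac{5}{3}$ bound as the paper, but the accounting is a bit different: the paper directly writes the expected matching size as $\tfrac{2}{3}(p+1) + \tfrac{1}{3}(2-p)$ where $p$ is the probability of accepting the first arriving edge, while you instead bound $\Pr_\pi[|A|=2]$ and use $\E_\pi[|A|] \le 1 + \Pr_\pi[|A|=2]$. A minor advantage of your version is that the paper's single parameter $p$ silently presumes the algorithm reacts identically to all three possible first edges, whereas an algorithm that knows $v_1$ can in fact recognize $e_3$ on arrival; your deterministic-then-average framing cleanly isolates the one indistinguishability that matters (between $e_1$ and $e_2$ at $v_1$) and leaves the response to $e_3$ unconstrained, which is the honest statement of what the information model permits. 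Both arguments are essentially the ``fix the response to the first ambiguous edge and count orders'' approach, and both are tight against \Greedy; yours is just slightly more careful about which distinctions the promised side information actually enables.
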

\begin{proof}
The optimum offline matching size is two.
However, no randomized online algorithm, (even one which knows the input graph), 
can obtain more than $\frac53$ edges in expectation over the random
edge order. 
To see this, let $p$ denote the probability that the algorithm picks 
the first edge it sees.

\emph{Case 1}: The first edge is from the optimal matching (i.e.~the first edge is 
of the form $(u_i,v_i)$ for $i \in \{1,2\}$).  In this case,
the algorithm will achieve the optimal value $2$ with probability $p$. 
If it skips one of these edges, it will retain at most $1$ edge in the remaining graph.

\emph{Case 2}: The first edge is not from the optimal matching (i.e. the first edge
is $(u_1, v_2)$). In this case the algorithm will achieve a value of at most
$1\cdot p+ 2\cdot(1-p)$. 

Since Case~$1$ occurs with probability $\frac23$ and Case~$2$ occurs with probability $\frac13$,
the expected value of the algorithm is $\frac53p + \frac43(1-p) \leq \frac53$. 
\end{proof}

\begin{figure}
	\centering
	\begin{tikzpicture} [thin,scale=1.5]

		\draw [-, line width=0.5mm, red] (0,0) to node {$ $} (1.5,-0.5);
		
		\draw [-, densely dashdotted, thick] (0,0.2) to node[auto]{$ $} (1.5,0.2);
		\draw [-, densely dashdotted, thick] (0,-0) to node[auto]{$ $} (1.5,0);

		\draw [-, densely dashdotted, thick] (0,-0.5) to node[auto]{$ $} (1.5, -0.5);
		\draw [-, densely dashdotted, thick] (0,-0.7) to node[auto]{$ $} (1.5, -0.7);
		
		\draw [-,   thick, blue] (0,-0.7) to node[auto]{$ $} (1.5,-0.5);
		\draw [-,   thick, blue] (0,0) to node[auto]{$ $} (1.5,0.2);		

		
        \node[graphnode] at (0,0){$ $};
        \node[graphnode] at (1.5,0.0){$ $};
        \node[graphnode] at (0,0.2){$ $};
        \node[graphnode] at (1.5,0.2){$ $};
        \node[graphnode] at (0,-0.5){$ $};
        \node[graphnode] at (1.5,-0.5){$ $};
		 \node[graphnode] at (0,-0.7){$ $};
        \node[graphnode] at (1.5,-0.7){$ $};
		
		\node at (2.5,0.1) {$Z_1$};
		\node at (2.5,-0.6) {$Z_2$};

	\end{tikzpicture}
    \caption[margin=5cm]{ The above example is a conjunction of two 
        Thick-\thickz graphs ($Z_1$ and $Z_2$) by a single edge (the thick red edge).
      Note that for a Thick-\thickz graph even knowing the degree $2$ vertex does
    not allow any algorithm to achieve more than $\frac53$ edges in expectation. }
        \label{figZ2}
\end{figure}
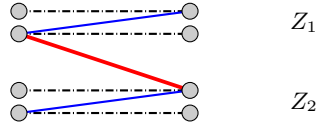

\begin{lemma}
No randomized algorithm can achieve a competitive ratio
greater than $\frac{69}{84} \sim 0.821 $ for online bipartite matching in random edge arrival
model.
\end{lemma}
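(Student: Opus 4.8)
The statement is a lower bound, so I would prove it via a fixed hard instance plus an averaging argument: a randomized online algorithm is a distribution over deterministic ones, and on any \emph{fixed} graph $G$ with a uniformly random edge order $\pi$ its expected output is at most $\max_{B \text{ deterministic}} \E_\pi[\,|B(G,\pi)|\,]$. Hence it suffices to build one bipartite $G$ and show every deterministic online algorithm (which knows only the isomorphism type of $G$, not the labeling) produces a matching of expected size at most $\frac{69}{84}\,|\OPT(G)|$. The candidate instance is the graph of Figure~\ref{figZ2}: two vertex-disjoint balanced Thick-$\mathcal{Z}$ gadgets $Z_1,Z_2$ as in Definition~\ref{defn:thickz} (each isomorphic to $P_4$, with a unique perfect matching of size $2$), joined by one extra edge $c$ between the degree-$2$ vertex of $Z_1$ on the $U$-side and the degree-$2$ vertex of $Z_2$ on the $V$-side. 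Two structural facts drive everything: (i) $|\OPT(G)|=4$, and $c$ lies in no maximum matching — indeed any matching containing $c$ has size at most $3$, since $c$ kills one edge of each gadget's perfect matching and strands its two degree-$1$ endpoints; and (ii) once the edge $c$ has been revealed to the algorithm, it "knows" which vertex of each $Z_i$ is the marked degree-$2$ vertex of that gadget, so the residual subproblem inside each gadget is precisely the situation handled by the preceding $\tfrac56$-lower-bound lemma (the one proved allowing the degree-$2$ vertex to be identified), which caps the expected contribution of that gadget at $\tfrac53$.

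From there I would partition the probability space over $\pi$ by (a) the interleaving pattern of the three blocks of edges (the three edges of $Z_1$, the three of $Z_2$, and the singleton $\{c\}$) and (b) whether the deterministic algorithm actually picks $c$ when it arrives. Conditioned on the interleaving pattern, the internal orders inside $Z_1$ and inside $Z_2$ are independent and uniform. On the branch where the algorithm does not pick $c$, its output decomposes as a matching inside $Z_1$ plus a matching inside $Z_2$, and by fact (ii) together with the single-gadget lemma applied conditionally to each, the expected size on this branch is at most $\tfrac53+\tfrac53=\tfrac{10}{3}$; on the branch where it picks $c$, fact (i) gives at most $3$. A naive combination of these only recovers $\tfrac{10}{3}$, so the real content is to extract the strictly larger loss that $c$ forces. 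The point is that when $c$ arrives before all four edges incident to its two endpoints, $c$ is \emph{indistinguishable} to the algorithm from a profitable edge of a "fresh" gadget; therefore any deterministic algorithm that never picks $c$ must, on that same positive-probability event, also decline a genuinely profitable edge, and so falls strictly below $\tfrac{10}{3}$ there too. Parametrizing the algorithm's best response by how aggressively it acts on such ambiguous early edges and checking this against $G$ by a finite case analysis over the (few equivalence classes of) arrival orders, one gets expected output at most $\tfrac{23}{7}=\tfrac{69}{84}\cdot 4$ for every value of the parameter, which is the claimed bound; the conclusion for randomized algorithms then follows by the averaging argument above.

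\textbf{Main obstacle.} The hard part is the quantitative last step: pinning down the "ambiguous early $c$" event, arguing rigorously that no online algorithm knowing only the isomorphism type of $G$ can separate $c$ from a profitable edge in that event, and doing the bookkeeping that the forced loss is exactly $\tfrac{10}{3}-\tfrac{23}{7}=\tfrac1{21}$ in expectation (rather than something strictly smaller, which would give a weaker constant). A secondary technical point is making the "decompose into two independent gadget subproblems and invoke the previous lemma conditionally" step airtight: the edge $c$ couples the gadgets and the algorithm's internal state is carried across blocks, so one must condition on the block-interleaving pattern (not on the full order) to recover the independence needed for the single-gadget bound while still leaving enough residual randomness for that bound to apply.
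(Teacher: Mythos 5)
You have the right instance, the right structural facts (\emph{(i)} and \emph{(ii)}), and the right guiding intuition — that $c$ is initially indistinguishable from a useful edge and that the per-gadget bound of $\tfrac53$ should be invoked — but your proposed route to the exact constant is both more complicated than the paper's and, as you yourself flag, unresolved at the crucial step. Conditioning on the full interleaving pattern of the three edge-blocks, and then trying to extract the extra $\tfrac1{21}$ of loss via a case analysis over arrival orders, leaves you needing to argue about the algorithm's state across blocks and to enumerate order-equivalence-classes; you never pin down why the shortfall is exactly $\tfrac{10}{3}-\tfrac{69}{21}=\tfrac1{21}$ rather than something smaller.

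The paper's argument sidesteps all of this with a much coarser conditioning: it conditions only on the \emph{type} of the very first edge to arrive. At that moment the algorithm has seen exactly one anonymous edge and hence cannot distinguish whether it is the connecting edge $c$, a diagonal gadget edge, or a matching gadget edge, so the algorithm's decision on the first edge is captured by a single scalar $p$ (the probability of picking it), the same $p$ in all three cases. With probabilities $\tfrac17,\tfrac27,\tfrac47$ for the first edge being $c$, a diagonal, or a matching edge, the residual expected value in each sub-case is bounded using exactly the ingredients you identified: the "matching containing $c$ has size $\le 3$" observation and the single-gadget $\tfrac53$ lemma (which, crucially, was already proved under the assumption that the algorithm knows the degree-$2$ vertex, so it applies after the first edge has been revealed). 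Summing gives $\tfrac{1}{21}(64+5p)$, maximized at $p=1$ to yield $\tfrac{69}{21}=\tfrac{69}{84}\cdot 4$. So the gap in your proposal is precisely the missing observation that it suffices to parameterize the algorithm's behavior by its response to the \emph{first} edge alone; once you see that, the "finite case analysis" collapses to three short cases and the bookkeeping you were worried about disappears.
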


\begin{proof}
    Our instance corresponds to the case where we take two copies of balanced
    Thick-\thickz graph joined by a single edge (see Figure~\ref{figZ2}). 
    The input is some permutation of the graphs (where the vertices or edges
    may be permuted and $U$ and $V$  may be swapped). We show by case analysis that
    no algorithm can achieve a competitive ratio better than $\frac{69}{84} <
    \frac56$.  Intuitively, the addition of the single edge only  hurts any algorithm
    without compromising the independence between the two instances.     

    Let $p$ be the probability that the algorithm picks the
    first edge. Consider the  following cases based on Figure~\ref{figZ2}:

    \emph{Case 1}: Suppose the first edge is the thick red edge. 
    This occurs with probability $\frac17$. If the algorithm picks this edge (which happens
    with probability $p$), then the optimal value in the remaining graph is
    $3$. Otherwise, it can get at most $2 \cdot \frac{5}{3}$  as the two Thick-\thickz
    graphs are disjoint and we can use the previous lemma. Hence the expected
    outcome is $\frac17 \big( p \cdot 3 + (1-p) \cdot \frac{10}{3} \big)$.

    \emph{Case 2}: Suppose the first edge is a blue edge, this occurs with probability
    $\frac27$. If the algorithm chooses this edge, then we can get value of $1$. Since 
    this affords no information about the second $Z$, the best an algorithm can do is $\frac53$. 
    Hence the expected solution is $\frac27 \big( p (1 + \frac53) + (1-p) (2 + \frac53) \big)$.

    \emph{Case 3}: Suppose the first edge is a black edge. This occurs with probability 
    $\frac47$. If the algorithm chooses the first edge, then we can get value of $2$ in this copy
    of the Thick-\thickz. However, still the algorithms gets at most $\frac53$ in the remaining copy
    of Thick-\thickz. Hence the expected cost of the solution is 
    $\frac47 \big(p(2 + \frac53) + (1-p) (1+\frac53) \big)$

    Adding these cases together, we get that expected solution has value at most $\frac{64+5p}{21}$.
    Since the optimal solution is $4$, this gives an upper bound of $\frac{69}{84}$. 

\end{proof}

\subsection{When Size of the Ground Set is Unknown} \label{sec:unknownm}

\begin{theorem} 
For any constant $\epsilon>0$, any randomized algorithm $\mathcal{A}$ 
that does not know the number of edges to arrive  has a competitive ratio $\alpha \leq \frac{2}{3} + \epsilon$ for online bipartite matching in random edge arrival model.
\end{theorem}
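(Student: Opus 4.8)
The plan is an impossibility result via Yao's minimax principle. I will exhibit a distribution $\mathcal D$ over bipartite instances and show that every \emph{deterministic} online algorithm that is not told $m$ has expected competitive ratio at most $\frac23+\epsilon$ against $\mathcal D$, where the expectation is over the draw from $\mathcal D$ and over the uniformly random arrival order $\pi$; since a randomized oblivious algorithm is a mixture of deterministic oblivious ones, the same bound follows. The instances will form a \emph{nested} chain $G_1\subsetneq G_2\subsetneq\cdots\subsetneq G_K$ (as edge sets): $G_{j+1}$ is $G_j$ together with a small ``conflict gadget'' placed on fresh vertices --- the atomic gadget being the $3$-edge path, equivalently the $n=1$ Thick-$\mathcal Z$ of Section~\ref{appn:otherResults}, whose point is that greedily grabbing its middle edge is costly --- with the gadget sizes and $K$ chosen so that $|\OPT(G_j)|$ grows with $j$ while the ``staged'' arrival orders (all edges of $G_1$ first, then all of $G_2\setminus G_1$, then all of $G_3\setminus G_2$, and so on) still carry a constant fraction of the probability mass.

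The heart of the argument is the following forcing. On a staged order, after the $G_1$-block the algorithm has seen exactly the same prefix regardless of which $G_j$ was drawn, and since it knows neither $m$ nor the identity of the graph, its decisions on the $G_1$-block --- in particular the probability $p$ with which it commits early to the ``wrong'' gadget edges --- form a single quantity that must simultaneously serve all of $G_1,\dots,G_K$. On $G_1$ itself, under-committing ($p$ small) directly loses optimality; on $G_K$, over-committing ($p$ large) builds a bad near-maximal matching that, by the Hastiness phenomenon (cf. Lemma~\ref{lemma:KMMmatch}), cannot be repaired in the later blocks. Repeating this across the $K$ levels, the per-level gain and loss become affine in $p$, so $\E_{\mathcal D,\pi}[\,\mathrm{ALG}/\OPT\,]$ is a weighted sum of affine functions of $p$ --- the weights being the mixture probabilities of $\mathcal D$ times the constant probability of a staged order, plus error terms. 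Tuning the mixture probabilities makes the supremum over $p$ equal to $\frac23+o(1)$, and letting the gadget sizes and $K$ grow kills the $o(1)$, which collects the staged-order conditioning loss together with random-order fluctuations of $|\OPT|$ and of how many gadget edges fall into each block, all controlled by Chernoff/Hoeffding bounds on the arrival positions.

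The main obstacle is the gadget design: making the induced scalar optimization bottom out at exactly $\frac23$. A single bad commitment in a matching instance is cheap (one picked edge destroys at most two $\OPT$ edges), so a lone $3$-path only yields the $\frac56$ bound of Section~\ref{appn:otherResults}; the hardness must be amortized over many levels, each of which the algorithm plays through while remaining genuinely uncertain about $m$ and about its current position in the stream. Getting the recursion and the relative block sizes right so that (a) the levels stay mutually indistinguishable under a uniformly random order, (b) the staged orders retain constant probability, and (c) the resulting min-max has value $\frac23$, is the delicate part; the rest --- Yao's principle, the affine bookkeeping, and the concentration estimates --- is routine. As a sanity check, the two-instance special case (a single edge versus a longer configuration containing it) already exhibits the tension and a genuinely $p$-dependent trade-off, and the general construction is obtained by chaining such dilemmas so that no amount of waiting resolves them.
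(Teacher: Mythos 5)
Your proposal goes in a genuinely different direction from the paper's, and in its present form it has several gaps serious enough that it would not compile into a proof. The paper does not build a $K$-level nested chain of disjoint $3$-path gadgets nor invoke Yao's principle explicitly; it works with a \emph{single} balanced Thick-$\mathcal Z$ graph with $|U_1|=|V_1|=N$ large. The decisive observation, which your sketch is missing, is that an algorithm ignorant of $m$ must be $\alpha$-competitive in expectation on the prefix at \emph{every} time step, because the stream could end there. The paper then inspects the prefix at time $T=\Theta(N\log N)$, where the revealed graph is dominated by edges of the $N\times N$ complete bipartite block $U_2\times V_1$ and, by Erd\H{o}s--R\'enyi, already contains a matching of size $(1-\epsilon)N$, while only $O(\log N)$ edges of the true optimum matching have arrived. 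To be $\alpha$-competitive at this prefix, the algorithm is \emph{forced} to commit to $\approx\alpha N$ bad edges, and each such bad edge destroys one future $\OPT$ edge, so the final matching is at most $2N-M_{Rest}$. Balancing $\alpha\cdot 2N\leq 2N-(\alpha-2\epsilon)N$ gives $\alpha\leq\frac23+\epsilon$. The whole argument is a three-line accounting on one graph, and the ``density'' of bad edges relative to good ones (quadratic vs.~linear) is what makes the early commitment both unavoidable and destructive.

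Your construction does not have this mechanism. If each new gadget is a $3$-path on fresh vertices, the gadgets are vertex-disjoint, so an algorithm's decision on one gadget cannot block $\OPT$ edges in another; the problem decomposes into $K$ independent copies of the single-gadget problem and you cannot amortize past the single-gadget bound. The ``affine in a single scalar $p$'' reduction is also not justified: a general online algorithm is not characterized by one commitment probability, and it can condition arbitrarily on the arrival history (including the observed time), so the min-max you want to ``tune'' is not a one-dimensional optimization without further work. The requirement that staged arrival orders carry constant probability mass is in direct tension with letting $K\to\infty$ to kill the error terms, and under a \emph{uniformly random} arrival order (which is the actual model) edges of different gadgets interleave with overwhelming probability, so the staging you rely on for indistinguishability almost never occurs. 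Finally, the appeal to the Hastiness Lemma is misplaced here: that lemma is a statement about the behavior of \Greedy, not a constraint on an arbitrary algorithm's ability to ``repair'' its early choices. A workable version of your plan would essentially have to rediscover the paper's construction --- replace the disjoint gadgets by a single dense block of conflicting edges so that the prefix carries a near-perfect matching and each bad early pick costs two future $\OPT$ edges --- at which point Yao's principle is no longer needed in any explicit form.
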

\begin{proof}
    To prove this theorem, we show that for any $\epsilon >0$ there exists
    an instance where $\mathcal{A}$ is less than $\frac{2}{3}+\epsilon$-competitive. 

    Since $\mathcal{A}$ does not know the number of edges to arrive, it must
    maintain an $\alpha$ approximation in expectation after the arrival of
    every edge. This is because $\mathcal{A}$ does not know if the current edge
    will be the last edge. 

    Consider the instance given by the graph balanced Thick-\thickz (see
    Definition~\ref{defn:thickz}) where the size of the $|U_1|=|V_1|=N$ will be
    set later. 
    Consider  a random permutation $\pi$ on the set of all edges and note that
    each edge $e$appears in the first $T$ edges with probability $\frac{T}{N^2 +
    2N} $, where $T = 4 (N+2) \log N$. The previous probability is at least
    $\frac{4\log N}{N}$. Let $G_T$ denote the set of edges from the perfect
    matching between $U_i$ and $V_i$ that appear in the first $T$ edges. Let $B_T$
    denote the set of edges from $U_2$ to $V_1$ that appear in the first $T$ edges.
    By linearity of expectation, we can say 
    $\E[|G_T|] \leq 8 \log N$ and $\E[|B_T|] \leq 4 N \log N$. 

    Let $\OPT_T$ denote the expected size of the maximum
    matching on the graph induced by the first $T$ edges. 
    \begin{myclaim}
        $ N(1-\epsilon) \leq \E_{\pi}[|\OPT_T|] $
    \end{myclaim}
    \begin{proof}
        Consider the graph induced between $U_2$ and $V_1$ in the first $T$ edges. Since
        any particular edge occurs with probability $\frac{4 \log N }{N}$ and the edges are
        negatively correlated, we can conclude that 
        \begin{align*}
        \Pr[ \exists \text{a perfect matching between $U_2$ and $V_1$ in the first $T$ edges}] &\geq \\
   & \hspace{-4cm}         \Pr[\exists \text{ a perfect matching in }\mathcal{G}_{N,N,\frac{4 \log N}{N}}] .
         \end{align*}
        By a result of Erdos and Renyi (see~\cite{EF-64}), we know that 
        \[\lim_{N \to \infty} \Pr[ \exists \text{a perfect matching in } \mathcal{G}_{N,N,\frac{4 \log N}{N}}] =1 \]
        Hence, we can choose an $N$ such that the above probability is at least
        $1-\epsilon$. Thus we can conlude that $\E[\OPT_T] \geq N(1-\epsilon) $.
    \end{proof}
    
    Let $M_{\OPT}$ denote the expected number of edges
    picked by $\mathcal{A}$ that belong to the perfect matching between
    $U_i$ and $V_i$ (for $i=1,2$) at time $T$.  Similarly, let $M_{Rest} $ denote the
    expected number of edges between $U_2$ and $V_1$ chosen by $\mathcal{A}$. 

    Since $\mathcal{A}$ must maintain an $\alpha$ approximation, we can say 
    $M_{\OPT} + M_{Rest} \geq \alpha( 1 - \epsilon)N$. 
    Since $M_{\OPT} \leq \E[|G_T|]= 8\log N \leq \alpha \epsilon  N$, we can say 
     \begin{align}
        M_{Rest} \geq (\alpha - 2\epsilon) N \label{eqn:mrest}
    \end{align}  However, every edge chosen from $M_{Rest}$
     decreases the  value of the optimal algorithm by one. Let $F$ be the expected size 
     of the matching chosen by the algorithm.  We know that $\alpha \cdot 2N
    \leq F \leq 2N - M_{Rest}$. Substituting into Eq.~(\ref{eqn:mrest}) and
    dividing by $2\,N$ ,  we get $\alpha \leq \frac{2}{3} + \epsilon$. 
\end{proof}


\ifFULL
\section{Facts}\label{sec:missfacts}
\noindent \textbf{Fact~\ref{fact:matching}.}
    \begin{align*}
        |T_1^{\pi}| &\geq  \frac12 \left( |\OPT| + \sum_{e \in \OPT}  
                \mathbf{1}[\text{Both ends of $e$ matched in $T_f^{\pi}$}] \right) \text{ and }\\
        |T_1^{\pi}| &\geq |T_f^{\pi}| + \frac12 \sum_{e \in \OPT}  
            \mathbf{1}[\text{Both ends of $e$ unmatched in $T_f^{\pi}$}]. 
    \end{align*}
  
\begin{proof}
     We start by counting the vertices matched in $T_1^{\pi}$, 
    \[ 2\, |T_1^{\pi}| \geq 2\, \sum_{e \in \OPT}   \mathbf{1}[\text{Both ends of $e$ matched in $T_1^{\pi}$}]
                    + \sum_{e \in \OPT}  \mathbf{1}[\text{Exactly one end of $e$ matched in $T_1^{\pi}$}] \]
    Since $T_1^{\pi}$ is a maximal set, 
    \[ |OPT| = \sum_{e \in \OPT}  \mathbf{1}[\text{Exactly one end of $e$ matched in $T_1^{\pi}$}]  
                    +\sum_{e \in \OPT}  \mathbf{1}[\text{Both ends of $e$ matched in $T_1^{\pi}$}] \]
    Combining the previous two statements and the fact that $T_f^{\pi} \subseteq T_1^{\pi}$,
    \[ |T_1^{\pi}| \geq  \frac12 \left( |\OPT| 
             + \sum_{e \in \OPT}  \mathbf{1}[\text{Both ends of $e$ matched in $T_f^{\pi}$}] \right).\]

     To prove the second part, observe that  $T_f^{\pi} \subseteq T_1^{\pi}$ 
     and $T_1^{\pi}$ is a maximal matching. For each edge of $\OPT$ that has both its end points
     unmatched in $T_f^{\pi}$, at least one end point is adjacent to an edge $T_1^{\pi}$. Since these edges
     must be part of $T_1^{\pi} \setminus T_f^{\pi}$,  
	\[
      |T_1^{\pi}| \geq |T_f^{\pi}| + \frac12 \sum_{e \in \OPT}  
            \mathbf{1}[\text{Both ends of $e$ unmatched in $T_f^{\pi}$}].            \]

\end{proof}

\noindent \textbf{Fact~\ref{swapbases}.}
\emph{Consider any matroid $\M$ and independent sets $A,B,C \in \M$  such that
    $A \subseteq \cl_{\M} (B)$ and $B \cup C \in \M$. Then we also have $A \cup
    C \in \M$. 
}
\begin{proof}
Suppose we start with $B \in M$ and add elements of $A=\{a_1, a_2, \ldots,
a_k\}$ one by the one. We show that one can ensure that the set remains
independent in $\M$ by removing some elements from $B$. First, note that $|B| =
\rank(B) = \rank(B \cup A)$. Our algorithm removes an element from $B$ only if
addition of $a_j$ creates a circuit. Hence the rank of the set is always $|B|$
and addition of every $a_j$ creates a unique circuit. Moreover, this circuit
contains an element $b_j \in B$ that can be removed as we know $A \in \M$. 

Next we repeat the above procedure but by starting with $B\cup C \in \M$ and
adding elements of $A$. We know from before that addition of each element $a_j$
creates a unique circuit that does not contain an element of $C$. Hence we can
remove element $b_j$ while ensuring the set remains independent in $\M$. This
will finally give $A \cup C \in \M$.  
\end{proof}

\else
\fi



\section{Hastiness Lemma} \label{sec:lemhastpf}
The proof of the following lemma is similar to Lemma~$2$ in~\cite{KMM-APPROX12}.

\noindent \textbf{Lemma~\ref{lemma:KMM}} (Hastiness Lemma). 
\emph{For any two matroids $\M_1$ and $\M_2$ on the same ground set $E$,  let
  $T_f^{\pi}$ denote the set selected by \Greedy after running for the first
  $f$ fraction of elements $E$ appearing in order $\pi$. Also, for $i\in
  \{1,2\}$, let $\Phi_i(T_f^{\pi}) := \cl_i(T_f^{\pi}) \cap \OPT$.     Now for
  any $0 < f, \epsilon \leq \frac12$, if $\E_{\pi}[|T_1^{\pi}|]  \leq
  (\frac{1}{2}+ \epsilon)\,|\OPT|$ then 
    \begin{align}
     & \E_{\pi} \left[ |\Phi_1(T_f^{\pi}) \cap \Phi_2(T_f^{\pi})| \right]  
            \leq 2\epsilon\,|\OPT| \qquad \text{and} \label{eq:hastin}\\
     & \E_{\pi} \left[ |\Phi_1(T_f^{\pi}) \cup \Phi_2(T_f^{\pi})| \right]  
            \geq \left(1 -  \frac{2 \epsilon}{f} + 2 \epsilon \right)\,|\OPT|. \label{eq:hastun}  
      \end{align}
    This implies  $\G(f) := \frac{\E_{\pi}[|T_f^{\pi}|] }{|\OPT|} \geq \left(\frac{1}{2} - 
                \left( \frac{1}{f} -2 \right)\epsilon \right)$.       
}

\begin{proof} 
For ease of notation, we write $T_f^{\pi}$ by $T_f$. To prove Eq.~(\ref{eq:hastin}),
    \begin{align*}
    \E_{\pi} \left[ |\Phi_1(T_f) \cap \Phi_2(T_f)| \right] &\leq \E_{\pi} \left[ |\Phi_1(T_1) \cap \Phi_2(T_1)| \right] \tag{because $T_f \subseteq T_1$} \\
    & =  \E_{\pi} \left[  \left( |\Phi_1(T_1)|  + |\Phi_2(T_1)| \right) - |\Phi_1(T_1) \cup \Phi_2(T_1)| \right] \\
    & = \E_{\pi}\left[ |\Phi_1(T_1)|  + |\Phi_2(T_1)| - |\OPT| \right]  \tag{because $T_1$ is a maximal solution}\\
    & \leq 2~\E_{\pi} \left[ |T_1|\right] - |\OPT|  \tag{because $|T_1| \geq |\Phi_i(T_1)| $} \\
    & \leq 2\epsilon\,|\OPT|.
    \end{align*}

Now to prove Eq~(\ref{eq:hastun}), we first bound $|\Phi_1(T_f)| + |\Phi_2(T_f)|$. It is at least
    \begin{align*} 
    &\quad  |\OPT|  + \sum_{e\in \OPT} \mathbf{1}[e\in \cl_1(T_f) \cap \cl_2(T_f)] - \sum_{e\in \OPT} \mathbf{1}[ e \notin \left(\cl_1(T_f) \cup \cl_2(T_f)\right)] \\
        &\geq  |\OPT|  + \sum_{e\in \OPT} \mathbf{1}[e\in T_f] - \sum_{e\in \OPT} \mathbf{1}[ e \notin \cl_1(T_f) \cup \cl_2(T_f)]. \tag{because $T_f \subseteq \cl_i(T_f)$}
        \end{align*}
Taking expectations and using Claim~\ref{claim:KMM2-Matroid},  
       \begin{align}
	 \E_{\pi}[ |\Phi_1(T_f)| + |\Phi_2(T_f)|]  &\geq |\OPT| - \Big(\frac{1}{f}-  2 \Big) \, \E_{\pi}[ |T_f \cap \OPT|] \label{eq:hastuntemp}
	 \end{align}
        { Since $f \leq \frac{1}{2}$, we can use an upper bound on $\E_{\pi}[|T_f \cap \OPT|]$. Observe $T_1 \supseteq T_f$ is a maximal solution implying $|T_1| \geq |T_1 \cap \OPT| + \frac12 ( |\OPT| -  |T_1 \cap \OPT|) \geq \frac12 ( |\OPT| +  |T_f \cap \OPT|)$. Taking expectations,  
        }
        \begin{align*}
        \E_{\pi}[|T_f \cap \OPT|] \leq 2\, \E_{\pi}
        \left[ |T_1| - \frac12 |\OPT| \right] 
         \leq 2\,\epsilon \, |\OPT|. \tag{because $\E_{\pi}[ |T_1|] \leq \left(\frac12 +\epsilon \right) |\OPT|$ }
    \end{align*}
  Combining this with Eq.~(\ref{eq:hastuntemp}) and Eq.~(\ref{eq:hastin}) proves Eq.~(\ref{eq:hastun}),
  \begin{align*} 
  \E_{\pi} \left[ |\Phi_1(T_f^{\pi}) \cup \Phi_2(T_f^{\pi})| \right]  &= \E_{\pi}[ |\Phi_1(T_f)| + |\Phi_2(T_f)|]  - \E_{\pi} \left[ |\Phi_1(T_f^{\pi}) \cap \Phi_2(T_f^{\pi})| \right]  \\
            &\geq \left(1 -  \frac{2 \epsilon}{f} + 2 \epsilon \right)\,|\OPT|.
  \end{align*}
  
  Finally, using Eq.~(\ref{eq:hastuntemp}) and  $|T_f| \geq |\Phi_i(T_f) |$, we also have $\E_{\pi}[|T_f^{\pi}|]   \geq \frac12 \E_{\pi}[ |\Phi_1(T_f)| + |\Phi_2(T_f)|]  \geq \left(\frac{1}{2} - 
                \left( \frac{1}{f} -2 \right)\epsilon \right)\,|\OPT|$.
\end{proof} 

For intuition, imagine the following claim for $f=\frac12$, where it says that for a uniformly random order probability that   $e$ is in not in the span of $T_f$ for either of the matroids is at most the probability $e$ is selected by \Greedy into $T_f$.
\begin{myclaim} \label{claim:KMM2-Matroid} Suppose  $\G(1) \leq
    \left(\frac12 + \epsilon \right)|\OPT|$ for some $\epsilon<\frac12$ and $T_f$ is the
    output of \Greedy on $E([1,mf)]$, then 
    \[ \forall e \in \OPT \qquad \Pr_{\pi}[ e \notin  \Phi_1(T_f) \wedge e \notin \Phi_2(T_f)] \leq \Big(\frac{1}{f} - 1 \Big) \Pr_{\pi}[ e  \in T_f]. \] 
\end{myclaim}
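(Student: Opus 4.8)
The plan is a measure-preserving double-counting argument over the random permutation $\pi$, relying only on two monotonicity facts: \Greedy builds its solution by a sequence of insertions, so the greedy set after any prefix is contained in $T_f^\pi$; and matroid span is monotone, so $A\subseteq B$ implies $\cl_i(A)\subseteq\cl_i(B)$. (In fact the hypothesis $\G(1)\le(\tfrac12+\epsilon)|\OPT|$ plays no role in this particular claim; it is only needed where the claim is invoked.)

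First I would reformulate: since $e\in\OPT$, the event $e\notin\Phi_i(T_f)$ is the same as $e\notin\cl_i(T_f)$, so the left-hand side is $\Pr_\pi[\,e\notin\cl_1(T_f^\pi)\ \text{and}\ e\notin\cl_2(T_f^\pi)\,]$; write $B$ for the set of permutations on which this bad event occurs. I claim every $\pi\in B$ has $e$ arriving in the second phase (position $>mf$). Indeed, if $e$ is among the first $mf$ elements then when \Greedy examined $e$ its current independent set $T'$ was a prefix of the run producing $T_f^\pi$, so $T'\subseteq T_f^\pi$; if \Greedy accepted $e$ then $e\in T_f^\pi\subseteq\cl_i(T_f^\pi)$ for both $i$, and if it rejected $e$ then $T'\cup e\notin\M_1\cap\M_2$, hence $e\in\cl_1(T')\cup\cl_2(T')\subseteq\cl_1(T_f^\pi)\cup\cl_2(T_f^\pi)$ by span monotonicity; either way the bad event fails. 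Meanwhile every $\pi$ in the target event $G:=\{\pi:e\in T_f^\pi\}$ has $e$ in the first phase.

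Next comes the key ``sliding'' step. Fix $\pi\in B$ with $e$ at position $j>mf$, and for each first-phase position $k\le mf$ let $\pi_k$ be obtained from $\pi$ by deleting $e$ and reinserting it at position $k$, keeping the relative order of all other elements. When \Greedy on $\pi_k$ reaches $e$ (at position $k$) its current set is exactly $T_{k-1}^\pi$, the greedy set after the first $k-1$ elements of $\pi$ (which do not contain $e$, since $j>mf\ge k$). Because $T_{k-1}^\pi\subseteq T_f^\pi$ and $\pi\in B$ gives $e\notin\cl_i(T_f^\pi)$, span monotonicity yields $e\notin\cl_i(T_{k-1}^\pi)$ for $i=1,2$, so \Greedy accepts $e$; as $k\le mf$ this means $e\in T_f^{\pi_k}$, i.e.\ $\pi_k\in G$. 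Now double-count the pairs $(\pi,k)$ with $\pi\in B$ and $1\le k\le mf$: there are $|B|\cdot mf$ of them, each mapped to some $\pi_k\in G$, and the $mf$ images of a fixed $\pi$ are distinct (they differ in the position of $e$); conversely a fixed $\tau\in G$ with $e$ at position $k_0$ arises only as $\pi_{k_0}$ for a $\pi$ obtained from $\tau$ by moving $e$ from $k_0$ to one of the $m-mf$ second-phase positions, so $\tau$ has at most $m-mf$ preimages. Therefore $|B|\cdot mf\le|G|\cdot(m-mf)$, and dividing by $m!$ gives $\Pr_\pi[B]\le\frac{m-mf}{mf}\,\Pr_\pi[e\in T_f]=(\tfrac1f-1)\,\Pr_\pi[e\in T_f]$, as claimed. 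The only real obstacle is getting this injection bookkeeping exactly right (and, if one is fussy, handling the floor $\lfloor mf\rfloor$, which only weakens the bound negligibly and is ignored under the paper's convention that the interval notation applies to non-integral endpoints); the conceptual content is entirely the observation that an element still free at the end of the first phase was already free throughout it, so it can be slid to the front of the first phase without altering \Greedy's decision on it.
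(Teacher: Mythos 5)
Your proof is correct and is essentially the same argument as the paper's, just presented as an explicit double-count over pairs $(\pi,k)$ rather than as the paper's randomized measure-preserving map $g$ that reinserts $e$ uniformly into a first-phase position; in both cases the heart of the matter is the observation that if $e$ avoids both spans of $T_f^\pi$ then it can be slid to any first-phase position and \Greedy will accept it there. You also correctly note, as the paper's proof implicitly does, that the hypothesis $\G(1)\le(\tfrac12+\epsilon)|\OPT|$ is not used in establishing this claim.
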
 
   
\begin{proof} Let us define the event $\mathcal{X} = \Big(
    e \notin \Phi_1(T_f) \wedge e \notin \Phi_2(T_f) \Big) \vee (e \in T_f)$.  Consider
    the mapping $g$ from permutations to permutations.  If $e$ occurs in the first $f$
    fraction of elements then $g(\pi) = \pi$. If not, then remove $e$ and insert it uniformly at randomly at a
    position in $[1,mf]$.  This induces a mapping from the set of all permutations on the
    ground elements to the set of permutations that have $e $ in the first $f$ fraction of
    elements. The important observation is that the set of permutations satisfying the
    event $\mathcal{X}$ still satisfy the event under the mapping $g$. We can conclude
    that $\Pr[\mathcal{X}] \leq \Pr[\mathcal{X} \mid e \in [1,mf]] $. Conditioned on the
    event that $e \in [1, fm]$, event $\mathcal{X}$ means $e\in T_f$. This is because
    if $e \notin \Phi_1(T_f) \wedge e \notin \Phi_2(T_f)$ and $e \in E[1, fm]$
 then $T_f \cup {e} \in \M_1 \cap \M_2$.  Thus,  we can conclude that
    $\Pr[\mathcal{X}] \leq \Pr[ e \in T_f \mid e \in [1,mf]]= \frac{1}{f} \Pr[e \in T_f]
    $. Moreover, since $\Big(e \notin \Phi_1(T_f) \wedge e \notin \Phi_2(T_f) \Big)$ and $(e \in T_f)$ are disjoint events, $\Pr[\mathcal{X}] = \Pr\left[\Big(e \notin \Phi_1(T_f) \wedge e \notin \Phi_2(T_f) \Big)\right] + \Pr[e \in T_f]$, which proves this claim.  
\end{proof}

\else
\fi


\ifFULL
\section{Generalization to the Intersection of $k$-matroids}\label{sec:kmatInter}

The \emph{online $k$-matroid intersection} problem in the random arrival model (OMI)
consists of $k\geq 2$ matroids,  $\M_i = (E,\mathcal{I}_i)$ for $i\in [k]$.
  The elements of $E$ are presented one-by-one to an
online algorithm whose goal is to construct a large common independent set.  As
the elements arrive, the algorithm must immediately and irrevocably decide
whether to pick them, while ensuring that the  set of picked elements always
form a common independent set. We assume that the algorithm knows the size of
$E$ and has access to independence oracles of the $k$ matroids for the already arrived elements.

\begin{theorem} \label{thm:matroid-k-main}
    The online $k$-matroid intersection problem in the random arrival model has a
    $\left(\frac{1}{k} + \frac{\delta''}{k^4} \right)$-competitive randomized algorithm, where $\delta''>0$ is a constant.  
\end{theorem}  

The proof largely follows the proof of Theorem~\ref{thm:matroidmain} for intersection of two matroids.
We sketch the proof of the following lemma below (and make no effort in optimizing the parameters). When combined Lemma~\ref{lem:reduc}, this proves  Theorem~\ref{thm:matroid-k-main}. 
As before, $\G(1)\, |\OPT|$ denotes the expected size of the common independent produced by the greedy algorithm.

\begin{lemma}\label{lem:kmatrmainthm}
There exists constants $\epsilon,\gamma >0$ and an online algorithm such that if $\G(1) \leq \left(\frac1k + \frac{\epsilon}{k^3} \right)$ then  the algorithm  finds an independent set of expected size at least $\left(\frac1k + \frac{\gamma}{k}\right)~|\OPT|$. 
\end{lemma}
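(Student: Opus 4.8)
The plan is to run the $k$-matroid analogue of \MkGreedy and to analyse it exactly as in the proof of Theorem~\ref{thm:onebitmat}. In Phase~(a) we run \Greedy on the first $f$-fraction of the elements, \emph{picking} each element \Greedy selects into the final solution independently with probability $1-p$ and \emph{marking} it otherwise; write $T_f$ for the set \Greedy selects and $S\subseteq T_f$ for the picked elements, so that $S,T_f\in\bigcap_i\M_i$. The structural input I would establish first is the $k$-matroid Hastiness Lemma: the proof of Lemma~\ref{lemma:KMM} goes through essentially verbatim, and its consequence is that when $\G(1)\le\frac1k+\frac{\epsilon}{k^3}$ the spans $\cl_1(T_f),\dots,\cl_k(T_f)$ essentially \emph{partition} $\OPT$ into $k$ near-equal classes. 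The two numerical inputs are the maximal-set inequality $k\,|T_1|\ge\sum_i|\cl_i(T_1)\cap\OPT|\ge|\OPT|+(k-1)\,|T_1\cap\OPT|$, which forces $\E_\pi[|T_1\cap\OPT|]=O(\epsilon|\OPT|/k^3)$ and total over-count $\sum_i|\cl_i(T_1)\cap\OPT|-|\OPT|=O(\epsilon|\OPT|/k^2)$; the usual deferred-decision mapping then transfers these bounds from $T_1$ to $T_f$ at the cost of a $1/f$ factor. Consequently, with $\Phi_i:=\cl_i(T_f)\cap\OPT$ and $\I_i:=\Phi_i\setminus\bigcup_{j\ne i}\Phi_j$, one gets $\sum_i\E_\pi[|\I_i|]\ge\big(1-O(\epsilon/k)\big)|\OPT|$ and $\G(f)\ge\frac1k-O\big(\epsilon/(fk^4)\big)$.

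In Phase~(b) we run $k$ disjoint sub-problems indexed by $i\in[k]$. Let $\tE_i$ be the elements arriving in Phase~(b) that lie in $\cl_i(T_f)$ and in no $\cl_j(T_f)$, $j\ne i$; these sets are pairwise disjoint. Maintaining sets $N_1,\dots,N_k$, when $e\in\tE_i$ arrives we add it to $N_i$ iff $S\cup N_i\cup e\in\M_i$ and, for every $j\ne i$, $T_f\cup\big(\bigcup_{l\ne j}N_l\big)\cup e\in\M_j$. This second family of tests is precisely what is needed for correctness: it maintains the invariant $T_f\cup\bigcup_{l\ne j}N_l\in\M_j$ for each $j$, and then, since $N_j\subseteq\cl_j(T_f)=\cl_{\M_j}(T_f)$ implies $N_j\subseteq\cl_{\M_j/S}(T_f\setminus S)$ (using $S\subseteq T_f\in\M_j$), Fact~\ref{swapbases} lets us substitute $N_j$ for $T_f\setminus S$ and conclude $S\cup N_1\cup\dots\cup N_k\in\M_j$, generalising Lemma~\ref{lem:correctness}. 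To lower bound $\sum_i\E[|N_i|]$ I would generalise Claim~\ref{claim:propSetI}: for each ordered pair $(i,j)$ with $j\ne i$, drop at most $|T_f|-|\I_j|$ elements of $\I_i$ so that the surviving set $\tI_i\subseteq\tE_i$ satisfies $\tI_i\in\M_i\cap\bigcap_{j\ne i}(\M_j/T_f)$ (and remove the $\le f|\OPT|$ of them landing in Phase~(a)); since each $\I_j$ is dropped against by $k-1$ indices, a short count gives $\sum_i\E_\pi[|\tI_i|]\ge k\sum_i\E_\pi[|\I_i|]-k(k-1)\E_\pi[|T_f|]-f|\OPT|\ge\big(1-O(\epsilon/k)-f\big)|\OPT|$.

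The heart of the argument is the $k$-matroid Sampling Lemma: for $T\in\bigcap_i\M_i$, $\K(e)\sim\text{Bern}(1-p)$ i.i.d., $S=\{e\in T:\K(e)=1\}$, $\tE\subseteq\cl_i(T)$, and $\tI\subseteq\tE$ with $\tI\in\M_i\cap\bigcap_{j\ne i}(\M_j/T)$, the output of \Greedy run against $\M_i/S$ and the constraint $\bigcap_{j\ne i}\M_j/T$ on $\tE$ has expected size at least $\frac{p}{1+(k-1)p}\,|\tI|$ for any arrival order. The proof mirrors Section~\ref{sec:misssamp}: the alternate view \SampAlg is unchanged except that the check against the ``second matroid'' becomes a check against $\bigcap_{j\ne i}\M_j/T$, and when a new element is admitted it may need to break one circuit in $\M_i$ \emph{and} one circuit in each $\M_j$, $j\ne i$, so in Updates~\ref{updates} the bookkeeping set $\I'$ loses at most $k$ (rather than $2$) elements per admitted element; re-running the induction of Claim~\ref{claim:sublemma} with ``$j+2$'' replaced by ``$j+k$'' (and the same identity for $\sum_j j(1-p)^j$) yields the factor $\frac{1}{1+(k-1)p}$, which is $\frac{1}{1+p}$ when $k=2$. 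Feeding the $\tI_i$ into this and adding the $(1-p)\G(f)$ elements kept from Phase~(a), \MkGreedy outputs at least $(1-p)\G(f)|\OPT|+\frac{p}{1+(k-1)p}\sum_i\E_\pi[|\tI_i|]$, which is $\big(\frac1k+\frac{p(1-p)(k-1)}{k(1+(k-1)p)}-O(\epsilon/k^2)-O(f/k)\big)|\OPT|$; the bracket exceeds $\frac1k+\frac{\gamma}{k}$ for a constant $\gamma>0$ once $\epsilon,f,p$ are chosen appropriately (polynomially small in $1/k$; we make no attempt to optimise the dependence on $k$). Finally, the $k$-matroid version of Lemma~\ref{lem:reduc} — the same proof with $\frac1k$ in place of $\frac12$ — combines with this to give Theorem~\ref{thm:matroid-k-main}.

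The step I expect to fight with is reconciling correctness with the Sampling Lemma. Correctness forces the Phase~(b) admission test for $N_i$ to reference $\bigcup_{l\ne j}N_l$, so as the other sets grow the effective ``second constraint'' for sub-problem $i$ shrinks and $N_i$ is \emph{not} literally the greedy run against the fixed $\bigcap_{j\ne i}\M_j/T_f$ to which the clean lemma applies — other sub-problems can block elements of $N_i$. I see two ways out and would pursue whichever is cleaner: (a) a charging argument — whenever $e\in\tE_i$ is rejected only because $T_f\cup\bigcup_{l\ne j}N_l\cup e\notin\M_j$ for some $j$, the resulting circuit of $\M_j$, since $e\notin\cl_j(T_f)$, must pass through an already-picked element of some $N_l$, $l\ne j$; bounding the number of rejections charged to each picked element by a constant (or by $k$) degrades $\frac{p}{1+(k-1)p}$ only by a further $k$-dependent constant; or (b) strengthen the \SampAlg analysis to run all $k$ sub-problems simultaneously, maintaining $k$ tracking sets $\I'_1,\dots,\I'_k$ with the joint analogue of Invariant~\ref{invariant}, so that cross-removals are charged directly. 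Either way the loss shows up only in the $k$-dependence of $\gamma$, which is immaterial for Theorem~\ref{thm:matroid-k-main}.
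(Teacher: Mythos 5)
Your outline follows the paper's own sketch for the $k$-matroid case closely: a $k$-matroid Hastiness Lemma, $k$ disjoint Phase~(b) sub-problems on the sets $\tE_i$, a $k$-matroid Sampling Lemma with factor $\frac{p}{1+(k-1)p}$, and a generalisation of Claim~\ref{claim:propSetI} producing disjoint $\tI_i$; the arithmetic giving gain $\frac{p(1-p)(k-1)}{k(1+(k-1)p)}$ over $\frac1k$ is also correct. Where your write-up is incomplete is exactly where you flag a ``fight,'' and your instinct there is right.

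The paper's sketch sets $N_i := \Greedy(\M_1/T_f,\dots,\M_i/S,\dots,\M_k/T_f,\tE_i)$, each sub-problem run against \emph{fixed} contracted matroids independently of the others, and then asserts joint feasibility of $S\cup\bigcup_i N_i$ ``Using Fact~\ref{swapbases}.'' For $k\geq 3$ that assertion does not follow: to invoke the fact in $\M_j$ one needs $T_f\cup\bigcup_{l\neq j}N_l\in\M_j$, but the independent runs only give $T_f\cup N_l\in\M_j$ for each $l\neq j$ separately, and a union of sets each independent in $\M_j/T_f$ need not itself be independent. Your cross-referencing admission test ($T_f\cup\bigcup_{l\neq j}N_l\cup e\in\M_j$ for every $j\neq i$) is the correct repair, and your generalisation of Lemma~\ref{lem:correctness} is sound. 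But, as you observe, it means $N_i$ is no longer a greedy run against fixed matroids, so the Sampling Lemma does not apply verbatim, and you leave the resolution as a choice between an undeveloped charging argument and an undeveloped joint \SampAlg. A complete proof must actually carry one of these out; the paper's sketch does not address this step either. The joint \SampAlg route, with a single bookkeeping set that loses at most $k$ elements per admitted element (one circuit in $\M_i$ plus at most one in each $\M_j/T_f$, $j\neq i$), is the most direct extension of the paper's proof of Lemma~\ref{lem:matrmain}: it runs one global deferred-decisions induction against $\sum_i|\tI_i|$ and yields the factor $\frac{p}{1+(k-1)p}$ without isolating the sub-problems, and the extra $k$-dependence it puts into $\gamma$ is harmless for Theorem~\ref{thm:matroid-k-main}.
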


\subsection{Hastiness Lemma}
We  need the following hastiness property for the proof of Lemma~\ref{lem:kmatrmainthm}.
\begin{lemma}[Hastiness Lemma] \label{lemma:KMMgenmatr} 
  For any $k$ matroids $\M_1, \dots ,\M_k$ on the same ground set $E$,  let
  $T_f^{\pi}$ denote the set selected by \Greedy after running for the first
  $f$ fraction of elements $E$ appearing in order $\pi$. Also, for $i\in
 [k]$, let $\Phi_i(T_f^{\pi}) := \cl_i(T_f^{\pi}) \cap \OPT$.     Now for
  any $0 < f \leq \frac{1}{k}$ and $0 \leq \epsilon < 1$ , if $\E_{\pi}[|T_1^{\pi}|]  \leq
  (\frac{1}{k}+ \frac{\epsilon}{k^3})\,|\OPT|$ then 
    \begin{align}
     & \E_{\pi} \left[ |\Phi_i(T_f^{\pi}) \cap \Phi_j(T_f^{\pi})| \right]  
            \leq \frac{2\epsilon}{k^2} \,|\OPT| \qquad \text{for all }i\neq j  \in [k] \label{eq:genhastin}\\
     & \E_{\pi} \left[ \sum_{i=1}^k |\Phi_i(T_f^{\pi})| \right]  
            \geq \left(1 -  \frac{4\epsilon}{k  f} + 4 \epsilon \right)\,|\OPT|.   \label{eq:genhastun}
      \end{align}
    Hence, $\E_{\pi}[|T_f^{\pi}|]  \geq \left(\frac{1}{k} - 
                \left( \frac{4\epsilon}{k f} -4\epsilon \right)  \frac{1}{k} \right)\,|\OPT|$.       
 \end{lemma}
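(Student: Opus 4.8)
The plan is to follow the two–matroid Hastiness Lemma (Lemma~\ref{lemma:KMM}) almost line for line, working throughout with the inclusion $T_f^{\pi}\subseteq T_1^{\pi}$ and with the fact that \Greedy returns a maximal common independent set. The new ingredient is a counting lemma that replaces ``both ends matched/unmatched'' by ``how many of the $k$ matroids span a given $\OPT$ element''. Concretely, since $T_1:=T_1^{\pi}$ is maximal, every $e\in\OPT$ lies in $\cl_i(T_1)$ for at least one $i$; set $d(e):=|\{i\in[k]:e\in\cl_i(T_1)\}|\ge 1$. Then $\sum_{e\in\OPT}d(e)=\sum_{i=1}^k|\Phi_i(T_1)|\le k|T_1|$, because each $\Phi_i(T_1)=\cl_i(T_1)\cap\OPT$ is an independent set of $\M_i$ contained in $\cl_i(T_1)$ and hence has size at most $\rank_{\M_i}(T_1)=|T_1|$. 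Taking expectations and using $\E_{\pi}[|T_1|]\le(\tfrac1k+\tfrac{\epsilon}{k^3})|\OPT|$ gives $\E_{\pi}[\sum_e(d(e)-1)]\le\tfrac{\epsilon}{k^2}|\OPT|$. Since any $e$ counted in $\Phi_i(T_1)\cap\Phi_j(T_1)$ has $d(e)\ge2$, and $\Phi_i(T_f)\subseteq\Phi_i(T_1)$, we get $\E_\pi[|\Phi_i(T_f)\cap\Phi_j(T_f)|]\le\E_\pi[|\{e:d(e)\ge2\}|]\le\tfrac{\epsilon}{k^2}|\OPT|$, which is Eq.~(\ref{eq:genhastin}) with a factor of two to spare.

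For Eq.~(\ref{eq:genhastun}) I would write $\sum_{i=1}^k|\Phi_i(T_f)|=\sum_{e\in\OPT}c(e)$, where $c(e):=|\{i:e\in\cl_i(T_f)\}|$, and split $\OPT$ into (i) $e\in T_f$, contributing $c(e)=k$; (ii) the ``uncovered'' set $Z:=\{e\in\OPT:e\notin\cl_i(T_f)\ \forall i\}$, contributing $c(e)=0$; and (iii) the rest, contributing $c(e)\ge1$. This yields $\sum_{i}|\Phi_i(T_f)|\ge|\OPT|+(k-1)\,|T_f\cap\OPT|-|Z|$. The only genuinely non-routine step is bounding $\E_\pi[|Z|]$, and here I would generalize Claim~\ref{claim:KMM2-Matroid} verbatim: for fixed $e\in\OPT$ consider the disjoint event $\mathcal X(e):=(e\in Z)\vee(e\in T_f)$ and the map $g$ that, whenever $e$ arrives after the first $fm$ positions, deletes $e$ and re-inserts it uniformly among the first $fm$ positions. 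One checks that $g(\pi)$ has the law of a uniform permutation conditioned on ``$e$ arrives early'', and that $\mathcal X(e)$ is preserved by $g$: if $e\notin\cl_i(T_f^{\pi})$ for all $i$, then the partial solution of \Greedy just before it examines $e$ in $g(\pi)$ is \Greedy run on a prefix of $E^{\pi}[1,fm-1]$, hence a subset of $T_f^{\pi}$, so $e$ is spanned by no matroid at that instant and \Greedy is forced to pick it. Conditioned on $e$ arriving early, $\mathcal X(e)$ is exactly ``$e\in T_f$'', so $\Pr[\mathcal X(e)]\le\Pr[e\in T_f]/f$; subtracting the disjoint piece $\Pr[e\in T_f]$ gives $\Pr[e\in Z]\le(\tfrac1f-1)\Pr[e\in T_f]$, hence $\E_\pi[|Z|]\le(\tfrac1f-1)\,\E_\pi[|T_f\cap\OPT|]$.

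Combining the last two displays gives $\E_\pi[\sum_i|\Phi_i(T_f)|]\ge|\OPT|+(k-\tfrac1f)\,\E_\pi[|T_f\cap\OPT|]$. Since $f\le\tfrac1k$ the coefficient $k-\tfrac1f$ is $\le 0$, so I need an \emph{upper} bound on $\E_\pi[|T_f\cap\OPT|]\le\E_\pi[|T_1\cap\OPT|]$, and for this I reuse the first-paragraph counting: elements of $T_1\cap\OPT$ contribute $k$ each to $\sum_i|\Phi_i(T_1)|$ and the remaining $\OPT$ elements at least $1$, so $k|T_1|\ge\sum_i|\Phi_i(T_1)|\ge|\OPT|+(k-1)|T_1\cap\OPT|$, giving $\E_\pi[|T_1\cap\OPT|]\le\frac{k\,\E_\pi[|T_1|]-|\OPT|}{k-1}\le\frac{\epsilon}{k^2(k-1)}|\OPT|$. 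Substituting, $\E_\pi[\sum_i|\Phi_i(T_f)|]\ge\big(1-(\tfrac1f-k)\tfrac{\epsilon}{k^2(k-1)}\big)|\OPT|$, and a short computation shows this is $\ge\big(1-\tfrac{4\epsilon}{kf}+4\epsilon\big)|\OPT|$ for all $k\ge2$, $0<f\le\tfrac1k$ — the inequality reduces to $\big(4-\tfrac{1}{k(k-1)}\big)\big(\tfrac{1}{kf}-1\big)\ge0$, which is immediate. Finally $|T_f|\ge\tfrac1k\sum_i|\Phi_i(T_f)|$ (each $|\Phi_i(T_f)|\le|T_f|$) gives the closing estimate on $\E_\pi[|T_f^{\pi}|]$.

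\textbf{Expected main obstacle.} Everything except the estimate on $\E_\pi[|Z|]$ is closure/rank bookkeeping that mirrors the $k=2$ argument. The delicate part is the permutation-shifting Claim: one must argue carefully both that the map $g$ pushes the uniform measure onto exactly the ``$e$ arrives early'' conditional distribution, and that the property ``$e$ is spanned by none of the $k$ matroids'' is monotone enough under deleting later elements that moving $e$ early forces \Greedy to select it. Once that is in hand, the rest is routine.
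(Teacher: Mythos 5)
Your proposal is correct, and for Eq.~(\ref{eq:genhastun}) it follows essentially the same route as the paper: decompose $\sum_i|\Phi_i(T_f)|$ by how many matroids span each $\OPT$ element, bound the ``spanned by no matroid'' set via exactly the permutation-shift argument of Claim~\ref{claim:KMM2-k-Matroid}, and control $\E[|T_f\cap\OPT|]$ via maximality of $T_1$ (your closed-form $\frac{\epsilon}{k^2(k-1)}$ is actually tighter than the $\frac{4\epsilon}{k}$ the paper settles for, and your final algebraic check that this still dominates $1-\frac{4\epsilon}{kf}+4\epsilon$ for $f\le\frac1k$, $k\ge2$ is right). The ``hence'' step from $|T_f|\ge|\Phi_i(T_f)|$ for each $i$ is the same.

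Where you genuinely depart from the paper is in Eq.~(\ref{eq:genhastin}). The paper argues by contradiction with a matroid exchange: set $S=\Phi_i(T_1)\cup\Phi_j(T_1)$, assert $T_1\cup(\OPT\setminus S)\in\M_i\cap\M_j$, strip off at most $(k-2)|T_1|$ further elements to get a common independent set extending $T_1$, and contradict maximality. Your argument instead double-counts: define $d(e):=|\{i:e\in\cl_i(T_1)\}|$, note $\sum_{e\in\OPT}d(e)=\sum_i|\Phi_i(T_1)|\le k|T_1|$ by the rank bound $|\Phi_i(T_1)|\le\rank_{\M_i}(\cl_i(T_1))=|T_1|$, use maximality only in the form $d(e)\ge1$, and conclude $\E[\sum_e(d(e)-1)]\le\frac{\epsilon}{k^2}|\OPT|$, which directly upper-bounds $\E[|\{e:d(e)\ge2\}|]$ and hence every $\E[|\Phi_i(T_f)\cap\Phi_j(T_f)|]$. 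This is shorter, it avoids the delicate claim that a union such as $T_1\cup(\OPT\setminus S)$ is automatically common independent (which is not a priori true just because $\OPT\setminus S$ avoids $\cl_i(T_1)$ and $\cl_j(T_1)$; the paper's overview is terse there and really needs the subsequent element-removal to be phrased so as to cover $\M_i,\M_j$ as well), and as a bonus it gives $\frac{\epsilon}{k^2}$ rather than $\frac{2\epsilon}{k^2}$. In short: same backbone for part (ii), a cleaner and arguably more robust counting argument for part (i).
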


\begin{proof} [Proof Overview]
    For ease of notation, we write $T_f^{\pi}$ by $T_f$. We prove
    Eq.~(\ref{eq:genhastin}) by contradiction and assume 
    \[\E_{\pi} \left[
        |\Phi_i(T_f) \cap \Phi_j(T_f)| \right]  > \frac{2\epsilon}{k^2}
    \,|\OPT| \implies \E_{\pi} \left[ |\Phi_i(T_1) \cap \Phi_j(T_1)|
    \right]  > \frac{2\epsilon}{k^2} \,|\OPT|\text{ because }T_f \subseteq T_1 \]

    Let $S =  \left( \Phi_i(T_1) \cup \Phi_j(T_1) \right)$ and note that $T_1 \cup \OPT \setminus S \in \M_i \cap \M_j$. 
     Moreover,
     \begin{align*}
         |\OPT \setminus S| &= |\OPT| - \left( |\left( \Phi_i(T_1) \cup \Phi_j(T_1) \right)| \right)\\
         &= |\OPT| -  |\Phi_i(T_1) |-  |\Phi_j(T_1) | + | \Phi_i(T_1) \cap \Phi_j(T_1) |   \\
         &\geq  |\OPT| -  2\,|T_1| + | \Phi_i(T_1) \cap \Phi_j(T_1) |
     \end{align*}
         Since, $T_1 \in \bigcap_{l=1}^k \M_l$, we remove at most
         $(k-2)|T_1|$ more elements, say $S'$, from set $\OPT \setminus S$ such
         that $\left( T_1 \cup \OPT \setminus (S\cup S') \right) \in
         \bigcap_{l=1}^k \M_l$. This gives 
    \begin{align}
      |\OPT \setminus (S\cup S')| \geq |\OPT \setminus S| - (k-2)|T_1| 
        \geq |\OPT|  + | \Phi_i(T_1) \cap \Phi_j(T_1) | - k\,|T_1|. \label{eq:hastkmat} 
     \end{align}
     However, as $T_1$ is a maximal independent set, $\OPT \setminus (S\cup S')$ is an empty set. 
     Taking expectations over $\pi$ in Eq.~(\ref{eq:hastkmat}) gives $\E_{\pi}|\OPT \setminus (S\cup S')| > |\OPT| \left( 1 - k\,\left( \frac1k + \frac{\epsilon}{k^3} \right) + \frac{2\epsilon}{k^2}  \right) > 0$. This is a contradiction.

    Next, to prove Eq.~(\ref{eq:genhastun}),
  \begin{align*}
      \E[\sum_{i=1}^k |\Phi_i(T^{\pi}_f)|] 
       & = \sum_{e \in \OPT} \sum_{i=1}^k i \, \Pr[ e \text{ is in the span of $T_f$ in exactly $i$ matroids }]\\
        & = |\OPT| - \sum_{e \in \OPT} \Pr[e \text{ is in the span of $T_f$ in none of the matroids}]  \\
        & \qquad + \sum_{e \in \OPT}  \sum_{i=1}^k (i-1) \, \Pr[ e \text{ is in the span of $T_f$ in exactly $i$ matroids }] \\
        &\geq  |\OPT| + (k-1)\,\sum_{e \in \OPT} \Pr[e \text{ is in the span of $T_f$ in all the matroids}] \\
         &\qquad       - \sum_{e \in \OPT} \Pr[e \text{ is in the span of $T_f$ in none of the matroids}]\\
         \intertext{Using Claim~\ref{claim:KMM2-k-Matroid} below and the fact that 
    $\Pr[ e \text{ is in the span of $T_f$ in all the matroids}] \geq \Pr[e \in T_f]$ , we have}
          &\geq |\OPT| - \left( \frac{1}{f} - k \right) \sum_{e \in \OPT}\Pr_{\pi}[e \in T_f] \\
          &= |\OPT| - \left(\frac{1}{f} -k \right)\E[ |\OPT \cap T_f| ]
      \end{align*}
      Since $f \leq \frac1k$, we use an upper bound
      $\E[ |\OPT \cap T_f|] \leq \E[|\OPT \cap T_1|] \leq  \frac{2\epsilon}{k-1} \leq \frac{4\epsilon}{k}$ to finish
      the proof. 
      To prove this bound, observe $T_1$ is a maximal set, and achieves a $k$ approximation. We can conclude that
      $|T_1| \geq |\OPT \cap T_1| + \frac{1}{k} (|\OPT| - |\OPT \cap T_1|)$.  Taking expectations
      and simplifying we get $\E[|\OPT \cap T_1|] \leq \frac{2 \epsilon}{k-1}$.

Since for all $i$, we have $|T_f^{\pi}| \geq |\Phi_i(T_f^{\pi})| $, the hence part of the lemma follows because $|T_f^{\pi}| \geq \frac1k \sum_i |\Phi_i(T_f^{\pi})| $.
Finally,  the proof of the following claim  is similar to that of Claim~\ref{claim:KMM2-Matroid}.
\begin{myclaim} \label{claim:KMM2-k-Matroid} Suppose we know that $\G(1) \leq
    \left(\frac1k + \frac{\epsilon}{k^2} \right)|\OPT|$ for some $\epsilon>0$ and $T_f$ is the
    output of \Greedy on the $E([1,mf)]$, then 
    \[ \forall e \in \OPT \qquad \Pr_{\pi}[ e \notin \bigcup_{i=1}^k  \Phi_i(T_f) ] \leq \Big(\frac{1}{f} - 1 \Big) \Pr_{\pi}[ e  \in T_f] \] 
\end{myclaim} 
\end{proof} 



\subsection{Modifications to the \MkGreedy Algorithm}

 Phase~(a)  of the algorithm remains the same; we will 
 use the first $f$ fraction to pick $1-p$ fraction of the elements chosen by \Greedy. Let $T_f$ 
 denote the elements chosen by \Greedy and $S$ to be the elements picked into the final solution. 

 In Phase~(b), the algorithm is modified in a natural way; we pick elements that lie 
 in $\M_1/T \cap \M_2/T \cap \dots \cap \M_i/S \cap \dots \cap \M_k /T$ for each $i \in \{1,\dots,k\}.$ 
 When $k=2$, this reduces to the algorithm given in Section~\ref{section:matroids}. Let $N_i$
 denote the set of elements chosen by \Greedy on the matroids 
 $\Greedy(\M_1/T_f, \dots,\M_i/S, \dots,\M_k/T_f)$. We will return
 $S\cup \big(\bigcup_{i=1}^k N_i\big).$ Using Fact~\ref{swapbases}, we know that 
 $S \cup \bigcup_{i=1}^k N_i$ is an independent set in all the matroids
 $\M_1,\dots,\M_k$. To show that $\E[|\bigcup_{i=1}^k N_i|] $ is large, we use the following  modified sampling lemma.

\begin{lemma}~(Sampling Lemma for $k$-matroids). 
\label{lem:k-matrmain} 
    Given matroids $\M_1, \M_2,\dots \M_k$ on ground set $E$, a set $T \in \bigcap_{j=1}^k \M_j$, 
    and $\K(e) \sim  \text{Bern}(1-p)$ i.i.d. for all $e \in T$, we define set 
    $S := \{e \mid e \in T \text{ and } \K(e) =    1\}.$ I.e., $S$ is a set achieved by dropping 
    each element in $T$ independently with probability $p$.
   For $i\in [k]$,  consider a set $\tE \subseteq \cl_i(T)$ and a set
   ${\tI} \subseteq \tE $ satisfying  ${\tI} \in \Mi \cap (\bigcap_{j\neq i} \M_j/T) $. Then 
    for any arrival order of the elements of $\tE$, we have
    \[ \E_{\K}[ \text{\Greedy}(\M_1/T, \dots, \Mi/S,\dots, \M_k/T, \tE)] \geq \frac{p}{1+p(k-1)} |{\tI} | .\] 
\end{lemma}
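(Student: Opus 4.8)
The plan is to generalize the proof strategy of the two-matroid Sampling Lemma (Lemma~\ref{lem:matrmain}): first replace $\Greedy(\M_1/T,\dots,\Mi/S,\dots,\M_k/T,\tE)$ with an equivalent "deferred-decision" algorithm \SampAlg that reads the random bits $\K(f)$ lazily, one circuit at a time, and then carry an inductive potential argument over a shadow set $\I'\subseteq\tE$ that tracks how many elements of $\tI$ are still available to be added to the current solution. Since $\tE\subseteq\cl_i(T)$, every arriving element $e$ that passes the test $T\cup N'\cup e\in\M_j$ for all $j\ne i$ must create a circuit $C_i(S'\cup N'\cup T',e)$ in matroid $\Mi$ whose intersection with the "unread" part $T'$ of $T$ is nonempty (this is the exact analog of Claim~\ref{clm:SAcircuits}, using that $\rank_{\Mi}(S'\cup T'\cup\tE)\le|T|$ and $|S'\cup N'\cup T'|=|T|$ is maintained). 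Walking down that circuit, each element $f$ is revealed: if $\K(f)=1$ it moves to $S'$ and we continue; if $\K(f)=0$ we add $e$ to $N'$ and stop.

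First I would state and prove the equivalence lemma exactly as in the $k=2$ case — only matroid $\Mi$ plays the role of "$\M_1$" in forming circuits, and the membership test now ranges over all $k-1$ other matroids instead of one, so Observations~\ref{clm:SAinvariant2} and \ref{clm:SAinvariant1} and the proof of Lemma~\ref{clm:GreedyEquiv} go through verbatim with $\M_2$ replaced by $\bigcap_{j\ne i}\M_j$. Next I would set up the invariant: for the running sets $S',N',T$ and the remaining ground set $\trE\subseteq\tE$, the shadow set $\I'$ satisfies $S'\cup N'\cup\I'\in\Mi$, $\;T\cup N'\cup\I'\in\M_j$ for every $j\ne i$, and $\I'\subseteq\trE$. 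Starting from $\I'=\tI$ (which is valid since $\tI\in\Mi$ and $\tI\in\M_j/T$ for all $j\ne i$), I would define Updates analogous to Updates~\ref{updates}: when an element $f$ is moved into $S'$ we break at most one circuit of $\I'$ in $\Mi$; when $e$ is added to $N'$ we break at most one circuit of $\I'$ in $\Mi$ and one circuit of $\I'$ in \emph{each} of the $k-1$ matroids $\M_j$, $j\ne i$ — so up to $k$ elements of $\I'$ are dropped in this step, plus $e$ itself if $e\in\I'$. The verification that these updates are well-defined and preserve the invariant is the same circuit-breaking argument as in the proof of Claim~\ref{claim:invariant}, just applied once per matroid.

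The heart of the argument is the inductive size bound. For the first element $e$ whose circuit intersection $C$ with $T'$ is nonempty, write $C=\{t_1,\dots,t_\ell\}$, $\ell\ge1$, and partition the probability space into events $B_j$ ($\K(t_1)=\cdots=\K(t_j)=1$, $\K(t_{j+1})=0$, for $0\le j\le\ell-1$) and $\bar B$ ($\K(t_1)=\cdots=\K(t_\ell)=1$). Under $B_j$ we add $e$ to $N'$ and, by the Updates, drop at most $j$ elements moved into $S'$ plus at most $k$ more (the circuits broken when $e$ enters $N'$), so the updated $\I'$ has size at least $|\I'|-j-k$; under $\bar B$ we drop $\ell$ elements into $S'$ and nothing else, leaving size at least $|\I'|-\ell$. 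Applying the induction hypothesis $\E_\K[|N'|]\ge\frac{p}{1+p(k-1)}|\I'|$ to the tail in each case, the total is at least
\[
\sum_{j=0}^{\ell-1}(1-p)^j p\Bigl(1+\tfrac{p}{1+p(k-1)}(|\I'|-j-k)\Bigr)+(1-p)^\ell\cdot\tfrac{p}{1+p(k-1)}(|\I'|-\ell).
\]
I would then check that this equals $\frac{p}{1+p(k-1)}|\I'|$ by the same closed-form geometric-sum identities used in Claim~\ref{claim:sublemma} (the constant $\frac{1}{1+p(k-1)}$ is exactly what makes the "$p\cdot 1$" gain per accepted element cancel the "$-k$" loss in expectation, since $\sum_j (1-p)^j p\cdot 1 = 1-(1-p)^\ell$ and the $k$-term is absorbed analogously to how the $2$-term was). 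The base case $|\I'|=1$ is immediate: some element of the nonempty circuit $C$ has $\K=0$ with probability $\ge p\ge\frac{p}{1+p(k-1)}$.

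The main obstacle I anticipate is bookkeeping the exact number of elements dropped from $\I'$ when $e$ enters $N'$: in the two-matroid case it was at most two circuit-breaks, here it is at most $k$, and one must be careful that the "special case $e\in\I'$" does not double-count and that the circuits $C_i(S'\cup N'\cup\I'\cup e)$ and $C_j(T\cup N'\cup\I'\cup e)$ for the various $j$ are each broken by removing an element that genuinely lies in $\I'$ (which follows because $\I'$ together with the already-committed part is independent in the relevant matroid before the update, so any new circuit must use an $\I'$-element). Once the constant $k$ is correctly identified as the per-acceptance loss, matching it against the per-acceptance gain of $1$ forces the ratio $\frac{p}{1+p(k-1)}$, and the algebraic identity closes the induction. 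Finally, to conclude Lemma~\ref{lem:k-matrmain} itself I would instantiate the claim at the start of \SampAlg with $\I'=\tI$, $N'=S'=\emptyset$, $T'=T$, whose preconditions hold by hypothesis on $\tI$.
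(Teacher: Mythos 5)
Your proposal is correct and follows essentially the same route as the paper's proof: replace \Greedy\ with the lazy deferred-decision algorithm \SampAlg\ (with the membership test ranging over all $\M_j$, $j\neq i$), maintain the same three-part invariant on a shadow set $\I'$, apply the circuit-breaking updates (at most $j$ drops from $B_j$'s reads plus at most $k$ from $e$'s acceptance, with the $e\in\I'$ case absorbed so the count stays $k$), and close with the geometric-sum identity, which indeed evaluates to exactly $\frac{p}{1+p(k-1)}|\I'|$ since $k+\frac{1-p}{p}=\frac{1+p(k-1)}{p}$. The only cosmetic difference is that the paper fixes $i=1$ by symmetry while you keep $i$ general; the invariant, update rules, per-acceptance loss of $k$, and the value of $\alpha=\frac{p}{1+p(k-1)}$ all match.
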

\begin{proof}[Proof Overview]
    We use the same proof outline as Lemma~\ref{lem:matrmain} and prove for $i=1$ as the other cases are analogous. Once again, the performance of \Greedy can be mapped to a Sampling Algorithm like \SampAlg and we analyze its performance. The only
    difference is in Line~\ref{SampAlg:M2check} where we now ask that the new element $e$ is independent in
    all the matroids $ \M_j$ for all $j\neq 1$ (instead of just $\M_2$).  Observations~\ref{clm:SAinvariant2} and \ref{clm:SAinvariant1}, and Claim~\ref{clm:SAcircuits}
    remain the same for the new Sampling Algorithm. 

    We first show that at the end of  each iteration we can maintain an invariant. 
    \begin{invariant}
For given sets $S', N', T$, and $\trE \subseteq \tE$, we have set $\I'$ satisfying this invariant if 
\begin{align*}
S' \cup N' \cup I' &\in \M_1     \\ 
            T \cup N' \cup I' &\in \M_j \qquad \text{for $j \in [2,k]$}   \\
            I' &\subseteq \trE 
\end{align*}
\end{invariant}

   This invariant still contains Eq.~(\ref{inv1}) and Eq.~(\ref{inv3}) of Invariant~\ref{invariant}; however, it  contains one equation $(T \cup N' \cup \I') \in \M_j$ for each $j \in [2,k]$ (instead of just $\M_2$). The updates are also naturally
    extended. Whenever adding a new element violates any of these invariants, 
    we simply remove some elements     from $I'$ to compensate. 
    \begin{updates} We perform the following updates to $\I'$
    whenever \SampAlg reaches Line~\ref{SampAlg:Sadd} or
    Line~\ref{SampAlg:Nadd}. 
    
\begin{itemize}
\item Line~\ref{SampAlg:Sadd}: If circuit $C_1(S' \cup N' \cup \I' \cup f)$ is
    non-empty then remove an element from $\I'$ belonging to $C_1(S' \cup N'
    \cup \I' \cup f)$ to break the circuit. 

\item Line~\ref{SampAlg:Nadd}: If circuit $C_1(S' \cup N' \cup \I' \cup e)$ is
    non-empty then remove an element from $\I'$ belonging to 
    $C_1(S' \cup N' \cup \I' \cup e)$ to break the circuit.  For $j\in [2,k]$, if 
    $C_j(T \cup N' \cup \I' \cup e)$ is non-empty then remove another element from $\I'$ belonging
    to $C_j(T \cup N' \cup \I' \cup e)$ to break the circuit. In the special
    case where $e \in \I'$, we remove $e$ from $\I'$.

\end{itemize}
\end{updates}
A claim similar to Claim~\ref{claim:invariant} shows that the above updates are well-defined and maintain the invariants.
    Now using the invariants, we prove that the expected number of elements picked is large. As before, we apply the principle of 
    deferred decisions and define the events $B_j$ and $\thickbar{B}$ for $j\in [0,l-1]$, where $l = |C|$. Let $\alpha:= \frac{p}{1+p(k-1)}$. To prove the induction step, in the event $B_j$, we use I.H. to get $\E_{\K}[|N'| \mid B_j ] \geq 1+ \alpha \, (|I'| -j -k )$ and in the event $\thickbar{B}$, we use I.H. to get $\E_{\K}[|N'| \mid \thickbar{B} ] \geq  \alpha \, (|\I'| - l)$. A little algebra completes the proof.
\end{proof}

\subsection{Putting Everything Together}
\begin{definition} [Sets $\tE_i$]
     For $i \in [k]$, we define $\tE_i$ 
    to be the set of elements $e$ that arrive in Phase~(b), $e \in \cl_{i}(T_f)$, and $e
    \not\in \cl_{j}(T_f)$ for $j \neq i$. 
\end{definition}

To prove Lemma~\ref{lem:kmatrmainthm} using the Sampling Lemma~\ref{lem:k-matrmain}, we show that there exist ``large'' disjoint
subsets of $\OPT$ that are still to arrive in Phase~(b).     For $i \in [k]$, let 
$\Phi_i(T_f) := \cl_i(T_f) \cap \OPT$ and  
$I_i := \Phi_i(T_f) \setminus \big( \bigcup_{j\neq i} \Phi_j(T_f) \big)$.

\begin{myclaim} \label{claim:kpropSetI}
 If $\G(1) \leq \left(\frac1k + \frac{\epsilon}{k^3} \right)$ then for $i\in[k]$ there 
 exist disjoint sets $\tI_i \subseteq \tE_i$ such that
   \begin{enumerate}[(i)]
        \item $\E_{\pi}\big[ \sum_i|{\tI}_i| \big] \geq   \left(1 + 4\epsilon  (k - \frac{1}{f}- \frac{1}{2} )-  \frac{\epsilon}{k} + \frac{\epsilon}{k^2} - f \right) |\OPT| $  \label{eq:kclaimtildeIsize}
       \item ${\tI}_i \in \M_i \cap \left( \bigcap_{j\neq i}\M_j/T_f \right)$.  \label{eq:kclaimtildeI}
    \end{enumerate}
\end{myclaim}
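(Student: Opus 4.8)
The plan is to follow the two‑matroid argument (the proof of Claim~\ref{claim:propSetI}) and to dispose of the extra $k-2$ span constraints one matroid at a time. Throughout write $\Phi_i := \Phi_i(T_f^{\pi}) = \cl_i(T_f^{\pi})\cap\OPT$ and, for $e\in\OPT$, $d(e) := |\{i\in[k] : e\in\Phi_i\}|$. The sets $I_i = \Phi_i\setminus\bigcup_{j\neq i}\Phi_j$ are pairwise disjoint, are contained in $\OPT$ (hence independent in every $\M_l$), and for $e\in I_i$ we have $e\in\cl_i(T_f)$ and $e\notin\cl_j(T_f)$ for every $j\neq i$ (since $e\in\OPT$, being outside $\Phi_j$ is the same as being outside $\cl_j(T_f)$); so the span conditions defining $\tE_i$ are automatic for subsets of $I_i$ lying in Phase~(b).

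First I would lower bound $\sum_i|I_i|$. Pointwise in $\pi$, $\sum_i|\Phi_i| = \sum_{e\in\OPT}d(e)$, $\sum_i|I_i| = |\{e : d(e)=1\}|$, and $\sum_{i<j}|\Phi_i\cap\Phi_j| = \sum_e\binom{d(e)}{2}$, so the elementary inequality $\mathbf{1}[d=1] + d(d-2)\geq 0$ (immediate from $d\in\{0,1,2\}$ and $d(d-2)\geq0$ for $d\geq2$) gives $\sum_i|I_i|\geq\sum_i|\Phi_i| - 2\sum_{i<j}|\Phi_i\cap\Phi_j|$. Taking expectations and substituting the two estimates of the Hastiness Lemma~\ref{lemma:KMMgenmatr}, namely $\E_\pi[\sum_i|\Phi_i|]\geq(1-\frac{4\epsilon}{kf}+4\epsilon)|\OPT|$ and $\E_\pi[|\Phi_i\cap\Phi_j|]\leq\frac{2\epsilon}{k^2}|\OPT|$ (whence $\E_\pi[\sum_{i<j}|\Phi_i\cap\Phi_j|]\leq\frac{(k-1)\epsilon}{k}|\OPT|$), yields a lower bound on $\E_\pi[\sum_i|I_i|]$ of the shape $(1-\frac{4\epsilon}{kf}+\Theta(\epsilon))|\OPT|$.

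Next, for each matroid $\M_j$ I would extend $I_j$ inside its span exactly as in the two‑matroid proof: pick a minimal $X_j\subseteq T_f$ with $\cl_j(X_j\cup I_j)=\cl_j(T_f)$; since $I_j\subseteq\cl_j(T_f)$ and $I_j,T_f\in\M_j$, this gives $X_j\cup I_j\in\M_j$ and $|X_j| = |T_f| - |I_j|\geq 0$ (the last inequality because $I_j\subseteq\cl_j(T_f)$ forces $|I_j|\leq\rank_j(T_f)=|T_f|$). Then process the pairs $(i,j)$ with $j\neq i$ in any order, each time shrinking the current copy of $I_i$: starting from the independent set (current $I_i)\cup I_j\subseteq\OPT$ in $\M_j$, insert the elements of $X_j$ one by one; because $X_j\cup I_j$ is independent, every circuit created contains an element of the current $I_i$, which we delete, for at most $|X_j|$ deletions. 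When this is done, (new $I_i)\cup X_j\cup I_j\in\M_j$, and Fact~\ref{swapbases} applied with $A=T_f$, $B=X_j\cup I_j$, $C=$(new $I_i$) — using $T_f\subseteq\cl_j(X_j\cup I_j)=\cl_j(B)$, $T_f\in\M_j$, and $B\cup C\in\M_j$ — yields (new $I_i)\cup T_f\in\M_j$. Shrinking $I_i$ only shrinks $I_i\cup T_f$, so each matroid's guarantee persists once obtained; after all $k-1$ rounds the resulting set $\widehat I_i$ lies in $\M_i\cap\bigcap_{j\neq i}\M_j/T_f$ (this is part~(ii)), is still a subset of $I_i$ (so the $\widehat I_i$ are disjoint and satisfy the span conditions of $\tE_i$), and $|\widehat I_i|\geq|I_i| - \sum_{j\neq i}(|T_f|-|I_j|)$. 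Summing over $i$ and rearranging, $\sum_i|\widehat I_i|\geq k\sum_i|I_i| - k(k-1)|T_f|$. Finally set $\tI_i := \widehat I_i\cap E^{\pi}[fm,m]$; this preserves (ii) and disjointness, forces $\tI_i\subseteq\tE_i$, and since $\sum_i|\widehat I_i|\leq|\OPT|$ and each element of $\OPT$ appears in Phase~(a) with probability at most $f$, it costs at most $f|\OPT|$ in expectation. Hence $\E_\pi[\sum_i|\tI_i|]\geq k\,\E_\pi[\sum_i|I_i|] - k(k-1)\,\E_\pi[|T_f|] - f|\OPT|$, and substituting the $\sum_i|I_i|$ bound above together with $\E_\pi[|T_f|]\leq\E_\pi[|T_1|]\leq(\frac1k+\frac{\epsilon}{k^3})|\OPT|$ (the hypothesis $\G(1)\leq\frac1k+\frac{\epsilon}{k^3}$) and simplifying gives part~(i).

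The step I expect to be the main obstacle is the fix‑up bookkeeping: one must verify that the successive shrinkings of a single $I_i$ against the various $\M_j$ never destroy the independence guarantees already secured for earlier matroids (they do not, since a subset of an independent set is independent), that every circuit produced while inserting $X_j$ can indeed be broken inside $I_i$ (which is precisely why $X_j$ must be chosen so that $X_j\cup I_j$ is independent), and that $|X_j|=|T_f|-|I_j|$ is nonnegative. Once these are in place the remaining computation is routine; the one genuinely new feature compared with $k=2$ is that each $I_i$ now pays a deletion cost against each of the other $k-1$ matroids, which is what produces the $k(k-1)|T_f|$ term and, via the amplification $\sum_i|\widehat I_i|\geq k\sum_i|I_i| - k(k-1)|T_f|$, the leading $1+\Theta(k\epsilon)$ behaviour of part~(i).
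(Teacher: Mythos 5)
Your proof has the same skeleton as the paper's: define $I_i := \Phi_i(T_f) \setminus \bigcup_{j\neq i}\Phi_j(T_f)$, lower-bound $\E_\pi[\sum_i|I_i|]$ via Lemma~\ref{lemma:KMMgenmatr}, shrink each $I_i$ so that it lies in $\M_j/T_f$ for every $j\neq i$ using Fact~\ref{swapbases} and an exchange argument, then cut out Phase~(a). Your counting identity $\sum_i|I_i| \geq \sum_i|\Phi_i| - 2\sum_{i<j}|\Phi_i\cap\Phi_j|$ is a clean reformulation of the inequality the paper gets by summing $|I_i|\geq|\Phi_i|-\sum_{j\neq i}|\Phi_i\cap\Phi_j|$ over $i$, and your bookkeeping for the iterated shrinkings (a subset of an independent set stays independent, so earlier guarantees persist) is correct.

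The one place you deviate from the paper, and where you lose, is the choice of the extension set. You take $X_j$ minimal with $\cl_j(X_j\cup I_j)=\cl_j(T_f)$, so $|X_j|=|T_f|-|I_j|$. The paper instead extends the \emph{larger} set $\Phi_j(T_f)$, taking $X^i_j$ minimal with $\cl_j(X^i_j\cup\Phi_j(T_f))=\cl_j(T_f)$, so $|X^i_j|=|T_f|-|\Phi_j(T_f)|\leq|T_f|-|I_j|$. (This is legitimate because $I_i\cup\Phi_j(T_f)$ is still a subset of $\OPT$, hence independent in $\M_j$ and disjoint from $X^i_j\cup\Phi_j(T_f)\in\M_j$, so circuits formed when inserting $X^i_j$ can still be broken inside $I_i$.) The gap between the two costs is $|\Phi_j(T_f)|-|I_j|$, which after summing over the $k(k-1)$ pairs $(i,j)$ and taking expectations is of order $(k-1)\sum_{i\neq j}\E[|\Phi_i\cap\Phi_j|]$. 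Concretely, your final bound is
\[
\E_\pi\Big[\sum_i|\tI_i|\Big] \;\geq\; \Big(1 + 2\epsilon(k+1) - \tfrac{4\epsilon}{f} - \tfrac{\epsilon}{k} + \tfrac{\epsilon}{k^2} - f\Big)\,|\OPT|,
\]
whereas the claim asserts $1 + 4k\epsilon - 2\epsilon - \tfrac{4\epsilon}{f} - \tfrac{\epsilon}{k} + \tfrac{\epsilon}{k^2} - f$; for $k\geq3$ your coefficient $2k\epsilon+2\epsilon$ falls short of $4k\epsilon-2\epsilon$ by $2\epsilon(k-2)$. This does not affect the downstream use in Lemma~\ref{lem:kmatrmainthm} (which only needs the $1-\Theta(\epsilon/f)-f$ shape with $\epsilon$ a small constant), but it means you do not establish the stated inequality verbatim for $k\geq3$. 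Swapping $I_j$ for $\Phi_j(T_f)$ in the extension step recovers the paper's constant with no other change to your argument.
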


\begin{proof}[Proof Overview]
    For any $i \in [k]$, we first construct $\I_i'' \subseteq I_i$ satisfying
     Eq.~(\ref{eq:kclaimtildeI}). 
  For a fixed $i$  and each $j \neq i$, let $X^i_j \subseteq T_f$ be a set of minimum size that ensures
    $\cl_{j}(X^i_j \cup \Phi_j(T_f))   = \cl_j(T_f)$.     Here $|X^i_j|  = |T_f| - |\Phi_j(T_f)|$ because $\M_j$ is a matroid.
    Since    $I_i \cup \Phi_j(T_f)   \in \M_j$ and $I_i \cap \Phi_j(T_f) = \emptyset$, by 
    eliminating at most $|X^i_j|$ elements from $\I_i$, we can create $\I_i' \subseteq \I_i$ 
    that satisfies 
    $X^i_j \cup \I_i' \cup \Phi_j(T_f) \in \M_j$.  
    Now using using Fact~\ref{swapbases} and  $\cl_j(X^i_j \cup \Phi_j(T_f) ) = \cl_j(T_f)$, we  conclude  $\I_i' \in \M_j/T_f$.   Hence, we can  define 
    $\I_i'' := \I_i \setminus (\bigcup_{j\neq i} X^i_j)$ where $\I_i'' \in \M_j /T_f$ for all $j\neq i$.
    Observe, 
    \[ |\bigcup_{j\neq i} X_j^i | \leq \sum_{j\neq i}|X^i_j| = \sum_{j\neq i} \big( |T_f| - |\Phi_j(T_f)|\big) \]
    Combining this with  $|I_i| \geq |\Phi_i(T_f)| - \sum_{j\neq i} |\Phi_j(T_f) \cap  \Phi_i(T_f)|$ (a fact from the definition of $\I_i$), we get 
    \begin{align*}
     |\I_i''| = |\I_i| - |\bigcup_{j\neq i} X_j^i| &\geq |\Phi_i(T_f)| - (k-1) |T_f|+  \sum_{j\neq i} (|\Phi_j(T_f)| -  |\Phi_j(T_f) \cap  \Phi_i(T_f)|) \\
     &= \sum_j |\Phi_j(T_f)| - (k-1)|T_f| - \sum_{j\neq i} |\Phi_j(T_f) \cap  \Phi_i(T_f)|.
     \end{align*}
    Note that for all $i$ sets $I_i''$ are pairwise disjoint (as $I_i$ are constructed to be disjoint). Furthermore,
    taking expectations over $\pi$, we let $\tI_i$ to be the elements of $\I''$ that are
    still to appear in Phase~(b). Hence, $\E_{\pi}[ \sum_{i=1}^k |\tI_i|] \geq \E_{\pi}[ \sum_{i=1}^k |I_i''|] - f|\OPT|$. 
    Using the bounds from Hastiness Lemma~\ref{lemma:KMMgenmatr}, we conclude
    \begin{align*} 
        &~~\E_{\pi} \left[ \sum_{i=1}^k |\tI_i| \right] \\
        &\geq \sum_{i=1}^k \Big( \sum_j [|\Phi_j(T_f)|] - 
                     (k-1)\, \E_{\pi}[|T_f|] -  \sum_{j\neq i}  \E_{\pi}[ |\Phi_i(T_f) \cap \Phi_j(T_f)|]\Big) - f|\OPT|  \\
       &\geq k \left(1- \frac{4\epsilon}{kf}  + 4\epsilon-  (k-1)\,\G(1)  - k \cdot \frac{2\epsilon}{k^2} \right) |\OPT|  - f |\OPT| \\
       & \geq 	\left( k+ 4 \epsilon ( k-  \frac{1}{ f}  - \frac{1}{2} )  - k(k-1) (\frac1k + \frac{\epsilon}{k^3}) - f   \right) |\OPT| \\
       &  = \left( 1+ 4 \epsilon ( k-  \frac{1}{ f}  - \frac{1}{2} ) - \frac{\epsilon}{k} + \frac{\epsilon}{k^2}  - f   \right) |\OPT|.  
    \end{align*}
\end{proof}

Finally, to prove Lemma~\ref{lem:kmatrmainthm}, we use the disjoint sets $\tI_i$ from Claim~\ref{claim:kpropSetI} in the Sampling Lemma~\ref{lem:k-matrmain} to say that the expected output size is at least 
\[ (1-p) \G(f)  + \frac{ p}{1+p(k-1)} \left(1 + 4\epsilon  (k - \frac{1}{f}- \frac{1}{2} )-  \frac{\epsilon}{k} + \frac{\epsilon}{k^2} - f \right) |\OPT|  
\]

We assume $k\geq 3$ (as the case $k=2$ was presented in Section~\ref{section:matroids}). By using Lemma~\ref{lemma:KMMgenmatr}, setting $f= \frac{1}{k}$, and choosing
$\epsilon \ll 1$, we can conclude that the expected value is at least
\[ (1-p) \left( \frac{1}{k} - \frac{4\epsilon  - 4 \epsilon}{k} \right)+ \frac{p}{1+p(k-1)} \Big( \frac{k-1}{k} - 3\epsilon \Big). \]
This value is at least $\frac{1}{k} + \frac{\gamma}{k}$ for some universal
constant $\gamma > 0$ (e.g., when $p=0.2$ and $\epsilon = 10^{-10}$). 

\else
\fi

\end{document}